\newcommand{\N}{\mathbb{N}}
\newcommand{\R}{{\mathbb{R}}}
\newcommand{\C}{{\mathbb{C}}}
\newcommand{\Z}{{\mathbb{Z}}}
\newcommand{\dd}{{{\rm d}}}
\newcommand{\ii}{{\rm i}}
\renewcommand{\P}{{\mathcal{P}}}
\newcommand{\T}{{\mathcal{T}}}
\newcommand{\PT}{{\mathcal{PT}}}
\newcommand{\cC}{{\mathcal{C}}}
\newcommand{\cS}{{\mathcal{S}}}
\renewcommand{\H}{{\mathcal{H}}}
\newcommand{\X}{{\mathcal{Y}}}
\newcommand{\HsQ}{{\H^s_Q}}
\newcommand{\spd}{\sigma_{\rm disc}}
\newcommand{\spe}{\sigma_{\rm ess}}
\newcommand{\s}{\varkappa}
\newcommand{\Dom}{{\operatorname{Dom}}}
\newcommand{\Ker}{{\operatorname{Ker}}}
\newcommand{\Ran}{{\operatorname{Ran}}}
\newcommand{\lspan}{{\operatorname{span}}}
\renewcommand{\Re}{\operatorname{Re}}
\renewcommand{\Im}{\operatorname{Im}}
\newcommand{\cf}{\emph{cf.}}
\newcommand{\ie}{{\emph{i.e.}}}
\newcommand{\eg}{{\emph{e.g.}}}
\newcommand{\dist}{\mathrm{dist}}
\newcommand{\eps}{\varepsilon}
\newcommand{\chiso}{\chi(\sigma_1)}
\newcommand{\chist}{\chi(\sigma_2)}
\newcommand{\ov}{\overline}
\begin{document}

\theoremstyle{plain}
\newtheorem{define}{Definition}
\newtheorem{theorem}{Theorem}[section]
\newtheorem{lemma}[theorem]{Lemma}
\newtheorem{criterion}[theorem]{Criterion}
\newtheorem{proposition}[theorem]{Proposition}
\newtheorem{corollary}[theorem]{Corollary}
\renewcommand{\proofname}{Proof}

\theoremstyle{definition}
\newtheorem{example}[theorem]{Example}
\newtheorem{remark}[theorem]{Remark}
\newtheorem{ass}{Assumption}
\renewcommand\theass{{\upshape{(\Roman{ass})}}}
%


\title[Bifurcation with antilinear symmetry]{Bifurcation of nonlinear eigenvalues in problems with antilinear symmetry}

\author{Tom\'a\v{s} Dohnal}
\address[Tom\'a\v{s} Dohnal]{
Fachbereich Mathematik, Technical University Dortmund,
Vogelpothsweg 87, 
44221 Dortmund, Germany}
\email{tomas.dohnal@math.tu-dortmund.de}

\author{Petr Siegl}
\address[Petr Siegl]{Mathematisches Institut, Universit\"at Bern, Sidlerstrasse 5, 3012 Bern, Switzerland \& On leave from Nuclear Physics Institute ASCR, 25068 \v Re\v z, Czech Republic}
\email{petr.siegl@math.unibe.ch}

\subjclass[2010]{47J10, 35P30, 81Q12}

\keywords{bifurcation, nonlinear eigenvalue, non-selfadjoint operator, antilinear symmetry, $\PT$-symmetry}

\date{April 29, 2015}

\begin{abstract}
Many physical systems can be described by nonlinear eigenvalues and bifurcation problems with a linear part that is non-selfadjoint \eg~due to the presence of loss and gain. The balance of these effects is reflected in an antilinear symmetry, like \eg~the $\PT$-symmetry, of the problem. Under this condition we show that the nonlinear eigenvalues bifurcating from real linear eigenvalues remain real and the corresponding  nonlinear eigenfunctions remain symmetric. The abstract results are applied in a number of physical models of Bose-Einstein condensation, nonlinear optics and superconductivity, and further numerical analysis is performed.
\end{abstract}

\thanks{
P.S. thanks G. Wunner and H. Cartarius for drawing his attention to this topic. The research of P.S. is supported by the \emph{Swiss National Foundation}, SNF Ambizione grant No. PZ00P2\_154786. The research of T.D. is partly supported by the \emph{German Research Foundation}, DFG grant No.  DO1467/3-1.
}

\maketitle

\section{Introduction}\label{S:intro}

We consider the nonlinear eigenvalue problem 
\begin{equation}\label{nl.ev}
A \psi - \varepsilon f(\psi) = \mu \psi, 
\end{equation}
and analyze the bifurcation in $\eps$ from a simple eigenvalue $\mu_0$ at $\eps =0$ in a suitable Hilbert space for a rather general class of densely defined, closed (possibly non-selfadjoint) operators $A$ and locally Lipschitz continuous nonlinearities $f$, \cf~Assumption \ref{ass:Af} below for details. For a homogeneous nonlinearity $f$, we also consider the additional condition $\|\psi\|=1$. The main contribution of our paper is to the problem of bifurcation from real eigenvalues under an antilinear symmetry of $A$ and $f$. We show that under this condition the nonlinear eigenvalue remains real and the eigenfunction remains symmetric. This confirms a number of existing numerical computations of specific examples of such a bifurcation problem with an antilinear symmetry, see the references below. Besides presenting the abstract bifurcation results, we explain in detail how these
apply to physically relevant examples by checking the assumptions and giving concrete choices of the working space.

The question of real nonlinear eigenvalues in non-selfadjoint problems with symmetries has gained on physical relevance in the recent years due to the intensive research on nonlinear systems under the parity and time-reversal ($\PT$) symmetry mainly in Bose-Einstein condensates (BECs) \cite{Klaiman-2008-101,Kartashov-2014-107}, nonlinear optics \cite{Guo-2009-103}, see also \cite{Ruter-2010-6} for an experimental breakthrough, or superconductivity \cite{Rubinstein-2007-99}. 
In these specific physical problems, the presence of real nonlinear eigenvalues typically means the existence of stationary solutions of the form $e^{- \ii \mu t} \psi(x)$ with $\mu \in \R$ also if the system is subject to balanced gain and loss (modeled by a non-selfadjoint linear part).

The interest in antilinear symmetries was initiated by an observation in \cite{Bender-1998-80} where Schr\"odinger operators with $\PT$-symmetric complex potentials in the context of quantum-mechanics-like linear problems were numerically shown to have real eigenvalues in a certain parameter region. 

In the context of BECs, where the (nonlinear) Gross-Pitaevskii equation models the dynamics of the condensate, a complex potential describes the injection and removal of particles and a balance of these two processes is reflected in the $\PT$-symmetry of the system. Numerical results on the bifurcation of nonlinear eigenvalues in particular in one dimensional models can be found \eg~in \cite{Cartarius-2012-45,Dast-2013-46,Fortanier-2014-89}. 

In optics under the paraxial approximation, the system can be modeled by the nonlinear Schr\"odinger equation (NLS) with a potential corresponding to the refractive index, which is complex if the amplification and damping of the light wave are present, a balance is again reflected in the $\PT$-symmetry. Numerical and formal results on the NLS for the bifurcation from linear eigenvalues under $\PT$-symmetry include one dimensional \cite{Musslimani-2008-100,Wang-2011-19,Zezyulin-2012-85} or two dimensional \cite{Yang-2014-39}.

Superconducting wires driven with electric currents represent another example of a physical application of a $\PT$-symmetric nonlinear eigenvalue problem, \cf~\cite{Rubinstein-2007-99,Rubinstein-2010-195}.  
The non-selfadjointness appears due to the dependence of the electric potential on the external current.

As explained in detail in Section \ref{sec:appl}, our results cover all of the above physical models as they are particular cases of \eqref{nl.ev} with a linear operator $A$, a nonlinearity $f$ and an antilinear symmetry $\cC$ compliant with the assumptions of our analysis.

Our approach to the bifurcation problem \eqref{nl.ev} is based on the Lyapunov-Schmidt reduction and a fixed point iteration. We decompose the Hilbert space to the one dimensional $\ker(A-\mu_0)$ and its complement using the spectral projection corresponding to the eigenvalue $\mu_0$ and for $\eps>0$ we seek solutions $(\mu,\psi)$ near $(\mu_0,\psi_0)$, where $\psi_0$ is the eigenfunction of $A$ corresponding to $\mu_0$. On the complement of $\ker(A-\mu_0)$ the operator $A-\mu_0$ is invertible and a fixed point iteration can be used to obtain a small correction of the eigenfunction for $\eps$ small enough.  The scalar equation on $\ker(A-\mu_0)$ is solved likewise by a fixed point iteration and it produces a small correction of $\mu_0$. In this way we obtain an expansion of $\mu$ and $\psi$ up to second order in $\eps$.

The problem of bifurcation of nonlinear eigenvalues is of course classical and has been solved, \eg~in \cite{Crandall-1971-8} for simple eigenvalues in real Banach spaces and in \cite{Ize-1976-7} for possibly complex Banach spaces (as relevant in our problem) and for eigenvalues of odd algebraic multiplicity or for geometrically simple eigenvalues, see \cite[Thm.I.3.2]{Ize-1976-7}. We choose to prove our results in a Hilbert space 
and independently of \cite{Ize-1976-7} in order to provide an explicit expansion of $\mu$ and $\psi$ and because the Lyapunov-Schmidt reduction and the fixed point equations are used also in the second part on the preservation of the realness of $\mu$ under an antilinear symmetry condition. Moreover, we avoid the technical condition $\|f(\psi)\|=O(\|\psi\|^2)$ as $\psi\to 0$ of \cite{Ize-1976-7}.

For the problem under the assumption of an antilinear symmetry $\cC$ of $A$ and $f$, \cf~Assumption \ref{ass:A.C}, we check that the fixed point iteration preserves the symmetry of the iterates for $\psi$ and the realness of the iterates for $\mu$. As a result, if $\mu_0\in \R$ and if $\psi_0$ has the antilinear symmetry, then the nonlinear eigenpair $(\mu,\psi)$ satisfies these conditions for $\eps$ small enough too.

To our knowledge the only existing mathematically rigorous papers on similar bifurcation problems under an antilinear symmetry are \cite{Kevrekidis-2013-12} and \cite{Rubinstein-2010-195}. In \cite{Kevrekidis-2013-12} the concrete example of the discrete NLS with the $\PT$-symmetry is considered. The proof is based on the Lyapunov-Schmidt reduction and the implicit function theorem. In \cite{Rubinstein-2010-195} a one dimensional $\PT$-symmetric nonlinear parabolic problem for superconducting wires is studied using the center manifold analysis. As a special case, stationary localized solutions are found.

The structure of the paper is as follows. Section \ref{sec:main.ass} presents the assumptions on the operator $A$ and the nonlinearity $f$ in \eqref{nl.ev} needed for the general bifurcation problem and provides a number of examples of $A$ and $f$ satisfying these conditions, whereby we concentrate mainly on Schr\"odinger operators $A$ but discuss also a first order Dirac type operator. Section \ref{sec:main.ass}  also explains the choice of our function space in which the fixed point iteration is carried out. In Section \ref{sec:NL_ev} we prove the bifurcation result and the expansion of the eigenvalue and the eigenfunction. For homogeneous nonlinearities, we rescale $\eps$ and $\psi$ such that a solution with $\|\psi\|=1$ is found. The problem under symmetry assumptions is discussed in Section \ref{sec:sym}. Both antilinear and linear symmetries are discussed, where the former one is shown to lead to the preservation of the realness of $\mu$. Section \ref{sec:appl} explains applications of our results to concrete physical problems from literature. Finally, in Section \ref{sec:num} we present numerical computations of nonlinear eigenvalues of \eqref{nl.ev} with $A=-\Delta +V$ and $f(\psi)=|\psi|^2\psi$ in $L^2(\R^2)$ and with $\PT$-symmetric as well as partially $\PT$-symmetric potentials $V$. The effects of a linear symmetry are also observed.

%
%
\section{Basic assumption and examples of operators and nonlinearities} 
\label{sec:main.ass}

The following basic assumption comprises a condition on a compatibility of the linear part $A$ with the nonlinearity $f$ and a spectral condition on $A$.

\begin{ass}\label{ass:Af} 
	\noindent
	Let $A$ be a densely defined, closed operator with a non-empty resolvent set in a Hilbert space $(\H, \langle \cdot, \cdot\rangle)$ with the induced norm $\|\cdot\|$, let $f$ be a mapping in $\H$ and let $(\X, \|\cdot\|_{\X})$ be a Banach space. Suppose that the following conditions are satisfied:
	\begin{enumerate}[(a)]
		%
		\item\label{ass:Af.norm}
		$\X$ is a subspace of $\H$, for some $n\in \N$ is $\Dom(A^n) \subset \X \subset \Dom(A^{n-1})$, and there are $k_1 ,k_2>0$ such that, for all $\phi \in \Dom(A^n)$,
		\begin{equation}\label{ass.f.1.gr}
		\|\phi\|_{n-1}:= \sum_{k=0}^{n-1}\| A^{k} \phi \|  \leq   k_1 \|\phi\|_\X 
		\leq k_2 
		\sum_{k=0}^{n} \|A^k \phi \| =: k_2 \|\phi\|_n,
		\end{equation}
		\item\label{ass:Af.mu}
		$\mu_0 \in \C$ is an isolated simple (\ie~with the algebraic multiplicity one) eigenvalue of $A$. Moreover, suppose that the normalizations of $\psi_0 \in \Dom(A)$, $\psi_0^* \in \Dom(A^*)$, 
		\begin{equation}\label{psi0.psi0*.def}
		A \psi_0 = \mu_0 \psi_0, \qquad A^* \psi_0^* = \ov{\mu_0} \psi_0^*,
		\end{equation}
		\ie~of the eigenvectors of $A$, $A^*$ corresponding to $\mu_0$, $\ov{\mu_0}$, respectively, are chosen such that
		\begin{equation}\label{psi0.norm}
		\|\psi_0\| =1, \qquad \langle \psi_0, \psi_0^*  \rangle =1,
		\end{equation}
		\item\label{ass:Af.Lip} 
		the mapping $f : \X \to \Dom(A^{n-1})$ is Lipschitz in a neighborhood of the eigenvector $\psi_0$, more precisely: there exist $r_L >0$ and $L >0$ such that, for all $\phi,\psi \in \{ \eta \in \X \, : \, \|\eta - \psi_0 \|_\X < r_L\}$,
		\begin{equation}\label{E:Lip}
		\| f(\phi) - f(\psi) \|_{n-1} \leq k_1 L  \|\phi - \psi\|_\X.
		\end{equation}
	\end{enumerate}
\end{ass}

\begin{remark}[Remarks on Assumption \ref{ass:Af}]
The space $\X$ is our working space in which we perform fixed point iterations. A natural choice for $\X$ is $(\Dom(A),\|\cdot\|_1)$, \ie~the domain of $A$ equipped with its graph norm. Nonetheless, it may be convenient to work also with a different $\X$, \eg~with the form-domain and the norm induced by the quadratic form of $A$ since these can be much better accessible than $(\Dom(A),\|\cdot\|_1)$ itself, \cf~Section \ref{subsec:Schr.op} for examples. Obviously, if the Lipschitz continuity \eqref{E:Lip} is established with $\|\cdot\|_{\X}$, it holds also with $\|\cdot\|_n$. 
A motivation for considering $n>1$ is given in Examples \ref{ex.A.Schr.n} and \ref{ex.f.pol}, see also Remark \ref{R:n}. The condition $\rho(A) \neq \emptyset$ guarantees that also $\Dom (A^n)$, $n>1$, is dense in $\H$, therefore also $\X$ is dense in $\H$. 

Recall that if $\mu_0$ is a simple isolated eigenvalue of $A$, then $\ov{\mu_0}$ is a simple isolated eigenvalue of $A^*$, \cf~\cite[Chap.III.6.5-6]{Kato-1966}; moreover, it can be easily verified that the normalization \eqref{psi0.norm} can be achieved. In detail, $\langle \psi_0,\psi_0^* \rangle =0$ implies that $\psi_0 \in \Ker(A-\mu_0) \cap \Ran(A-\mu_0)$ since $\Ker(A^*-\mu_0)^\perp = \Ran(A-\mu_0)$. The eigenvalue $\mu_0$ is simple, so $\Ker(A-\mu_0)^2 = \Ker(A-\mu_0)$ in particular. From $(A-\mu_0)\phi=\psi_0$ for some $\phi\in\Dom(A)$ we get $(A-\mu_0)^2\phi=0$, thus $\phi = c\psi_0$ and hence $\psi_0=0$, which is a contradiction. 

The spectral (Riesz) projection $P_0$ on $\Ker(A-\mu_0)$, defined as a contour integral for sufficiently small $\delta>0$, \cf~\cite[Chap.III.6]{Kato-1966}, and the complementary projection $Q_0:=I - P_0$ can be written, using \eqref{psi0.norm}, as 
\begin{equation}\label{P0.def}
P_0 := - \frac 1 {2\pi \ii} \int_{\partial B_\delta(\mu_0)} (A-z)^{-1} \dd z = 
\langle \cdot, \psi_0^* \rangle \psi_0,
\quad Q_0 = I - \langle \cdot, \psi_0^* \rangle \psi_0.
\end{equation}
\end{remark}

We analyze several groups of operators and nonlinearities below and show that they satisfy Assumption \ref{ass:Af}. The selection is inspired by various physical models from literature, \cf~Section \ref{sec:appl}, where we apply our results to problems possessing typically additional symmetries, \cf~Section \ref{sec:sym}.


\subsection{Schr\"odinger operators and the space $\X$ in Ass.~\ref{ass:Af}.(\ref{ass:Af.norm})}
\label{subsec:Schr.op}

Schr\"odinger operators are naturally associated with the following spaces 
\begin{equation}\label{H.X.Schr}
\begin{aligned}
(\X,\|\cdot\|_\X) & = (\H^s_Q(\Omega), \| \cdot \|_\HsQ) 
:= ((H^s(\Omega) \cap \Dom(Q)), \|\cdot\|_{H^s} + \|Q\cdot\|),
\end{aligned}
\end{equation}	
where $\Omega$ is a domain in $\R^d$, $s >0 $, $Q \in L^2_{\rm loc}(\Omega)$ and $\Dom(Q):= \{\psi \in L^2(\Omega) \, : \, Q \psi \in L^2(\Omega) \}$. It is not difficult to verify that this $\X$ is a Banach space.

\begin{example}[Schr\"odinger operators with complex potentials, $n=1$]\label{ex.A.Schr}
Let $\H = L^2(\R^d)$, $V_1 \in W^{1,\infty}_{\rm loc}(\R^d)$ and  $V_2 \in L^2_{\rm loc}(\R^d)$. 
Let $V_1$ and $V_2$ satisfy further
\begin{enumerate}[(i)]
\item $\Re V_1 \geq 0$ and $|\nabla V_1| \leq C_1 |V_1| + C_2$ with some $C_1, C_2 >0$,
\item there exist $\alpha \in [0,1),$ $\beta \geq 0$ such that, for all $\psi \in \H^2_{V_1}(\R^d)$, 
\end{enumerate}
\begin{equation}\label{W.rb}
\|V_2 \psi \| \leq \alpha (\|\Delta \psi\| + \|V_1 \psi\|  ) + \beta \|\psi\|.
\end{equation}

Then the operator 
\begin{equation}
A := -\Delta + V_1 + V_2,
\qquad 
\Dom(A) := \H^2_{V_1}(\R^d)
\end{equation}
and the space $\X = \H^2_{V_1}(\R^d)$ satisfy Assumption \ref{ass:Af}.\eqref{ass:Af.norm} with $n=1$.
The main step needed to justify \eqref{ass.f.1.gr} is the fact that, for all $\psi \in \Dom(A)$,
\begin{equation}\label{ex.A.Schr.gr}
c_1 \| \psi \|_{\H^2_{V_1}}^2 \leq  \| A \psi \|^2 + \|\psi\|^2 \leq c_2 \| \psi \|_{\H^2_{V_1}}^2,
\end{equation}
where $c_1$, $c_2>0$ are independent of $\psi$, \cf~for instance \cite{Boegli-2014} for details; the standard proof for $V_1(x)=x^2$ and $d=1$ can be found \eg~in \cite[Ex.7.2.4]{BEH}. 
\end{example}

\begin{example}[Schr\"odinger operators with singular potentials, $n=1$]\label{ex.A.sing}
Let $\H = L^2((-r,r))$ with $r \in (0, + \infty]$, $V_1 \in L^1_{\rm loc}((-r,r))$ and $v_2$ be a sesquilinear form. 
%
Let $V_1$ and $v_2$ further satisfy
\begin{enumerate}[(i)]
\item $\Re V_1 > 0 $ and $|\Im V_1| < \tan \theta \Re V_1$ with $\theta \in [0, \pi/2)$,
\item $\H^1_{\sqrt{\Re V_1}}((-r,r)) \subset \Dom(v_2)$ and there exist $\alpha \in [0,1),$ $\beta \geq 0$ such that, for all $\psi \in \H^1_{\sqrt{\Re V_1}}((-r,r))$,
\begin{equation}\label{w.rb}
|v_2[\psi]| \leq \alpha (\| \psi' \|^2 + \|\sqrt{\Re V_1}\psi\|^2  ) + \beta \|\psi\|^2.
\end{equation}
\end{enumerate}	

Then the m-sectorial operator $A$ associated (via the first representation theorem \cite[Thm.VI.2.1]{Kato-1966}) with the closed sectorial form  
\begin{equation}\label{a.form.H1}
a[\psi] := \| \psi'\|^2 +\int_{-r}^r V_1(x)|\psi(x)|^2 \, \dd x + v_2[\psi], 
\qquad 
\Dom(a) := \H^1_{\sqrt{\Re V_1}}((-r,r))
\end{equation}
and the space $\X = \H^1_{\sqrt{\Re V_1}}((-r,r))$
satisfy Assumption \ref{ass:Af}.\eqref{ass:Af.norm} with $n=1$. 
To show \eqref{ass.f.1.gr}, recall that, for all $\psi \in \Dom(A) \subset \Dom(a)$, 
 $ |a[\psi]| = |\langle A \psi, \psi \rangle | \leq (\|A\psi\|^2 + \|\psi\|^2)/2$ and the norms $\| \cdot \|_{\H^1_{\sqrt{\Re V_1}}}$ and $\sqrt{|a[\,\cdot\,]| + c \|\cdot\|^2}$ with a sufficiently large $c \geq 0 $ are equivalent, \cf~\cite[Chap.VI.]{Kato-1966}.
 
The domain of $a$ and the space $\X$ can be selected also, for instance, as 
\begin{equation}
\H^{1,\rm D}_{\sqrt{\Re V_1}}:=H^1_0((-r,r)) \cap \Dom(\sqrt{\Re V_1})
\end{equation}
\ie~Dirichlet boundary conditions are imposed at $\pm r$, and the analogues of all claims above remain true.
\end{example}

\begin{example}[Schr\"odinger operators with bounded and regular potentials, $n>1$]\label{ex.A.Schr.n}

We present an example of Schr\"odinger operators satisfying Assumption \ref{ass:Af}.\eqref{ass:Af.norm} for $n>1$. The motivation for $n>1$ comes from the condition $s>d/2$ for $H^s(\R^d)$ needed for polynomial nonlinearities in Example \ref{ex.f.pol} below, guaranteeing that the polynomial nonlinearities satisfy Assumption \ref{ass:Af}.\eqref{ass:Af.Lip}. 

Let $\H = L^2(\R^d)$ and let $V \in W^{2(m-1),\infty}(\R^d)$ for some $m\in \N$ be a possibly complex potential. Then the operator
\begin{equation} \label{E:A.Schr.n}
A := - \Delta + V, \quad \Dom(A) := H^2(\R^d),
\end{equation}
and the space and $\X = \H_0^{2n}(\R^d)= H^{2n}(\R^d)$ satisfy Assumption \ref{ass:Af}.\eqref{ass:Af.norm} with any $n \leq m$. 
The claim holds since $\Dom(A^n) = H^{2n}(\R^d)$ and, for all $\psi \in H^{2n}(\R^d)$,
\begin{equation}
c_1 \| \psi \|_{H^{2n}}^2 \leq  \| A^n \psi \|^2 + \|\psi\|^2 \leq c_2 \| \psi \|_{H^{2n}}^2,
\end{equation}
where $c_1$, $c_2>0$.  The second inequality above follows from  $V \in W^{2(m-1),\infty}(\R^d)$ since $A^n\psi$ consists of terms $\pm \Delta^{i_1}V^{j_1}\Delta^{i_2}V^{j_2}\dots \Delta^{i_n}V^{j_n}\psi$ with 
$i_k,j_k\in \{0,1\}$ for all $k=1,\dots,n$ and such that the highest derivative acting on $V$ is of order $2(n-1)$. Each term can be thus estimated by $c\|\psi\|_{H^{2n}}$. Remaining estimates follow from the equivalence of norms on $H^{2n}(\R^d)$, see \eg~\cite[Lem.3.7.2]{Davies-1995}.
	
\end{example}

\begin{example}[Schr\"odinger operators with quasi-periodic boundary conditions]\label{ex.A.Schr.qp}
All examples \ref{ex.A.Schr}, \ref{ex.A.sing}, and \ref{ex.A.Schr.n} can be combined with quasi-periodic boundary conditions on a bounded domain $\Omega \subset \R^d$. Taking, for simplicity $\Omega = (-r,r)^d, r\in (0,\infty)$, the quasi-periodicity vectors are $k\in (-\pi,\pi]^d$ and we can choose  
\begin{equation}
\begin{aligned}
\X= \H_{V_1}^{2,k}(\Omega) := & \left\{\psi \in \H^2_{V_1}(\Omega): \psi(x+2re_j) = e^{\ii k_j} \psi(x), \nabla \psi(x+2re_j) = e^{\ii k_j} \nabla \psi(x)\right.\\
& \quad \left.  \text{ for all } j=1,\dots,d \text{ and all } x\in \partial \Omega \text{ s.t. }x+2re_j\in \partial \Omega\right\}\\
\end{aligned}
\end{equation}
in Example \ref{ex.A.Schr},
\begin{equation}
\X=\H^{1,k}_{\sqrt{\Re V_1}}((-r,r)):=\{\psi \in \H^1_{\sqrt{\Re V_1}}((-r,r)): \psi(r) = e^{\ii k} \psi(-r) \} 
\end{equation}
with $k \in (-\pi,\pi]$ in Example \ref{ex.A.sing} $(d=1)$, and
\begin{equation}
\begin{aligned}
\X=&\H_0^{2n,k}(\Omega):=  \left\{\psi\in H^{2n}(\Omega): D^m\psi(x+2re_j) = e^{\ii k_j} D^m\psi(x) \text{ for all } j=1,\dots,d, \right.\\
& \quad \left. \text{all }m \in \N^d \text{ s.t. } 0\leq |m|\leq 2n-1, \text{ and for all } x\in \partial \Omega \text{ s.t. }x+2re_j\in \partial \Omega\right\}\\
\end{aligned}
\end{equation}
in Example \ref{ex.A.Schr.n}. Here $e_j$ is the $j$-th Euclidean vector and $D^m=\partial_{x_1}^{m_1}\dots \partial_{x_d}^{m_d}$.

Assumption \ref{ass:Af}.\eqref{ass:Af.norm} is still satisfied with these choices.
\end{example}

\begin{example}[Discrete Schr\"odinger operator]\label{ex.A.DSchr}

The difference operator on $\H=\Dom(A)=\C^{2N}$ ($N\in \N$) defined by
\begin{equation}\label{A:E.DS}
(A\psi)_n := \begin{cases}\psi_{n+1}+\psi_{n-1} + \ii \gamma(-1)^n\psi_n \quad & \text{for } 2\leq n \leq 2N-1,\\
\psi_{n+1}- \ii \gamma \psi_n & \text{for } n=1,\\
-\psi_{n-1} + \ii \gamma \psi_n & \text{for } n=2N,
\end{cases}
\end{equation}
appears in the modeling of one dimensional optical lattices \cite{Kevrekidis-2013-12}. Choosing $\X = \H$, assumption \ref{ass:Af}.\eqref{ass:Af.norm} is obviously satisfied in this finite dimensional case (\eg~with the Euclidean norm).

\end{example}

\begin{example}[First order Dirac type operator]\label{ex.A.CME}

An example of a physically interesting operator other than a Schr\"odinger one is 
\begin{equation}\label{A:E.Dirac}
A:= \begin{pmatrix}-\ii \partial_x -V(x) & -\kappa(x)\\
-\kappa(x) & \ii \partial_x -V(x)
\end{pmatrix}
\end{equation}
with $V,\kappa \in L^\infty(\R)$. This operator occurs in a model for optical waves in fibers with a Bragg grating and a localized defect \cite{Goodman-2002-19}.
Assumption \ref{ass:Af}.\eqref{ass:Af.norm}  is satisfied under the natural choice $\H=L^2(\R)\times L^2(\R), \X=\Dom(A)=H^1(\R)\times H^1(\R)$, where $\|\psi\|_\X := \|\psi_1\|_{H^1}+\|\psi_2\|_{H^1}$. 
\end{example}

\subsection{The spectral condition (Ass. \ref{ass:Af}.(\ref{ass:Af.mu}))}
\label{subsec:spec.cond}

To satisfy Assumption \ref{ass:Af}.\eqref{ass:Af.mu}, a detailed spectral analysis of a given linear operator $A$ must be performed. Here we recall some perturbation results on isolated eigenvalues that can be often used to justify the presence of simple isolated eigenvalues for more complicated, typically non-self-adjoint, differential operators with complex coefficients. In Section \ref{sec:sym}, Remark \ref{rem:sim+real}, we further explain the stability of realness of simple eigenvalues if $A$ possesses a certain symmetry. Finally, we recall here basic spectral properties of Schr\"odinger operators, particularly the results on the essential spectrum and the simplicity of the ground state. 

\subsubsection{Holomorphic families of operators}
\label{subsubsec:hol.fam}
Standard results on the spectrum of a holomorphic family of operators $A(\gamma)$, $\gamma \in \C$, \cf~\cite[Chap.VII]{Kato-1966}, yield that if the spectrum of $A(0)$ is separated into two parts, then this remains true also for $A(\gamma)$ with $|\gamma|$ sufficiently small, \cf~\cite[Thm.VII.1.7]{Kato-1966}. Moreover, isolated eigenvalues depend analytically on $\gamma$ and their multiplicities are preserved, \cf~\cite[Thm.VII.1.8]{Kato-1966}. 
Criteria for the holomorphicity of an operator family can be found in \cite[Chap.VII]{Kato-1966}. A sufficient condition for the operators or quadratic forms (and hence for the operators associated with $a(\gamma)$ by the first representation theorem \cite[Thm.VI.2.1]{Kato-1966}) of the type 
\begin{equation}
A(\gamma) = A_0 + \gamma B, 
\quad
a(\gamma) = a_0 + \gamma b, \quad \gamma \in \C,
\end{equation}
where $A_0$ is a densely defined closable operator and $a_0$ is a densely defined closable sectorial form, is the relative boundedness of $B$, $b$ with respect to $A_0$, $\Re a_0$, respectively, \ie~
\begin{equation}
\begin{aligned}
A(\gamma): \ &\Dom(A_0) \subset \Dom(B),  &\|B \psi \| &\leq \alpha \|A_0\psi\| + \beta \|\psi\|, \ \psi \in \Dom(A_0),
\\
a(\gamma): \ &\Dom(a_0) \subset \Dom(b),  &|b[\psi]|^2  &\leq \alpha \Re a_0[\psi] + \beta \|\psi\|^2, \ \psi \in \Dom(a_0),
\end{aligned}
\end{equation}
with some $\alpha, \beta \geq 0$, \cf~\cite[Thm.VII.2.6, Thm.VII.4.8]{Kato-1966}.

\subsubsection{Spectra of Schr\"odinger operators}
\label{subsubsec:Schr.sp}

We consider Schr\"odinger operators in the setting of Examples \ref{ex.A.Schr}--\ref{ex.A.Schr.n}; many of the following spectral properties are valid in a much greater generality, \cf~\cite{EE} for instance.

Let $A$ be the operator from Example \ref{ex.A.Schr} and let $\lim_{|x| \to \infty}|V_1(x)| = \infty$.  Then the resolvent of $A$ is compact, hence $\sigma(A) = \spd(A)$. Moreover, the selfadjoint operator $A_0:=-\Delta + Q$, with $Q \in L^2_{\rm loc}(\R^d)$, $Q \geq 0$ and $\lim_{|x| \to \infty} Q(x) =+\infty$, has the simple ground state, \ie~the lowest eigenvalue, \cf~\cite[Thm.XIII.47]{Reed4}. 
For $d=1$, a Wronskian argument can be used to conclude the simplicity of all eigenvalues for single-well potentials like $A_0=-\partial_x^2 + |x|^\beta$, $\beta>0$. 
For the singular Schr\"odinger operators from Example \ref{ex.A.sing}, the resolvent is compact if $r<\infty$ (also if Dirichlet or quasi-periodic boundary conditions are considered) or if $r=\infty$ and  $\lim_{|x| \to \infty}|V_1(x)| = \infty$. The simplicity of the ground state in the selfadjoint case can be in some situations concluded from \cite[Thm.XIII.48]{Reed4}.

Let $A$ be the Schr\"odinger operator from Example \ref{ex.A.Schr} with $V=V_1+V_2 \in L^{\infty}(\R^d)$ and $\lim_{|x| \to  \infty}(V_1 + V_2)(x) = 0$, then $\spe(A)=[0,+\infty)$, \cf~\cite[Cor.X.4.2, Ex.X.4.3]{EE} and \eg~\cite[Ex.XIII.4.6]{Reed4}. (Note that there are several different definitions of essential spectrum for non-selfadjoint operators, \cf~\cite[Chap.IX]{EE}, nevertheless, all coincide for this special situation.) Discrete eigenvalues may appear outside essential spectrum. Particularly in $d=1,2$, the selfadjoint operator $A_0:=-\Delta + \varepsilon \, Q$ with $\int Q \, \dd x<0$ and $Q$ decaying sufficiently fast, \cf~\cite{Simon-1976-97,Klaus-1977-108} for precise assumptions on $Q$, possesses a unique negative simple eigenvalue for all sufficiently small $\varepsilon>0$;  some non-self-adjoint extensions can be found in \cite{Novak-2013-DP,Novak-2014}. 
For the singular Schr\"odinger operators from Example \ref{ex.A.sing} with $r=\infty$,  $\spe(A)=[0,+\infty)$ if $\lim_{|x| \to  \infty} V_1(x)=0$ and $v_2=0$ or $v_2$ are forms corresponding to $\delta$ potentials discussed in Section \ref{subsec:BEC}.

Let $A$ be the Schr\"odinger operator from Example \ref{ex.A.Schr} with $V = V_1+V_2 \in L^{\infty}((-r,r)^d)$ and the $k$-quasi-periodic boundary conditions, $k\in (-\pi,\pi]^d$, so, as in Example \ref{ex.A.Schr.qp}, $\Dom(A)=\H_{0}^{2,k}((-r,r)^d)$. These operators  arise naturally from a periodic problem in $L^2(\R^d)$, where $V$ is extended periodically onto $\R^d$ \cf~\cite[Chap.XIII.16]{Reed4} and the concepts of the ``band structure'' and of the Bloch eigenvalue problem. As mentioned above, the resolvent of $A$ is compact. Moreover, if $d=1$ and $V$ is real, all eigenvalues of $A$ are simple if $k \notin \{0,\pi\}$, \cf~\cite[Thm.XIII.89]{Reed4}.

\subsubsection{Spectrum of the discrete Schr\"odinger operator \upshape{(\ref{A:E.DS})}}
\label{subsubsec:DSchr.sp}

It is a straightforward calculation, see \cite{Kevrekidis-2013-12}, to show that $A$ in \eqref{A:E.DS} has the $2N$ eigenvalues
$$\pm \left(4\cos^2\left(\tfrac{\pi j}{1+2N}\right)-\gamma^2\right)^{1/2}, \quad 1\leq j \leq N,$$
which are simple and real for $\gamma\in \left(-2\cos\left(\tfrac{\pi N}{1+2N}\right),2\cos\left(\tfrac{\pi N}{1+2N}\right)\right)$.

\subsubsection{Spectrum of the Dirac type operator \upshape{(\ref{A:E.Dirac})}}
\label{subsubsec:Dirac.sp}

If $\kappa(x)\to \kappa_\infty>0$ and $V(x)\to 0$ as $|x|\to \infty$, then the essential spectrum of $A$ in \eqref{A:E.Dirac} is $(-\infty,-\kappa_\infty]\cup [\kappa_\infty,\infty)$. To see this, notice the perturbation result \cite[Prop.6.6]{Cuenin-2014-15} and the unitary equivalence of $A$ in \eqref{A:E.Dirac} and $H$ in \cite{Cuenin-2014-15} for $|x|\to \infty$. 
For real $\kappa$ and $V$ the operator is self-adjoint.
Bounds on eigenvalues of non-self-adjoint perturbations outside of the essential spectrum are proved in \cite{Cuenin-2014-15}.  Special choices of real $\kappa$ and $V$ with simple eigenvalues in $(-\kappa_\infty,\kappa_\infty)$ have been found in \cite[Sec.4.B.]{Goodman-2002-19} using a connection to the spectral problem of the inverse scattering theory for the nonlinear Schr\"odinger equation.
For example, for $\kappa(x)=(\mu_0^2+k^2\tanh^2(kx))^{1/2}$ and $V(x)=\tfrac{k^2\mu_0}{2}\kappa^{-2}(x)\text{sech}^2(k x)$ with $\mu_0\in (-\kappa_\infty,\kappa_\infty)$ and $k\in \R\setminus \{0\}$, the operator $A$ has a simple eigenvalue at $\mu_0$. 

\subsection{Nonlinearities (Ass.~\ref{ass:Af}.(\ref{ass:Af.Lip}))}

We present several nonlinearities $f$ satisfying Assumption \ref{ass:Af}.\eqref{ass:Af.Lip}. The considered spaces $\X$ are those arising for Schr\"odinger operators in Section \ref{subsec:Schr.op}. In fact, we show that the Lipschitz continuity \eqref{E:Lip} holds locally for all $\eta_0 \in \X$, thus also for the eigenvector $\psi_0$.

\begin{example}[Polynomial nonlinearity]
\label{ex.f.pol}
Let $\H=L^2(\Omega)$ and $\X$ be as in \eqref{H.X.Schr} with $s>d/2, s \in \N$, and $\Omega=\R^d$.
Let $N\in \N$ and
\begin{equation}\label{ex.f.pol.def}
f_{\rm pol}(\psi) := \sum_{p,q=0}^N a_{pq} \psi^p \ov{\psi}{}^q \ \text{with} \  a_{pq}\in C_b^s(\R^d,\C) \ \text{for all} \ p,q,
\end{equation}
where $C_b^s$ is the space of functions with continuous and bounded derivatives up to order $s$. Without any loss of generality, we set $a_{00}=a_{01}=a_{10}=0$. A classical example is the cubic nonlinearity $f_{\rm c}(\psi):=|\psi|^2\psi$, \ie~$a_{21}=1,a_{pq}=0$ otherwise.
We show below that $f_{\rm pol}$ in \eqref{ex.f.pol.def} satisfies Assumption \ref{ass:Af}.\eqref{ass:Af.Lip} with any $\eta_0 \in \X=\HsQ(\R^d)$ and $r_L >0$.

First note that, for $s>d/2$, the norm $\|\cdot\|_{H^s}$ satisfies the so-called algebra property:
there exists $C_{\rm a}>0$ such that, for all $\phi, \psi \in H^{s}(\R^d)$, 
\begin{equation}\label{Hs.alg}
\|\phi \psi \|_{H^s} \leq C_{\rm a} \|\phi \|_{H^s} \| \psi \|_{H^s},
\end{equation}
\cf~\cite[Thm.4.39]{Adams-2003} or \cite[Lem.4.2]{Dohnal-2009-238}. Moreover, the Sobolev embedding of $H^s(\Omega)$ in $L^{\infty}(\Omega)$ holds, \cf~\cite[Thm.4.12]{Adams-2003}, \ie~there exists $C_{\rm e}>0$ such that, for all $\phi \in H^{s}(\R^d)$,
\begin{equation}\label{Sob.emb.inf}
\|\phi \|_{L^\infty} \leq C_{\rm e} \|\phi \|_{H^s}. 
\end{equation}
Thus, for all $\phi, \psi \in \HsQ(\R^d)$,
\begin{equation}\label{n1.alg}
\begin{aligned}
\|\phi \psi\|_\HsQ & = \|\phi \psi\|_{H^s} + \|Q \phi \psi\|
\\
& \leq C_{\rm a} \|\phi\|_{H^s}\|\psi\|_{H^s} + \sqrt{\|\phi\|_{L^{\infty}} \|\psi\|_{L^{\infty}} \|Q \phi \| \, \|Q \psi\|} 
\\ & \leq \max\left\{C_{\rm a},\frac{C_{\rm e}}{\sqrt{2}}\right\}\|\phi\|_\HsQ \|\psi\|_\HsQ,
\end{aligned}
\end{equation}
hence the norm $\|\cdot\|_{\HsQ}$ satisfies the algebra property as well.

Clearly, it suffices to check \eqref{E:Lip} for a single term $\psi^p \ov{\psi}{}^q$. From
\begin{equation}\label{ab.dec}
\begin{aligned}
\phi^p\ov{\phi}{}^q-\psi^p\ov{\psi}{}^q & = \frac 12 (\phi^p-\psi^p)(\ov{\phi}{}^q+\ov{\psi}{}^q)+ \frac 12 (\phi^p+\psi^p)(\ov{\phi}{}^q-\ov{\psi}{}^q), \\
\phi^m - \psi^m & = (\phi-\psi) \sum_{k=0}^{m-1} \phi^k \psi^{m-1-k}
\end{aligned}
\end{equation}
and using \eqref{n1.alg}, we obtain
\begin{equation}
\begin{aligned}
\|\phi^p\ov{\phi}{}^q-\psi^p\ov{\psi}{}^q\|_\HsQ 
& \leq C_1
\left(
\|\phi^p - \psi^p\|_\HsQ \|\ov \phi{}^q + \ov \psi{}^q\|_\HsQ +  \|\phi^p + \psi^p\|_\HsQ \|\ov \phi{}^q - \ov \psi{}^q\|_\HsQ
\right)
\\ & \leq C_2 \left(
(\|\phi\|_\HsQ^q + \|\psi\|_\HsQ^q) \sum_{k=0}^{p-1} \|\phi\|_\HsQ^k \|\psi\|_\HsQ^{p-1-k}
\right.
\\ & \qquad \quad +
\left.
(\|\phi\|_\HsQ^p + \|\psi\|_\HsQ^p) \sum_{k=0}^{q-1} \|\phi\|_\HsQ^k \|\psi\|_\HsQ^{q-1-k}
\right) 
\|\phi - \psi\|_\HsQ,
\end{aligned}
\end{equation}
where $C_1,C_2 >0$ are independent of $\phi$ and $\psi$. Finally, for all $\xi
\in \{ \eta \in \HsQ(\R^d) \, : \, \|\eta - \eta_0 \|_\HsQ < r_L\}$ with any $\eta_0 \in \HsQ(\R^d)$ and $r_L >0$, we have $\|\xi\|_\HsQ \leq \|\eta_0\|_\HsQ + r_L$, thus
\begin{equation}\label{ex.f.pol.n1}
\begin{aligned}
\|\phi^p\ov{\phi}{}^q-\psi^p\ov{\psi}{}^q\|_\HsQ 
& \leq 
C_{\eta_0,r_L,p,q} \|\phi - \psi\|_\HsQ
\end{aligned}
\end{equation}
if $\phi,\psi\in \{ \eta \in \HsQ(\R^d) \, : \, \|\eta - \eta_0 \|_\HsQ < r_L\}$. Also note that for $a\in C^s_b(\R^d)$ and $\psi\in \HsQ(\R^d)$
$$\|a\psi\|_{\HsQ} \leq \|a\|_{C^s}\|\psi\|_{H^s}+\|a\|_{C^0}\|Q\psi\|\leq \|a\|_{C^s} \|\psi\|_{\HsQ}.$$
As a result
\begin{equation}\label{pol.nl.Lip}
\|f_{\rm pol}(\phi)-f_{\rm pol}(\psi)\|_\HsQ \leq 
\left(\sum_{p,q=0}^N \|a_{pq}\|_{C^s(\R^d)}C_{\eta_0,r_L,p,q} \right)\|\phi - \psi\|_\HsQ, 
\end{equation}
which implies validity of \eqref{E:Lip} in Assumption \ref{ass:Af}.\eqref{ass:Af.Lip}. 

Note that this Lipschitz continuity can be directly extended to vector valued polynomial nonlinearities $f(\psi)=(f_1(\psi), \dots, f_m(\psi))^T$ with $m\in \N$, $\psi\in (\H_Q^s(\R^d))^m$ and with $f_j(\psi)=\sum_{p_1,q_1,\dots,p_m,q_m=0}^Na_{p_1q_1\dots p_mq_m}^{(j)}\psi_1^{p_1}\overline{\psi_1}^{q_1}\dots\psi_m^{p_m}\overline{\psi_m}^{q_m}$. In the vector case the $\H^s_Q$-norm in \eqref{pol.nl.Lip} must be replaced, e.g., by $\|\psi\|_{(\H_Q^s)^m}:=\sum_{j=1}^m\|\psi_j\|_{\H_Q^s}$.
\end{example}

\begin{remark}[Motivation for $n>1$ in  Ass.~\ref{ass:Af}.\eqref{ass:Af.norm}]\label{R:n}
Examples \ref{ex.A.Schr.n} and \ref{ex.f.pol} constitute the primary motivation for $n>1$ in the choice of a space $\X$ with $\Dom(A^n) \subset \X \subset \Dom(A^{n-1})$. The condition $\N\ni s>d/2$ in Example \ref{ex.f.pol} implies that for $d \geq 4$ we need $s\geq 3$. The natural $H^2$-space of Example \ref{ex.A.Schr} is, therefore, not suitable as our working space. On the other hand, the space $H^{2n}$ of Example \ref{ex.A.Schr.n} with $n > d/4$ is sufficiently small. The reason for choosing $\Dom(A^n) \subset \X \subset \Dom(A^{n-1})$ is the use of a fixed point argument in Theorem \ref{thm.NL-fixpt}, in particular to guarantee  $G(B_{r_2 \eps^2}) \subset Q_0 \X$, \cf~the reasoning between \eqref{Brg2.def} and \eqref{Gchi.est}.

In summary, if we choose
\begin{equation}\label{E:dsnm}
\frac{d}{4}<\frac{s}{2}\leq n \leq m \ \text{with} \ s,n,m\in \N,
\end{equation}
then $A$ in \eqref{E:A.Schr.n} with $V \in W^{2(m-1),\infty}(\R^d)$, $f=f_{\rm pol}$, and $\X=H^{2n}(\R^d)$ satisfy Assumption  \ref{ass:Af}.\eqref{ass:Af.norm} and \eqref{ass:Af.Lip}. Of course, also $\X=H^{2n,k}(\Omega)$ of Example \ref{ex.A.Schr.qp} with the same $A$ and $f$ is admissible. For $d=1,2$ inequality \eqref{E:dsnm} holds also with $s=2, n=1$, such that $\X=\H_{V_1}^2(\R^d)$ of Example \ref{ex.A.Schr} or the space $\H_{V_1}^{2,k}(\Omega)$ of  Example \ref{ex.A.Schr.qp} can be used with $f=f_{\rm pol}$. For $d=1$ we can use even $s=1,n=1$ and, hence, the spaces $\H^{1}_{\sqrt{\Re V_1}}((-r,r)),\H^{1,D}_{\sqrt{\Re V_1}}((-r,r))$ of Example \ref{ex.A.sing} and $\H^{1,k}_{\sqrt{\Re V_1}}((-r,r))$ of Example \ref{ex.A.Schr.qp} are admissible.
\end{remark}
%

\begin{example}[The monopolar and dipolar interaction from \cite{Lahaye-2009-72,Dast-2013-46}] 
	\label{ex.f.mon.dip}
Let $\H=L^2(\Omega)$ and $\X$ be as in \eqref{H.X.Schr} with $s=2$ and $\Omega=\R^3$.
We investigate nonlinearities having formally the form
\begin{equation}\label{f.form}
(f_\iota(\psi))(x) := \psi(x) \int_{\R^3} K_\iota(y) |\psi(x-y)|^2 \dd y, \quad \iota \in \{{\rm m}, {\rm d} \},
\end{equation}  
with kernels $K_{\rm m}(x) = \frac{1}{|x|}$ and $K_{\rm d}(x) = \frac{1}{|x|^3} \left(1-3 \frac{(x.\alpha)^2}{|x|^2} \right)$ with fixed $\alpha \in \R^3$, $\|\alpha\|_{\R^3}=1$. 
Notice that choosing $K(x)= \delta^3(x)$, we can recover also the so called contact interaction, which, however, coincides with the cubic nonlinearity from Example \ref{ex.f.pol}. 
Here we focus on the nonlinearities $f_{\rm m}$ and $f_{\rm d}$, the so-called monopolar and dipolar interaction, respectively, see Section \ref{subsec:BEC} for more details.

First we show that, for any $\eta \in H^2(\R^3) \cap L^1(\R^3)$, 
\begin{equation}
\left \| \int_{\R^3} \frac{\eta(x-y) \; \dd y}{|y|} \right\|_{L^\infty} \leq C \|\eta\|_{H^2} + \|\eta\|_{L^1} 
\end{equation} 
with $C$ independent of $\eta$. Indeed, the Sobolev embedding \eqref{Sob.emb.inf}, applied in the last step, yields
\begin{equation}
\begin{aligned}
\int_{\R^3} \frac{|\eta(x-y)| \; \dd y}{|y|} 
& =
\int_{B_1(0)} \frac{|\eta(x-y)| \; \dd y}{|y|} +  \int_{\R^3 \setminus B_1(0)} \frac{|\eta(x-y)| \; \dd y}{|y|}
\\
& \leq 
\|\eta\|_{L^\infty} \int_{B_1(0)} \frac{\dd y}{|y|} + \|\eta\|_{L^1} 
\leq 
C \|\eta\|_{H^2} + \|\eta\|_{L^1}. 
\end{aligned}
\end{equation} 
Using this, the algebra property of $\|\cdot\|_{H^2}$, \cf~\eqref{Hs.alg}, the special case of the first formula in \eqref{ab.dec} and Cauchy-Schwartz inequality, we obtain
\begin{equation}\label{fm.est}
\begin{aligned}
\|f_{\rm m}(\phi) - f_{\rm m}(\psi) \| 
& \leq 
\| \phi - \psi \| \left \| \int_{\R^3} \frac{|\phi(x-y)|^2 \; \dd y}{|y|} \right\|_{L^\infty}
\\
& \quad + \|\psi\| \left \| \int_{\R^3} \frac{\big| |\phi(x-y)|^2-|\psi(x-y)|^2 \big| \; \dd y}{|y|} \right\|_{L^\infty}
\\
& \leq
C_1 \|\phi\|_{H^2}^2 \| \phi - \psi \|_{\H^2_Q} +  C_2 \|\psi\| \|\phi-\psi\|_{H^2} \|\phi+\psi\|_{H^2}  
\\
& \leq
C_3(\|\phi\|_{H^2}^2 + \|\psi\|_{H^2}^2)  \| \phi - \psi \|_{\H^2_Q},
\end{aligned}
\end{equation}
thus \eqref{E:Lip} is satisfied with $n=1$ for any $\eta_0 \in \H^2_Q(\R^3)$ and $r_L>0$. In particular,  Assumption \ref{ass:Af}.\eqref{ass:Af.Lip} holds.

The nonlinearity $f_{\rm d}$ is more complicated and it is even not immediately clear why it is  well-defined. Nonetheless, the appropriate framework of singular integrals, particularly  \cite[Thm.II.3]{Stein-1970}, yields that, for any $\eta \in L^2(\R^3)$,
\begin{equation}
\left\|
\int_{\R^3} K_{\rm d}(y)\eta(x-y) \, \dd y 
\right\| \leq C_{\rm d} \|\eta\|,
\end{equation}
with $C_{\rm d}$ independent of $\eta$. Using the Sobolev embedding \eqref{Sob.emb.inf}, the algebra property \eqref{Hs.alg} of $H^2$-norm and the special case of the first formula in \eqref{ab.dec}, we obtain
\begin{equation}
\|f_{\rm d}(\phi) - f_{\rm d}(\psi) \| 
\leq 
C_4(\|\phi\|_{H^2}^2 + \|\psi\|_{H^2}^2)  \| \phi - \psi \|_{\H^2_Q},
\end{equation}
thus \eqref{E:Lip} is satisfied with $n=1$ for any $\eta_0 \in \H^2_Q(\R^3)$, $r_L>0$, and Assumption \ref{ass:Af}.\eqref{ass:Af.Lip} holds in particular.
\end{example}

\begin{example}[Nonlocal nonlinearity from  \cite{Rubinstein-2010-195}] 
\label{ex.f.Rubinstein}

Let $\H=L^2((-1,1))$ and $\X= \H^{1, \rm D}_0((-1,1)) = H^1_0((-1,1))$.  We analyze the non-local nonlinearity 
\begin{equation}\label{E:heat_Rubinstein}
(f_{\rm N}(\psi))(x) := \psi \int_0^x \Im \left( \psi(s)\ov{\psi}_x(s)\right) \dd s.
\end{equation}

The validity of \eqref{E:Lip} is easily checked since $f_{\rm N}$ satisfies
\begin{equation}
\begin{aligned}
&\left\| 
f_{\rm N}(\phi) - f_{\rm N}(\psi)
\right\|
\\
& \ = 
\frac{1}{2} 
\left\|
(\phi-\psi)   \int_0^x \Im\left( \phi\overline{\phi}_x+\psi\overline{\psi}_x\right) \dd s  
+ (\phi+\psi)  \int_0^x \Im\left( \phi\overline{\phi}_x-\psi\overline{\psi}_x\right) \dd s 
\right\|
\\
& \ \leq
\frac{1}{2} 
\left(
\left\| \phi-\psi\right\|_{L^\infty}  
\left\| \int_0^x \Im\left( \phi\overline{\phi}_x+\psi\overline{\psi}_x\right) \dd s \right \| 
\right.
\\ & \quad \qquad    
\left. + \left\| \phi + \psi\right\|_{L^\infty}   
\left\|\int_0^x \Im\left( \phi\overline{\phi}_x-\psi\overline{\psi}_x\right) \dd s 
\right\|
\right).
\end{aligned}
\end{equation}
For any $\eta_0 \in H^1_0((-1,1))$, $r_L>0$ and $\phi, \psi \in \{ \eta \in H^1_0((-1,1)) \, : \, \|\eta - \eta_0 \|_{H^1} < r_L\}$, we have 
\begin{equation}
\begin{aligned}
\left\|
\int_0^x \Im\left( \phi\overline{\phi}_x+\psi\overline{\psi}_x\right) \dd s  
\right\| 
&\leq 
\left\|
\|\phi\| \|\phi_x\|+\|\psi\|\|\psi_x\|  
\right\| 
\leq 
\frac{1}{\sqrt{2}}\left(
\|\phi\|_{H^1}^2 +\|\psi\|_{H^1}^2
\right) \\
&\leq 
\sqrt{2}(r_L+\|\eta_0\|_{H^1})^2
\end{aligned}
\end{equation}
and
\begin{equation}
\begin{aligned}
\left\|
\int_0^x \Im\left( \phi\overline{\phi}_x-\psi\overline{\psi}_x\right) ds  
\right\| 
& \leq 
\frac 12 \left\|
\|\phi-\psi\| \|\phi_x+\psi_x\|+\|\phi+\psi\|\|\phi_x-\psi_x\|  
\right\| 
\\
&\leq 
\sqrt{2} \|\phi+\psi\|_{H^1} \|\phi-\psi\|_{H^1}
\\
&
\leq 
2\sqrt{2}(r_L+\|\eta_0\|_{H^1}) \|\phi-\psi\|_{H^1}.
\end{aligned}
\end{equation}
Thus, by the embedding of $H^1$ in $L^{\infty}((-1,1))$ as in Example \ref{ex.f.pol}, we obtain \eqref{E:Lip}. In summary, Assumption \ref{ass:Af}.\eqref{ass:Af.Lip} holds with $n=1$. 
\end{example}

\section{Nonlinear eigenvalue problem}\label{sec:NL_ev}

First, we prove the local existence and uniqueness of the solutions of the nonlinear eigenvalue problem \eqref{nl.ev} under Assumption \ref{ass:Af}. Next, we focus on homogeneous nonlinearities, for which \eqref{nl.ev} together with condition $\|\psi\|=1$ can be solved.  Finally, the influence of possible non-selfadjointness (more precisely non-normality) of $A$ on constants appearing in our estimates is discussed.

\subsection{Local existence and uniqueness}
\label{subsec:ex.un}

\begin{theorem} \label{thm.NL-fixpt}
Let $A$ and $f$ satisfy Assumption \ref{ass:Af}. 
Then every solution of the nonlinear eigenvalue problem 
\begin{equation}\label{A.f.eq}
\begin{aligned}
(A - \mu) \psi - \varepsilon f(\psi) =0, \quad \langle\psi,\psi_0^*\rangle=1
\end{aligned}
\end{equation}
can be written as
\begin{equation}\label{muf.ans}
\mu =\mu_0+\varepsilon \nu + \varepsilon^2\sigma, \qquad \psi = \psi_0+ \varepsilon \phi +\chi,
\end{equation}
with $\nu,\sigma\in \C$ and $\phi,\chi\in Q_0\text{Dom}(A)$ and where 
\begin{equation}\label{nu.def}
\nu = - \langle f(\psi_0), \psi_0^* \rangle,
\end{equation}
$\phi$ is the unique (in $Q_0\text{Dom}(A)$) solution of
\begin{equation}\label{E:phi_thm}
(A - \mu_0) \phi = \nu \psi_0 +f(\psi_0), 
\end{equation}
and $(\sigma,\chi)$ solves the nonlinear system
%
\begin{align}
0 & = \varepsilon \sigma +  \langle f(\psi) - f(\psi_0), \psi_0^* \rangle, \label{sigma_eq}
\\
Q_0(A-\mu_0) Q_0 \chi & = \varepsilon \left[ (\nu + \varepsilon \sigma)(\chi + \varepsilon \phi) +  Q_0 (f(\psi) - f(\psi_0)) \right] =: R(\chi). \qquad
\label{chi_eq}
\end{align}
%

Moreover, there exists $\varepsilon_0 >0$ such that, for all $\varepsilon \in (-\varepsilon_0,\varepsilon_0)$, the nonlinear eigenvalue problem \eqref{sigma_eq}--\eqref{chi_eq} has a unique small solution, namely with 
$$	|\sigma| \leq r_1, \qquad \|\chi\|_\X \leq r_2 \varepsilon^2,$$
where $r_1,r_2 = O(1)$ ($\varepsilon\to 0$) satisfy \eqref{r.sel}, \eqref{s.sel} respectively. 
\end{theorem}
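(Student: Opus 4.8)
The plan is to carry out the Lyapunov--Schmidt reduction explicitly and then close the system \eqref{sigma_eq}--\eqref{chi_eq} by a contraction mapping argument in the Banach space $\X$. First I would apply the spectral projection $P_0 = \langle\cdot,\psi_0^*\rangle\psi_0$ and its complement $Q_0$ from \eqref{P0.def} to the equation $(A-\mu)\psi - \eps f(\psi)=0$. Writing $\psi = \psi_0 + w$ with $w = Q_0\psi \in Q_0\Dom(A)$ (the normalization $\langle\psi,\psi_0^*\rangle=1$ forces $P_0\psi=\psi_0$), the $P_0$-component yields a scalar equation for $\mu$ and the $Q_0$-component yields an equation for $w$ on the range of $Q_0$, on which $A-\mu_0$ is boundedly invertible (by simplicity of $\mu_0$ and $\rho(A)\neq\emptyset$). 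Substituting the ansatz $\mu = \mu_0 + \eps\nu + \eps^2\sigma$ and $w = \eps\phi + \chi$ and matching the $O(\eps)$ terms gives precisely \eqref{nu.def} and \eqref{E:phi_thm}: pairing $(A-\mu_0)(\eps\phi+\chi) = (\eps\nu+\eps^2\sigma)(\psi_0+\eps\phi+\chi) + \eps f(\psi)$ with $\psi_0^*$ kills the left side (since $\langle(A-\mu_0)Q_0\,\cdot\,,\psi_0^*\rangle=0$) and isolates $\nu = -\langle f(\psi_0),\psi_0^*\rangle$ at leading order, while the remainder defines $\phi$ via \eqref{E:phi_thm}, solvable in $Q_0\Dom(A)$ because $\nu\psi_0 + f(\psi_0) \in Q_0\H$ by construction of $\nu$ and $f(\psi_0)\in\Dom(A^{n-1})\subset\H$. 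The leftover scalar and vector equations are exactly \eqref{sigma_eq} and \eqref{chi_eq}, where $R(\chi)$ collects all terms of order $\eps^2$ and higher.

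Next I would set up the fixed point. Equation \eqref{chi_eq} reads $\chi = [Q_0(A-\mu_0)Q_0]^{-1} R(\chi) =: G(\chi)$, where the inverse is a bounded operator from $Q_0\H$ (in fact from $Q_0\Dom(A^{n-1})$) into $Q_0\Dom(A^n)\subset Q_0\X$, using Assumption \ref{ass:Af}.\eqref{ass:Af.norm} to control the $\X$-norm by the graph norm $\|\cdot\|_n$. One substitutes $\sigma$ from \eqref{sigma_eq}, namely $\sigma = -\eps^{-1}\langle f(\psi_0+\eps\phi+\chi) - f(\psi_0),\psi_0^*\rangle$, which is $O(1)$ by the Lipschitz bound \eqref{E:Lip} since $\|\eps\phi+\chi\|_\X = O(\eps)$; this turns \eqref{chi_eq} into a closed equation for $\chi$ alone. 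I would then verify that $G$ maps the ball $B_{r_2\eps^2} := \{\chi\in Q_0\X : \|\chi\|_\X \le r_2\eps^2\}$ into itself and is a contraction there, for $\eps$ small: the key estimates are $\|R(\chi)\|_{n-1} \le C\eps^2$ (the term $(\nu+\eps\sigma)(\chi+\eps\phi)$ contributes $O(\eps)\cdot O(\eps) = O(\eps^2)$, and $\|Q_0(f(\psi)-f(\psi_0))\|_{n-1}\le k_1 L\|\eps\phi+\chi\|_\X = O(\eps)$, giving overall $O(\eps^2)$ after the prefactor $\eps$) and a matching Lipschitz estimate $\|R(\chi_1)-R(\chi_2)\|_{n-1}\le C\eps\|\chi_1-\chi_2\|_\X$ coming from \eqref{E:Lip} and the bilinear structure of the $(\nu+\eps\sigma)(\chi+\eps\phi)$ term; the latter also requires controlling the dependence of $\sigma$ on $\chi$, which is Lipschitz with small constant by \eqref{E:Lip}. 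Choosing $r_1, r_2 = O(1)$ to satisfy the inequalities \eqref{r.sel}, \eqref{s.sel} (i.e.\ $Cr_2\eps^2\cdot\eps^{-\text{(nothing)}}$-type self-mapping bound and $C\eps<1$ contraction bound) and then taking $\eps_0$ small enough yields a unique fixed point $\chi$ in the ball, hence via \eqref{sigma_eq} a unique $\sigma$ with $|\sigma|\le r_1$, completing the existence and uniqueness claim.

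The main obstacle I expect is the bookkeeping around the two intertwined smallness scales and the fact that $f$ maps into $\Dom(A^{n-1})$ only, not $\Dom(A^n)$: one must be careful that $[Q_0(A-\mu_0)Q_0]^{-1}$ gains exactly one power of $A$, landing $\chi$ in $\Dom(A^n)\subset\X$, so that $f(\psi_0+\eps\phi+\chi)$ makes sense and the Lipschitz estimate \eqref{E:Lip} applies with the $\X$-norm — this is precisely the role of the sandwich condition $\Dom(A^n)\subset\X\subset\Dom(A^{n-1})$. The other delicate point is that $\sigma$ is defined by dividing \eqref{sigma_eq} by $\eps$, so one must confirm $\langle f(\psi)-f(\psi_0),\psi_0^*\rangle = O(\eps)$ uniformly, which follows from \eqref{E:Lip} and $\|\psi-\psi_0\|_\X = \|\eps\phi+\chi\|_\X\le |\eps|\|\phi\|_\X + r_2\eps^2 = O(\eps)$, keeping $\psi$ inside the Lipschitz neighborhood $\{\|\eta-\psi_0\|_\X<r_L\}$ for $\eps$ small. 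Once these are handled, the contraction mapping theorem applied to $G$ on $B_{r_2\eps^2}$ finishes the proof, and the stated second-order expansion \eqref{muf.ans} is read off directly from the construction.
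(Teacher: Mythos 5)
Your proposal is correct and follows essentially the same route as the paper: Lyapunov--Schmidt reduction via $P_0,Q_0$, identification of $\nu$ and $\phi$ through solvability on $Q_0\H$, and a contraction on the ball $B_{r_2\eps^2}$ exploiting the sandwich $\Dom(A^n)\subset\X\subset\Dom(A^{n-1})$ together with the Lipschitz bound \eqref{E:Lip}. The only (cosmetic) difference is that you substitute $\sigma=\sigma(\chi)$ to obtain a single fixed-point equation, whereas the paper nests two iterations --- first $\chi(\sigma)$ for each fixed $|\sigma|\le r_1$, then $\sigma=S(\sigma)$ --- and note that $\sigma(\cdot)$ is Lipschitz in $\chi$ only with constant $O(\eps^{-1})$ (not ``small'' as you state), which is nevertheless harmless since this term enters $R$ with an $\eps^2$ prefactor.
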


\begin{proof}
Without any loss of generality, for a given $(\mu_0,\psi_0)$ we can write the solution $(\mu,\psi)$ as in \eqref{muf.ans}. Although at this point the representation is not unique, we already know that $\varepsilon\phi+\chi = Q_0(\varepsilon\phi+\chi)$ due to the constraint $\langle\psi,\psi_0^*\rangle=1$ and the normalization $\langle\psi_0,\psi_0^*\rangle=1$.

First, we apply the projection $P_0$ to \eqref{A.f.eq} and obtain
$$\nu+\varepsilon\sigma = -\langle f(\psi),\psi_0^*\rangle = -\langle f(\psi_0),\psi_0^*\rangle  -\langle f(\psi)-f(\psi_0),\psi_0^*\rangle.$$
Defining $\nu := -\langle f(\psi_0),\psi_0^*\rangle$, we get equation \eqref{sigma_eq}.

Second, we apply $Q_0$ to \eqref{A.f.eq}, resulting in
\[
\begin{aligned}
Q_0(A-\mu_0)Q_0(\varepsilon \phi+\chi) =& (\varepsilon\nu+\varepsilon^2\sigma)(\varepsilon\phi+\chi)+\varepsilon Q_0f(\psi)\\
=& \varepsilon\left[(\nu+\varepsilon\sigma)(\varepsilon\phi+\chi) + Q_0(f(\psi)-f(\psi_0))+\nu\psi_0+f(\psi_0)\right]. 
\end{aligned}
\]
Next, we define $\phi$ to be the unique solution $\phi\in Q_0\text{Dom}(A)$ of \eqref{E:phi_thm}. This solution exists because $\nu \psi_0+f(\psi_0)\in \Ker(A^*-\overline{\mu_0})^\perp = \lspan \{\psi_0^*\}^\perp$ and because $\mu_0$ is not in the essential spectrum $\sigma_{{\rm e} 5}(A)$, it is not in $\sigma_{{\rm e} 3}(A)$ either, \cf~\cite[Chap.IX]{EE}, and hence $A-\mu_0$ is Fredholm. The equation above thus becomes problem \eqref{chi_eq}.

The rest of the proof deals with the existence of a unique solution $(\sigma,\chi)$, with $\chi$ small and $\sigma$ bounded, of \eqref{sigma_eq}--\eqref{chi_eq}.
Note that $Q_0(A-\mu_0) Q_0$ is boundedly invertible in $Q_0 \H$, hence equation \eqref{chi_eq} can be rewritten as
\begin{equation}\label{chi.eq}
\chi = (Q_0(A-\mu_0) Q_0 )^{-1} R(\chi) =: G(\chi).
\end{equation}

In the first step, for $\varepsilon$ in a small neighborhood of $0$, we use the fixed point argument to conclude the existence of a solution $\chi$ of \eqref{chi.eq} with $\|\chi\|_\X=O(\varepsilon^2)$.  
We search for a fixed point 
\begin{equation}\label{Brg2.def}
\chi \in B_{r_2 \varepsilon^2}:=\{\eta \in Q_0 \X \, : \, \|\eta\|_\X \leq r_2 \varepsilon^2\} 
\end{equation}
with some $r_2>0$ independent of $\varepsilon$. Its existence is guaranteed if we can show 
\begin{enumerate}[(i)]
\item \label{fp.i}
$\chi \in B_{r_2 \varepsilon^2} \ \Rightarrow \ G(\chi)\in B_{r_2 \varepsilon^2}$,
\item \label{fp.ii}
there exists $\rho \in (0,1)$ such that $\|G(\chi_1)-G(\chi_2)\|_\X \leq \rho \|\chi_1-\chi_2\|_\X$ for all $\chi_1,\chi_2 \in B_{r_2 \varepsilon^2}$.
\end{enumerate}

Note that $\psi_0$, being an eigenfunction, satisfies $\psi_0\in\Dom(A^m)$ for any $m\in \N$. Thus the right hand side of \eqref{E:phi_thm} lies in $\Dom(A^{n-1})$, hence $\phi \in Q_0\Dom(A^n)\subset Q_0\X$. Therefore if $\chi \in Q_0\X$, then $Q_0f(\psi)\in Q_0\Dom(A^{n-1})$ and $R(\chi) \in  Q_0 \Dom(A^{n-1})$. It is straightforward to check that then $G(\chi) \in Q_0 \Dom(A^n)$.
The properties of the norm $\|\cdot\|_\X$, \cf~Assumption \ref{ass:Af}.\eqref{ass:Af.norm}, yield
\begin{equation}\label{Gchi.est}
\begin{aligned}
\|G(\chi)\|_\X 
&\leq \frac{k_2}{k_1}
\| G(\chi) \|_n 
\leq 
\frac{k_2}{k_1} C_{\mu_0}
\| R(\chi)\|_{n-1}.
\end{aligned}
\end{equation}
To show the second inequality in \eqref{Gchi.est}, note that because $G(\chi)\in Q_0\Dom(A^n)$,
\[\begin{aligned}
AG(\chi)&=(A-\mu_0)Q_0G(\chi)+\mu_0G(\chi)\\
&=Q_0(A-\mu_0)Q_0G(\chi)+\mu_0G(\chi) =R(\chi)+\mu_0G(\chi).
\end{aligned}
\]
We thus have, for $k\geq 1$,
$$
A^k G(\chi) = A^{k-1}(R(\chi)+\mu_0 G(\chi))=\dots=\sum_{j=0}^{k-1}\mu_0^{k-1-j} A^j R(\chi) + \mu_0^k G(\chi).
$$
As a result, 
$\|A^k G(\chi) \| \leq \sum_{j=0}^{k-1}|\mu_0|^{k-1-j}\|A^jR(\chi)\| + |\mu_0|^k\|(Q_0(A-\mu_0)Q_0)^{-1}\| \|R(\chi)\|$ 
and the second inequality in \eqref{Gchi.est} follows, where $C_{\mu_0}>0$ is a constant depending on $|\mu_0|,n$ and $\|(Q_0(A-\mu_0)Q_0)^{-1}\|$.

To ensure \eqref{fp.i}, we take $\chi \in B_{r\varepsilon^2}$, estimate $\|R(\chi)\|_{n-1}$ and select suitable $r$ in the following. First note that $P_0$ and $Q_0$ are bounded on $\Dom (A^{m})$, $m\in \N_0$, moreover, since $A^k P_0 \psi = P_0 A^k \psi$ and $A^k Q_0 \psi = Q_0 A^k \psi$ for all $\psi \in \Dom(A^m)$ and $k=0,1,\dots,m$, we have 
\begin{equation}
\begin{aligned}
\|P_0\|_m &:= \sup_{0 \neq \psi \in \Dom (A^m)} \frac{\|P_0 \psi\|_m}{\|\psi\|_m} 
= 
\sup_{0 \neq \psi \in \Dom (A^m)} \frac{\sum_{k=0}^m  \|P_0 A^k \psi\|}{\sum_{k=0}^m \|A^k  \psi\|}  
\leq
\|P_0\|,
\end{aligned}
\end{equation} 
and similarly $\|Q_0\|_m \leq \|Q_0\|$. Next, 
\begin{equation}
\begin{aligned}
\|R(\chi)\|_{n-1}
& \leq
|\varepsilon| \big(  |\nu + \varepsilon \sigma| (\|\chi\|_{n-1} + |\varepsilon| \|\phi\|_{n-1}) + \|Q_0(f(\psi) - f(\psi_0))\|_{n-1} \big)
\\
& \leq 
|\varepsilon| k_1 \big (|\nu + \varepsilon \sigma| ( r_2 \varepsilon^2  + |\varepsilon| \|\phi\|_\X) +  \|Q_0\| L \|\varepsilon \phi + \chi\|_\X \big)
\\
& \leq 
\varepsilon^2 k_1
\big(
|\nu + \varepsilon \sigma| (r_2 |\varepsilon| + \|\phi\|_\X) + \|Q_0\| L( \|\phi\|_\X + r_2|\varepsilon|)
\big),
\end{aligned}
\end{equation}
thus we select $r_2$ such that
\begin{equation}\label{r.sel}
r_2 > k_2 C_{\mu_0} \big( |\nu| + \|Q_0\| L \big) \|\phi\|_\X. 
\end{equation}
For all sufficiently small $\varepsilon$, we satisfy firstly $\|\psi - \psi_0\|_\X \leq |\varepsilon| \|\phi\|_\X + r_2 \varepsilon^2 < r_L$, hence the Lipschitz property of $f$, \cf~Assumption \ref{ass:Af}.\eqref{ass:Af.Lip}, can be indeed used. Secondly we satisfy condition \eqref{fp.i}. 

It remains to prove \eqref{fp.ii}.  Similarly as above, we obtain (with $\psi_{1,2}:=\psi_0+\varepsilon\phi + \chi_{1,2}$)
\begin{equation}\label{R12.est}
\begin{aligned}
\|R(\chi_1) - R(\chi_2)\|_{n-1} & =
|\varepsilon| \| 
(\nu + \varepsilon \sigma) (\chi_1 - \chi_2) + Q_0 (f(\psi_1) - f(\psi_2))
\|_{n-1}
\\
& \leq 
|\varepsilon| k_1 \big( |\nu + \varepsilon \sigma| +  \|Q_0\| L \big)  \|\chi_1 - \chi_2\|_\X.
\end{aligned}
\end{equation}
Hence, for all sufficiently small $\varepsilon$, condition \eqref{fp.ii} is also satisfied.

In summary, there exists $\tilde \varepsilon_0 >0$, such that, for all $\varepsilon$, $|\varepsilon| < \tilde \varepsilon_0$, we have the function $\chi \in B_{r_2 \varepsilon^2} $ that solves \eqref{chi.eq}; note that then $\chi \in Q_0 \Dom(A)$ as well. Note also that $\chi$ and in particular $\tilde \varepsilon_0$ depend on $\sigma$. However, we consider only $|\sigma| \leq r_1$, where $r_1$ satisfies \eqref{s.sel}, and an inspection of the estimates above shows that we can find $\tilde \varepsilon_0$ independent of $\sigma$ (dependent only on $r_1$).

In order to solve the first equation in \eqref{sigma_eq}, we prove first that the solution $\chi$ is continuous in $\sigma$. More precisely,  
\begin{equation}
\begin{aligned}
& \|\chiso - \chist\|_\X  \leq 
\frac{k_2}{k_1} C_{\mu_0} \|R(\chiso) - R(\chist)\|_{n-1}
\\
& \quad = 
\frac{k_2}{k_1} C_{\mu_0} |\varepsilon| 
\| 
\nu (\chiso - \chist) + \varepsilon (\sigma_1 \chiso - \sigma_2 \chist) +\varepsilon^2 (\sigma_1 - \sigma_2) \phi 
\\
& \qquad + Q_0( f(\psi_0 + \varepsilon \phi + \chiso) - f(\psi_0 + \varepsilon \phi + \chist)) \|_{n-1}
\\
& \quad \leq 
k_2 C_{\mu_0} |\varepsilon| 
\Big(
|\nu| \|\chiso - \chist\|_\X + \varepsilon^2  |\sigma_1 - \sigma_2| \|\phi\|_\X
\\ 
& \qquad + \frac{|\varepsilon|}2 \| (\sigma_1 - \sigma_2) (\chiso + \chist) +  (\sigma_1 + \sigma_2) (\chiso - \chist)\|_\X
\\ 
& \qquad +
\|Q_0\| L \|\chiso - \chist\|_\X
\Big)
\\
& \quad \leq 
k_2 C_{\mu_0} |\varepsilon| 
\Bigg(
|\varepsilon| \Big(\frac{1}2 \|\chiso + \chist\|_\X + |\varepsilon| \|\phi\|_\X  \Big) | \sigma_1 - \sigma_2| 
\\ 
& \qquad +
\Big( |\nu| + 
\|Q_0\| L + \frac{|\varepsilon|}2 | \sigma_1 + \sigma_2| \Big) \|\chiso - \chist\|_\X
\Bigg),
\end{aligned}
\end{equation}
Hence, for $|\sigma_{1,2}| \leq r_1$,
\begin{equation}\label{chi.sigma.cont}
\begin{aligned}
& \|\chiso - \chist\|_\X  
\Big( 
1 - k_2 C_{\mu_0} |\varepsilon| 
\big(
|\nu| + |\varepsilon| r_1 + \|Q_0\| L  \big)
\Big)
\\& \quad  \leq 
k_2 C_{\mu_0} |\varepsilon|^3 \Big(r_2 |\varepsilon| + \|\phi\|_\X  \Big) | \sigma_1 - \sigma_2|.
\end{aligned}
\end{equation}

As the final step, we use the fixed point argument on 
\begin{equation}\label{sigma.eq_fp}
\sigma = - \frac 1\varepsilon \langle f(\psi) - f(\psi_0), \psi_0^* \rangle =: S(\sigma),
\end{equation}
where we search for a fixed point in $\{ \sigma \in \C \, : \, |\sigma| \leq r_1 \}$ with a suitable $r_1$ selected below. Since
\begin{equation}
|S(\sigma)|  
\leq k_1 L \|P_0\| (\|\phi\|_\X + r_2 |\varepsilon|),
\end{equation}
we choose $r_1$ such that
\begin{equation}\label{s.sel}
r_1 > k_1 L \|P_0\| \|\phi\|_\X, 
\end{equation}
hence, for sufficiently small $|\varepsilon|$, $|\sigma| \leq r_1$ implies $|S(\sigma)| \leq r_1$. Moreover, using the continuity of $\chi$ in $\sigma$, \cf~\eqref{chi.sigma.cont}, we obtain
\begin{equation}
\begin{aligned}
|S(\sigma_1) -  S(\sigma_2)| & \leq \frac{1}{|\varepsilon|} \|f(\psi(\sigma_1))-f(\psi(\sigma_2))\|\\
&\leq \frac{k_1L}{|\varepsilon|} \|\chiso - \chist\|_\X \leq C \varepsilon^2 | \sigma_1 - \sigma_2|, 
\end{aligned}
\end{equation}
hence, for sufficiently small $|\varepsilon|$, the fixed point argument yields the sought solution of \eqref{sigma.eq_fp}.
\end{proof}

\subsection{Homogeneous nonlinearity}
\label{subsec:hom.nl}

In the case of a homogeneous nonlinearity like \eg~$f(\psi)=|\psi|^{q-1}\psi$, solutions with norm one can be generated from the nonlinear eigenfunctions of Theorem \ref{thm.NL-fixpt} by a scaling.
\begin{corollary}[Nonlinear eigenfunction with norm one]
\label{cor.hom.f}
Let $A$ and $f$ satisfy Assumption \ref{ass:Af} and suppose that $f$ is a homogeneous nonlinearity, \ie~ for all $a>0$, it satisfies the scaling property $f(a\psi)=a^q f(\psi)$ with some $q\in \R$. Given the nonlinear eigenpair $(\mu(\varepsilon),\psi(\varepsilon))$ for $|\varepsilon|<\varepsilon_0$ from Theorem \ref{thm.NL-fixpt}, the pair
\begin{equation}
(\tilde{\mu}(\tilde{\varepsilon}),\tilde{\psi}(\tilde{\varepsilon})) =\left(\mu(\varepsilon),\|\psi(\varepsilon)\|^{-1}\psi(\varepsilon)\right) \text{ with } \tilde{\varepsilon}=\varepsilon\|\psi(\varepsilon)\|^{q-1}
\end{equation}
solves $(A-\tilde{\mu})\tilde{\psi}-\tilde{\varepsilon}f(\tilde{\psi})=0$ and satisfies $\|\tilde{\psi}(\tilde{\varepsilon})\|=1$. The mapping $\varepsilon\mapsto \varepsilon\|\psi(\varepsilon)\|^{q-1}$ is injective for $|\varepsilon|<\varepsilon_1$, where $\varepsilon_1\leq \varepsilon_0$ is small enough.
\end{corollary}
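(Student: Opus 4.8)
The plan is to separate the claim into the elementary rescaling identity and the injectivity of $\varepsilon\mapsto\tilde\varepsilon=\varepsilon\|\psi(\varepsilon)\|^{q-1}$, the latter being the only point requiring an argument beyond algebra. For the rescaling: by Theorem \ref{thm.NL-fixpt} and \eqref{ass.f.1.gr} the number $c(\varepsilon):=\|\psi(\varepsilon)\|$ obeys
\[
|c(\varepsilon)-1|=\bigl|\,\|\psi_0+\varepsilon\phi+\chi\|-\|\psi_0\|\,\bigr|\le\|\varepsilon\phi+\chi\|\le k_1\bigl(|\varepsilon|\,\|\phi\|_\X+r_2\varepsilon^2\bigr)=O(\varepsilon),
\]
so in particular $c(\varepsilon)>0$ for $|\varepsilon|$ small and the homogeneity $f(a\psi)=a^qf(\psi)$ may be used with $a=c(\varepsilon)^{-1}$. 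Writing $c=c(\varepsilon)$, $\tilde\mu=\mu$, $\tilde\psi=c^{-1}\psi$, $\tilde\varepsilon=\varepsilon c^{q-1}$, one computes directly
\[
(A-\tilde\mu)\tilde\psi-\tilde\varepsilon f(\tilde\psi)=c^{-1}(A-\mu)\psi-\varepsilon c^{q-1}c^{-q}f(\psi)=c^{-1}\bigl[(A-\mu)\psi-\varepsilon f(\psi)\bigr]=0,
\]
and $\|\tilde\psi\|=c^{-1}\|\psi\|=1$, which settles this part.

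For the injectivity set $g(\varepsilon):=\varepsilon\,c(\varepsilon)^{q-1}=:\varepsilon\,h(\varepsilon)$. The key input, not literally contained in the statement of Theorem \ref{thm.NL-fixpt} but visible from its proof, is that $\varepsilon\mapsto\psi(\varepsilon)$ is Lipschitz continuous near $\varepsilon=0$: the coupled fixed-point system \eqref{sigma_eq}--\eqref{chi_eq} (equivalently the maps $S$, $G$ of \eqref{sigma.eq_fp}, \eqref{chi.eq}) is a uniform contraction for $|\varepsilon|$ small that depends Lipschitz-continuously on $\varepsilon$, hence its solution $(\sigma(\varepsilon),\chi(\varepsilon))$, and therefore $\psi(\varepsilon)=\psi_0+\varepsilon\phi+\chi(\varepsilon)$, is Lipschitz in $\varepsilon$. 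Consequently $c(\varepsilon)=\|\psi(\varepsilon)\|$ is Lipschitz near $0$; since $t\mapsto t^{q-1}$ is $C^1$ near $t=1$, so is $h$, and combined with $c(\varepsilon)=1+O(\varepsilon)$ from the previous step this yields constants $C,C'>0$ with $|h(\varepsilon)-1|\le C|\varepsilon|$ and $|h(\varepsilon')-h(\varepsilon'')|\le C'|\varepsilon'-\varepsilon''|$ for all small $\varepsilon',\varepsilon''$.

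I would then conclude from the splitting
\[
g(\varepsilon')-g(\varepsilon'')=(\varepsilon'-\varepsilon'')h(\varepsilon')+\varepsilon''\bigl(h(\varepsilon')-h(\varepsilon'')\bigr)
\]
the lower bound $|g(\varepsilon')-g(\varepsilon'')|\ge|\varepsilon'-\varepsilon''|\bigl(1-C|\varepsilon'|-C'|\varepsilon''|\bigr)\ge\tfrac12|\varepsilon'-\varepsilon''|$ as soon as $|\varepsilon'|,|\varepsilon''|<\varepsilon_1:=\min\{\varepsilon_0,(2(C+C'))^{-1}\}$, so that $g$ is injective on $(-\varepsilon_1,\varepsilon_1)$. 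The main obstacle is precisely the Lipschitz dependence of the fixed point $(\sigma,\chi)$ on $\varepsilon$; everything else is bookkeeping. I expect that a slightly more careful reading of the estimates in the proof of Theorem \ref{thm.NL-fixpt} in fact shows that the Lipschitz constant of $\chi(\cdot)$ near $0$ is $O(\varepsilon)$, whence $g(\varepsilon)=\varepsilon+O(\varepsilon^2)$ with a remainder that is Lipschitz with small constant, which is an alternative way to close the argument.
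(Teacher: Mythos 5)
Your proof is correct, and its first half is the same as the paper's: both reduce the rescaling identity to the homogeneity $f(c^{-1}\psi)=c^{-q}f(\psi)$ with $c=\|\psi(\varepsilon)\|$ and factor $c^{-1}$ out of the equation. The genuine difference is in the injectivity of $g(\varepsilon)=\varepsilon\|\psi(\varepsilon)\|^{q-1}$. The paper disposes of this in one line by noting $g(\varepsilon)\sim\varepsilon$ as $\varepsilon\to 0$; read literally this is not sufficient, since a continuous function asymptotic to $\varepsilon$ can oscillate and fail to be injective on every neighbourhood of $0$. Your decision to extract a Lipschitz bound for $h(\varepsilon)=\|\psi(\varepsilon)\|^{q-1}$ from the uniform-contraction structure of \eqref{sigma_eq}--\eqref{chi_eq} and then to use the splitting $g(\varepsilon')-g(\varepsilon'')=(\varepsilon'-\varepsilon'')h(\varepsilon')+\varepsilon''\bigl(h(\varepsilon')-h(\varepsilon'')\bigr)$ is therefore the more complete argument --- it supplies exactly what the paper's ``$\sim$'' implicitly relies on. One caveat on your key input: the blanket claim that the pair $(\sigma(\varepsilon),\chi(\varepsilon))$ is Lipschitz in $\varepsilon$ is slightly too strong for $\sigma$, because $S$ in \eqref{sigma.eq_fp} carries an explicit factor $1/\varepsilon$, and what the uniform contraction principle yields directly is only a bound of the form $|\sigma(\varepsilon')-\sigma(\varepsilon'')|\lesssim|\varepsilon'-\varepsilon''|/\max(|\varepsilon'|,|\varepsilon''|)$, which is not a uniform Lipschitz estimate near $0$. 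This is harmless for your conclusion: $\sigma$ feeds back into $\chi$ only through a term damped by $|\varepsilon|^3$ (compare \eqref{chi.sigma.cont}), while the direct $\varepsilon$-dependence of $R$ is manifestly Lipschitz, so $\varepsilon\mapsto\chi(\varepsilon)$ is Lipschitz with constant $O(\varepsilon)$ and $c(\varepsilon)=\|\psi(\varepsilon)\|$ with constant $O(1)$, exactly as you need. You should simply state the intermediate Lipschitz claim for $\chi$ and $\psi$ rather than for the pair including $\sigma$.
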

\begin{proof}
From the scaling property we immediately get 
$$(A-\mu(\varepsilon))\|\psi(\varepsilon)\|^{-1}\psi(\varepsilon)-\varepsilon\|\psi(\varepsilon)\|^{q-1}f(\|\psi(\varepsilon)\|^{-1}\psi(\varepsilon))=0.$$
The injectivity follows from the asymptotic equivalence $\varepsilon\|\psi(\varepsilon)\|^{q-1}=\varepsilon\|\psi_0+\varepsilon\phi+\chi\|^{q-1}\sim \varepsilon\|\psi_0\|^{q-1}=\varepsilon $ for $\varepsilon\to 0$.
\end{proof}

\begin{remark}
Similar results on the bifurcation from simple eigenvalues of Fredholm operators appear in the literature. As mentioned in the introduction, a classical reference is the paper \cite{Ize-1976-7} by Ize, where Thm.I.3.2 applies under our assumptions and the additional condition $\|f(\psi)\|=O(\|\psi\|^2)$ as $\psi\to 0$. The theorem of Ize treats the bifurcation from the zero solution $\psi\equiv 0$ in $(A-\mu)\psi=f(\psi)$ but for a homogeneous $f$ the problem can be rescaled to \eqref{nl.ev}. Our Theorem  \ref{thm.NL-fixpt} avoids the technical condition $\|f(\psi)\|=O(\|\psi\|^2)$ ($\psi\to 0$) and provides a more explicit expansion of $\mu$ and $\psi$.
\end{remark}

\subsection{Constants and non-selfadjointness}

Our fixed point argument in Theorem \ref{thm.NL-fixpt} works for $\eps$ sufficiently small and the size of remainders $\chi$ and $\sigma$ is determined by constants $r_2$ and $r_1$, respectively, \cf~ \eqref{r.sel}, \eqref{s.sel}. Notice that the size of $\eps$ is restricted at least by 
\begin{equation}
\varepsilon_0 k_2 C_{\mu_0} \big( |\nu + \varepsilon_0 \sigma| +  \|Q_0\| L \big) <1,
\end{equation}
arising from the contraction condition $\rho<1$ together with \eqref{Gchi.est} and \eqref{R12.est}. 

Provided $A$ is normal, \ie~$AA^* = A^*A$, the spectral projection $P_0$ and the complementary projection $Q_0$ are orthogonal, hence $\|P_0\|=\|Q_0\|=1$, and, in the case $n=1$, the constant $C_{\mu_0}$ is determined by spectral properties of $A$ since
\begin{equation}
C_{\mu_0}= 1 + (1+|\mu_0|) \|(Q_0 (A-\mu_0)Q_0)^{-1}\| =  1 + \frac{1+|\mu_0|}{\dist(\mu_0,\sigma(A)\setminus\{\mu_0\})},
\end{equation}
due to the standard relation $\|(A-z)^{-1}\| = \dist(z,\sigma(A))^{-1}$ valid for normal operators. 

However, in our applications, we usually encounter non-symmetric perturbations of self-adjoint operators that result both in non-selfadjointness and non-normality of the perturbed operators. Note that we lose both equalities for $\|Q_0\|$ and $C_{\mu_0}$ if $A$ is not normal, only $\geq$ is left in general. In particular, the spectral projections as well as the complementary projections may behave wildly as the size of the spectral parameter is increased even for simple looking one-dimensional Schr\"odinger operators with a complex potential and compact resolvent. For instance, considering the rotated oscillator $-\partial_x^2 + \ii x^2$, \cf~\cite{Davies-2007,Davies-2004-70}, for which all eigenvalues $\lambda_n$ are explicit, $\lambda_n = e^{\ii \pi/4} (2n+1)$, $n\in \N_0$, the norms of the corresponding spectral projections $P_n$ grow exponentially, more precisely, 
\begin{equation}
\lim_{n \to \infty} \frac{\log \|P_n\|}{n} = const;
\end{equation}
similar behavior is exhibited also by other well-studied (often $\PT$-symmetric and with real spectrum) Schr\"odinger operators, \cf~for instance  \cite{Henry-2012-350, Henry-2014-15,Mityagin-2013arx, Krejcirik-2014} and references therein. Notice that the growth of $\|P_n\|$ implies the growth of $\|Q_n\|$ since $\|Q_n\| \geq \|P_n\| -1$.  Concerning the size of $C_{\mu_0}$, the norm of the resolvent of a non-normal operator, \ie~its pseudospectrum, see \cite{Trefethen-2005}, may be dramatically larger than $\dist(z,\sigma(A))^{-1}$. While there is a collection of recent pseudospectral results for non-normal differential operators, \cf~\cite{Davies-2007,Helffer-2013-book,Trefethen-2005}, the estimates on the norm of the resolvent of $(A-\mu_0) \restriction Q_0 \Dom(A)$ acting in $Q_0 \H$ seem not to be available.

On the other hand, there exists a large collection of perturbation results, particularly for operators with compact resolvent, \cf~for instance the classical \cite{Kato-1966,DS3,Markus-1988} or more recent \cite{Shkalikov-2010-269,Wyss-2010-258,Adduci-2012-10,Adduci-2012-73,Mityagin-2013a}, guaranteeing that the eigensystem of a perturbed selfadjoint (or normal) operator contains a Riesz basis, hence $\|P_n\|$, $\|Q_n\|$ and $\|(Q_n (A-\mu_n)Q_n)^{-1}\|$ are uniformly bounded ($\|P_n\|$ are uniformly bounded already if there is only a basis). The Riesz basis property is present for example for operators in \eqref{E:A_sing2}, \eqref{BEC.models.1}, \eqref{BEC.models.2} and \eqref{A.Rub} from Section \ref{sec:appl}.


\section{The role of symmetries}
\label{sec:sym}
We show that under \emph{antilinear} symmetry assumptions on $A$ and $f$, \cf~Assumption \ref{ass:A.C} below, the nonlinear eigenvalue $\mu$ starting from a real eigenvalue $\mu_0$ remains real and a certain symmetry of the solution $\psi$ is preserved for all $\varepsilon \in (-\varepsilon_0,\varepsilon_0)$. The symmetry of solutions is preserved also in the case of linear symmetries that are studied next.

\subsection{Antilinear symmetries}
\label{subsec:anti.sym}

\begin{ass}[Antilinear symmetries of $A$ and $f$]\label{ass:A.C}
	
\noindent
Let $A$ be a densely defined and closed operator in a Hilbert space $\H$ and let $\cC$ be an \emph{antilinear}, isometric and involutive operator, \ie~for all $\phi,\psi \in \H$ and $\lambda \in \C$, $\cC(\lambda \phi+\psi)=\ov \lambda \cC\phi+\cC\psi$, $\langle \cC \phi, \cC \psi \rangle = \langle \psi, \phi \rangle$ and $\cC^2 = I$, such that
\begin{enumerate}[(a)]
	\item \label{ass:A.C.A} 
	for all $\psi \in \Dom(A)$,  
	\begin{equation}\label{ass:A.C.A.eq}
	\cC \psi \in \Dom(A) \quad \text{and} \quad A \cC \psi = \cC A \psi,
	\end{equation}
	\item \label{ass:A.C.f}
	for all $\psi \in \X$, where $\X$ is the space from Assumption \ref{ass:Af}.\eqref{ass:Af.Lip},
	\begin{equation}\label{ass:A.C.f.eq}
	\cC f(\psi) = f(\cC \psi).
	\end{equation}
\end{enumerate}
\end{ass}


The operator $\cC$ is referred to as the \emph{antilinear symmetry} of $A$ and $f$. The standard example of $\cC$ is the $\PT$ symmetry that is naturally present in various physical models as we indicate in the following example and in Section \ref{sec:appl}. 

\begin{example}[$\PT$ symmetry]
\label{ex.PT.sym}

Let $\H = L^2(\Omega)$, where $\Omega = \R^d$ or $\Omega = (-r,r)^d$ with $r \in (0,+\infty)$. Define
\begin{equation}\label{PT.def}
(\P \psi)(x):=\psi(-x), \qquad \T \psi:=\overline{\psi}.
\end{equation}
In quantum mechanics, $\P$ corresponds to the space reflection (parity) and $\T$ is the time-reversal. The antilinear $\PT$ symmetry is the composition, \ie~$\cC:=\PT$. In more dimensional domains, the so called partial $\P_i\T$ symmetries, where 
\begin{equation}\label{Pi.def}
(\P_i \psi)(x_1,\dots,x_i,\dots,x_d):=\psi(x_1,\dots,-x_i,\dots,x_d),
\end{equation}
are sometimes considered, \cf~\cite{Borisov-2008-62} or \cite{Yang-2014-39}. Obviously, also $\P_i\T$ is antilinear.

Schr\"odinger operators $-\Delta +V$ with complex potentials $V$, \cf~Examples \ref{ex.A.Schr} and \ref{ex.A.Schr.n}, are $\PT$-symmetric if $V$ is $\PT$-symmetric, \ie~$[\PT,V] = 0$, or, in other words, the real and imaginary parts of $V$ satisfy $(\Re V)(-x)= (\Re V)(x)$ and $(\Im V)(-x) = -(\Im V)(x)$. The Schr\"odinger operators from Example \ref{ex.A.sing} posses this symmetry if, for every $\phi,\psi \in \Dom(a)$,  $v_2(\phi,\PT \psi)= \ov{v_2(\PT\phi,\psi)}$ holds, see Section \ref{sec:appl} for examples of such $v_2$.

The polynomial nonlinearity $f_{\rm pol}(\psi)$ from Example \ref{ex.f.pol} 
satisfies Assumption \ref{ass:A.C}.\eqref{ass:A.C.f} with $\cC=\PT$ if and only if $a_{pq}(-x)=\overline{a_{pq}}(x)$ and the space $\X$ can be selected as in Example \ref{ex.A.Schr}, \ref{ex.A.sing}, \ref{ex.A.Schr.n} or  \ref{ex.A.Schr.qp}. For the quasi-periodic case in Example \ref{ex.A.Schr.qp} note that $\psi$ is $k$-quasi-periodic with a given vector $k\in (-\pi,\pi]^d$ if and only if $\overline{\psi(-x)}$ is $k$-quasi-periodic.
The monopolar and dipolar interactions $f_{\rm m}$ and $f_{\rm d}$ from Example \ref{ex.f.mon.dip} and the nonlocal nonlinearity in Example \ref{ex.f.Rubinstein} are also $\PT$-symmetric as it can be easily checked.
\end{example}

We recall simple facts about spectral properties of $\cC$-symmetric operators.

\begin{lemma}\label{lem:A.C}
Let $A$ satisfy Assumption \ref{ass:A.C} and let $\mu_0$ be an eigenvalue of $A$. Then
\begin{enumerate}[\upshape(i)]
\item \label{lem:A.C.i}
$\ov{\mu_0}$ is an eigenvalue of $A$ and if there is a $\cC$-symmetric eigenvector $\psi_0$ corresponding to $\mu_0$, \ie~ 
\begin{equation}\label{psi0.C}
(A-\mu_0) \psi_0 = 0, \quad \cC \psi_0 = \psi_0,
\end{equation}
then $\mu_0 \in \R$. Moreover, if $\mu_0$ is real and simple, then the corresponding eigenvector can be chosen $\cC$-symmetric.
\item if $\mu_0$ is isolated simple and real, then, in addition, the spectral (Riesz) projection $P_0$ of $A$ corresponding to $\mu_0$ commutes with $\cC$, \ie~ 
\begin{equation}\label{P0.C}
P_0 \,  \cC = \cC P_0.
\end{equation}
\end{enumerate}
\end{lemma}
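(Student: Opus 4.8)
The plan is to prove the two parts of Lemma~\ref{lem:A.C} by exploiting the antilinearity of $\cC$ together with its isometry and involutivity.

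\textbf{Part (i).} First I would show $\ov{\mu_0}$ is an eigenvalue: if $(A-\mu_0)\psi_0=0$ for some $\psi_0 \neq 0$, then applying $\cC$ and using Assumption~\ref{ass:A.C}.\eqref{ass:A.C.A} together with antilinearity gives $A\cC\psi_0 = \cC A\psi_0 = \cC(\mu_0\psi_0) = \ov{\mu_0}\cC\psi_0$; since $\cC$ is isometric (hence injective), $\cC\psi_0 \neq 0$, so $\ov{\mu_0}$ is an eigenvalue of $A$. Next, if there is a $\cC$-symmetric eigenvector, $\cC\psi_0=\psi_0$, then $\ov{\mu_0}\psi_0 = \ov{\mu_0}\cC\psi_0 = \cC(\mu_0\psi_0) = \cC A\psi_0 = A\cC\psi_0 = A\psi_0 = \mu_0\psi_0$, and since $\psi_0\neq 0$ this forces $\mu_0 = \ov{\mu_0}$, i.e.\ $\mu_0\in\R$. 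For the converse direction — $\mu_0$ real and simple implies the eigenvector can be chosen $\cC$-symmetric — I would take any eigenvector $\psi_0$ of $A$ for $\mu_0$; then $\cC\psi_0$ is also an eigenvector for $\ov{\mu_0}=\mu_0$ (by the first computation), so by simplicity $\cC\psi_0 = c\psi_0$ for some $c\in\C$. Applying $\cC$ again and using $\cC^2=I$ and antilinearity gives $\psi_0 = \cC(c\psi_0) = \ov{c}\,\cC\psi_0 = \ov{c}c\psi_0 = |c|^2\psi_0$, so $|c|=1$. Writing $c = e^{\ii\theta}$ and replacing $\psi_0$ by $\tilde\psi_0 := e^{\ii\theta/2}\psi_0$, one checks $\cC\tilde\psi_0 = e^{-\ii\theta/2}\cC\psi_0 = e^{-\ii\theta/2}e^{\ii\theta}\psi_0 = e^{\ii\theta/2}\psi_0 = \tilde\psi_0$, giving the desired $\cC$-symmetric eigenvector.

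\textbf{Part (ii).} For the commutation $P_0\cC = \cC P_0$, I would work from the contour-integral representation $P_0 = -\frac{1}{2\pi\ii}\int_{\partial B_\delta(\mu_0)}(A-z)^{-1}\,\dd z$ from \eqref{P0.def}, valid for $\delta$ small since $\mu_0$ is isolated. The key relation is that Assumption~\ref{ass:A.C}.\eqref{ass:A.C.A} implies $\cC(A-z)^{-1} = (A-\ov{z})^{-1}\cC$ for all $z\in\rho(A)$: indeed, from $A\cC = \cC A$ on $\Dom(A)$ one gets $(A-\ov z)\cC\phi = \cC(A-z)\phi$, hence applying to $\phi = (A-z)^{-1}\eta$ yields $(A-\ov z)\cC(A-z)^{-1}\eta = \cC\eta$, i.e.\ $\cC(A-z)^{-1}\eta = (A-\ov z)^{-1}\cC\eta$ (note $\ov z\in\rho(A)$ because $\ov{\sigma(A)}=\sigma(A)$ by the symmetry, or simply because $\rho(A)$ is symmetric under conjugation here). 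Then, for any $\eta\in\H$, pulling $\cC$ through the integral and using its antilinearity (which conjugates the scalar $\dd z$ and reverses orientation of the circle $\partial B_\delta(\mu_0)$, but $\mu_0\in\R$ so $\ov{\mu_0}=\mu_0$ and the contour maps to itself), I would compute
\begin{equation*}
\cC P_0 \eta = -\frac{1}{2\pi\ii}\,\ov{\int_{\partial B_\delta(\mu_0)}}\,\cC(A-z)^{-1}\eta\,\dd\ov z = -\frac{1}{2\pi\ii}\int_{\partial B_\delta(\mu_0)}(A-w)^{-1}\cC\eta\,\dd w = P_0\,\cC\eta,
\end{equation*}
after the change of variables $w = \ov z$ (which maps $\partial B_\delta(\mu_0)$ to itself, with the conjugation of $\dd z$ compensating the orientation reversal so that the sign works out). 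I expect the main obstacle here to be bookkeeping the sign and orientation correctly when the antilinear $\cC$ is pulled through the complex contour integral; this is exactly where $\mu_0\in\R$ (not merely $\sigma(A)$ symmetric) is used, so that the contour of integration is genuinely invariant. Alternatively, and perhaps more cleanly, one can avoid the contour integral altogether: since $P_0 = \langle\cdot,\psi_0^*\rangle\psi_0$ with $\psi_0$ chosen $\cC$-symmetric by part (i), and since one checks from $A^*\cC = \cC A^*$ (which follows by taking adjoints in Assumption~\ref{ass:A.C}.\eqref{ass:A.C.A}, using that $\cC$ is antiunitary so $\cC^* = \cC^{-1} = \cC$) that $\psi_0^*$ can likewise be chosen $\cC$-symmetric, the identity $\cC P_0\eta = \cC(\langle\eta,\psi_0^*\rangle\psi_0) = \ov{\langle\eta,\psi_0^*\rangle}\,\cC\psi_0 = \langle\cC\eta,\cC\psi_0^*\rangle^{-\!}\cdots$ — more precisely $\ov{\langle\eta,\psi_0^*\rangle} = \langle\psi_0^*,\eta\rangle = \langle\cC\eta,\cC\psi_0^*\rangle = \langle\cC\eta,\psi_0^*\rangle$ by the isometry property of $\cC$ — gives $\cC P_0\eta = \langle\cC\eta,\psi_0^*\rangle\psi_0 = P_0\cC\eta$ directly. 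I would likely present this second, projection-formula argument as the main proof and mention the resolvent argument as a remark, since it sidesteps the contour-orientation subtleties entirely; the only care needed is verifying that $\psi_0^*$ inherits $\cC$-symmetry, which follows from the normalization $\langle\psi_0,\psi_0^*\rangle = 1$ being preserved under $\cC$ together with simplicity of $\ov{\mu_0} = \mu_0$ as an eigenvalue of $A^*$.
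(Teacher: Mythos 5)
Your proposal is correct. Part (i) follows essentially the paper's route: the same computation for $\ov{\mu_0}$, the same two-line argument for realness, and an equivalent construction of the symmetric eigenvector (the paper forms $\widetilde\psi_0+\cC\widetilde\psi_0$ or $\ii(\widetilde\psi_0-\cC\widetilde\psi_0)$, you extract the phase $c=e^{\ii\theta}$ from $\cC\psi_0=c\psi_0$ and absorb half of it; both rest on simplicity in the same way).

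For part (ii) your primary argument genuinely differs from the paper's. The paper proves $\cC(A-z)^{-1}=(A-\ov z)^{-1}\cC$ and pushes $\cC$ through the contour integral \eqref{P0.def}, parametrizing $\partial B_\delta(\mu_0)$ explicitly so that the conjugation of $\dd z$ and the orientation reversal cancel — exactly the bookkeeping you flag, and exactly where $\mu_0\in\R$ enters. Your rank-one argument via $P_0=\langle\cdot,\psi_0^*\rangle\psi_0$ sidesteps the contour entirely, at the cost of two extra facts: that $A^*$ also commutes with $\cC$ (your adjoint computation using $\langle\cC\phi,\cC\psi\rangle=\langle\psi,\phi\rangle$ is sound), and that the \emph{normalized} $\psi_0^*$ is $\cC$-symmetric. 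On the latter, be careful to phrase it as a derivation rather than a choice: $\psi_0^*$ is pinned by $\langle\psi_0,\psi_0^*\rangle=1$, so you cannot rescale it freely; but from $\cC\psi_0^*=c\psi_0^*$ with $|c|=1$ one gets $1=\langle\cC\psi_0^*,\cC\psi_0\rangle=c\,\ov{\langle\psi_0,\psi_0^*\rangle}=c$, which forces $c=1$. With that spelled out your proof is complete. The trade-off is worth noting: the paper's contour argument would still give $P_0\cC=\cC P_0$ for an isolated real eigenvalue of higher multiplicity, whereas your argument leans on simplicity both for the rank-one formula and for the symmetry of $\psi_0$ and $\psi_0^*$; within the hypotheses of the lemma both are perfectly adequate.
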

\begin{proof}
\begin{enumerate}[(i)]
\item Let $\widetilde \psi_0 \in \Dom(A)$, $\widetilde \psi_0 \neq 0$ satisfy $(A-\mu_0) \widetilde \psi_0 =0$.  
Clearly, by the $\cC$-symmetry of $A$, $A \cC \widetilde \psi_0 = \cC A \widetilde \psi_0 = \ov{\mu_0} \, \cC \widetilde \psi_0$, hence $\ov{\mu_0}$ is an eigenvalue of $A$. 

Next, let $\psi_0$ be as in \eqref{psi0.C}. Then 
$$\mu_0 \psi_0 =A\cC\cC\psi_0=\cC A\psi_0 = \cC\mu_0 \psi_0=\overline{\mu_0}\psi_0,$$
thus $\mu_0 \in \R$.

Finally, let again $\widetilde \psi_0$ satisfy $(A-\mu_0)\widetilde \psi_0 = 0$ and $\mu_0$ be simple. It follows from \eqref{ass:A.C.A.eq} that $A \cC \widetilde \psi_0 = \mu_0 \cC \widetilde \psi_0$, hence both of $\widetilde \psi_0 \pm \cC \widetilde \psi_0$ are eigenvectors of $A$. However, one of them must be 0 since $\mu_0$ is simple. Note that $\widetilde \psi_0 + \cC \widetilde \psi_0$ is automatically $\cC$-symmetric and if it is zero, then we take $\ii (\widetilde \psi_0 - \cC \widetilde \psi_0)$.
\item Since $A$ commutes with $\cC$, we have $\cC(A-z)^{-1}= (A-\ov z)^{-1} \cC$. Hence, for the spectral projection $P_0$, \cf~\eqref{P0.def},
we get
\begin{align*}
\cC P_0 & = - \cC \frac 1 {2\pi } \int_0^{2\pi} (A-(\mu_0 + \delta e^{\ii \varphi}))^{-1} \delta e^{\ii \varphi} \dd \varphi 
\\
& = - \frac 1 {2\pi } \int_0^{2\pi} (A-(\mu_0 + \delta e^{-\ii \varphi}))^{-1} \delta e^{-\ii \varphi} \dd \varphi \, \cC \\
& = - \frac 1 {2\pi } \int_{-2\pi}^0(A-z)^{-1} \dd z\, \cC= P_0 \, \cC.
  \qedhere
\end{align*}
\end{enumerate}
\end{proof}

\begin{remark}[Stability of real simple eigenvalues]\label{rem:sim+real}

For a $\cC$-symmetric family $A(\gamma)$, \cf~Section \ref{subsubsec:hol.fam}, isolated simple real eigenvalues of $A(0)$ remain isolated simple and real for small $|\gamma|$ since eigenvalues are analytic in $\gamma$ and always form complex conjugated pairs, \cf~Lemma \ref{lem:A.C}.\eqref{lem:A.C.i}.

Notice that many examples in literature as well as in Section \ref{sec:appl} can be in fact viewed as a holomorphic family with $A(0)=A(0)^*$, thus $\sigma(A(0)) \subset \R$. A typical behavior of real eigenvalues as $\gamma$ is increased, \ie~the non-symmetric part of the operator becomes stronger, is a tendency to merge and create a complex conjugated pair, see \eg~Figure \ref{Fig:SHO6_BD_gam} (a), (b).

Finally, we remark that there are also $\cC$-symmetric (actually with $\cC=\PT$) operators with real spectrum that are not small perturbations of a selfadjoint operator, \eg~the celebrated imaginary cubic oscillator $-\partial_x^2 + \ii x^3$ acting in $L^2(\R)$, \cf~\cite{Bender-1998-80,Shin-2002-229}.
\end{remark}

The following theorem shows that real simple eigenvalues $\mu_0$ of an operator $A$ persist to be \emph{real nonlinear eigenvalues} $\mu(\varepsilon)$ of $A-\eps f$ for all $\varepsilon$ small enough if $A$ and $f$ possess an \emph{antilinear} symmetry. Moreover, starting with a $\cC$-symmetric linear eigenfunctions $\psi_0$ of $A$ (the existence of which is guaranteed by Lemma \ref{lem:A.C}), the nonlinear eigenfunctions $\psi(\varepsilon)$ are also $\cC$-symmetric.

\begin{theorem}\label{thm.sym}
Let $A$ and $f$ satisfy Assumptions \ref{ass:Af} and \ref{ass:A.C}. Suppose in addition that $\mu_0 \in \R$ and choose the corresponding eigenvector $\psi_0$ as $\cC$-symmetric, \ie~ $\cC \psi_0 = \psi_0$.
Then, for all $\varepsilon \in (-\varepsilon_0,\varepsilon_0)$, the nonlinear eigenpair $(\mu, \psi)$ from Theorem \ref{thm.NL-fixpt} satisfies $\mu \in \R$ and $\cC \psi = \psi$. 
\end{theorem}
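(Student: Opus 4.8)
The plan is to revisit the fixed point construction of Theorem~\ref{thm.NL-fixpt} and verify that each iterate stays $\cC$-symmetric in $\psi$ and real in $\mu$; since the fixed points are unique, the limit then inherits these properties. The key observation is that the whole construction is organized around $\psi_0$, $\psi_0^*$, the projection $P_0$, and the operator $(Q_0(A-\mu_0)Q_0)^{-1}$, all of which interact nicely with $\cC$: we have $\cC\psi_0=\psi_0$ by hypothesis, $P_0\cC=\cC P_0$ by Lemma~\ref{lem:A.C}(ii) (so also $Q_0\cC=\cC Q_0$), and from $A\cC=\cC A$ together with $\mu_0\in\R$ one gets $\cC(Q_0(A-\mu_0)Q_0)=(Q_0(A-\mu_0)Q_0)\cC$ on the relevant domain, hence $\cC$ commutes with the bounded inverse on $Q_0\H$. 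One should also record the compatibility of $\cC$ with the functional $\langle\,\cdot\,,\psi_0^*\rangle$: since $A^*\cC=\cC A^*$ is equivalent to $A\cC=\cC A$ for isometric involutive $\cC$ (indeed $\langle A\cC\phi,\eta\rangle=\langle\cC A\phi,\eta\rangle=\langle\cC\eta,A\phi\rangle=\langle A^*\cC\eta,\phi\rangle$ pairs up correctly), the vector $\cC\psi_0^*$ is an eigenvector of $A^*$ at $\ov{\mu_0}=\mu_0$, so by simplicity $\cC\psi_0^*=c\psi_0^*$ with $|c|=1$; normalizing and using $\langle\cC\phi,\cC\psi\rangle=\langle\psi,\phi\rangle$ gives $\langle\cC\eta,\psi_0^*\rangle=\langle\cC\eta,\cC\psi_0^*\rangle=\langle\psi_0^*,\eta\rangle=\ov{\langle\eta,\psi_0^*\rangle}$ after absorbing the phase into $\psi_0^*$ (the phase can be fixed since $\langle\psi_0,\psi_0^*\rangle=1$ and $\cC\psi_0=\psi_0$ force $c=1$).

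\textbf{Key steps.} First I would establish the three commutation facts above as a short lemma-style paragraph. Second, from $\nu=-\langle f(\psi_0),\psi_0^*\rangle$ and $\cC f(\psi_0)=f(\cC\psi_0)=f(\psi_0)$ I get $\ov\nu=\ov{\langle f(\psi_0),\psi_0^*\rangle}=\langle\cC f(\psi_0),\psi_0^*\rangle\cdot(-1)^{-1}$-type manipulation, namely $\ov\nu=-\langle \cC f(\psi_0),\psi_0^*\rangle=-\langle f(\psi_0),\psi_0^*\rangle\,$ via the pairing identity above, so $\nu\in\R$. Similarly, applying $\cC$ to \eqref{E:phi_thm} and using that the right-hand side $\nu\psi_0+f(\psi_0)$ is $\cC$-fixed, uniqueness of $\phi$ in $Q_0\Dom(A)$ forces $\cC\phi=\phi$. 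Third, I would run the two fixed point iterations of the proof of Theorem~\ref{thm.NL-fixpt} on the invariant sets: the map $G$ from \eqref{chi.eq} sends $\{\chi\in B_{r_2\eps^2}:\cC\chi=\chi\}$ into itself, because $R(\chi)$ in \eqref{chi_eq} is built from $\cC$-fixed data ($\nu,\sigma$ real, $\phi,\chi$ symmetric, $Q_0(f(\psi)-f(\psi_0))$ symmetric once $\psi=\psi_0+\eps\phi+\chi$ is symmetric) and $(Q_0(A-\mu_0)Q_0)^{-1}$ commutes with $\cC$; and the map $S$ from \eqref{sigma.eq_fp} sends $\{\sigma\in\R:|\sigma|\le r_1\}$ into $\R$ by the same pairing computation as for $\nu$, provided the associated $\chi(\sigma)$ is symmetric, which the nested iteration guarantees. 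Since $\cC$ is isometric, the contraction estimates are unchanged, so the unique fixed points $\chi$ and $\sigma$ obtained in Theorem~\ref{thm.NL-fixpt} — being limits of symmetric/real iterates in closed subsets — are themselves symmetric and real. Fourth, assembling \eqref{muf.ans} gives $\cC\psi=\cC\psi_0+\eps\cC\phi+\cC\chi=\psi_0+\eps\phi+\chi=\psi$ and $\mu=\mu_0+\eps\nu+\eps^2\sigma\in\R$.

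\textbf{Main obstacle.} The routine part is the bookkeeping showing $R(\chi)$ and the scalar map $S$ preserve symmetry/realness; the one genuinely delicate point is the correct handling of the \emph{antilinearity} of $\cC$ in the adjoint pairing, i.e. verifying that $\langle\cC\eta,\psi_0^*\rangle=\ov{\langle\eta,\psi_0^*\rangle}$ with the right phase normalization of $\psi_0^*$, since the realness of $\nu$ and of each $S$-iterate rests entirely on this. Once that identity is pinned down, together with $P_0\cC=\cC P_0$ from Lemma~\ref{lem:A.C}, the rest follows by restricting the already-established fixed point arguments to the closed $\cC$-invariant subsets and invoking uniqueness. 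A secondary technical check is that $\cC$ maps $\X$ to $\X$ with $\|\cC\cdot\|_\X=\|\cdot\|_\X$ (or at least boundedly), which is immediate for the concrete spaces of Section~\ref{subsec:Schr.op} and can be assumed as part of Assumption~\ref{ass:A.C} acting on $\X$, so that $B_{r_2\eps^2}\cap\{\cC\chi=\chi\}$ is a nonempty closed subset of $Q_0\X$ on which the contraction mapping theorem applies.
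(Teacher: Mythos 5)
Your proposal is correct and follows essentially the same route as the paper: establish $\nu\in\R$ and $\cC\phi=\phi$ from the symmetry of the data and uniqueness, then run the fixed point iterations for $\chi$ and $\sigma$ on the $\cC$-invariant (respectively real) closed subsets, using $Q_0\cC=\cC Q_0$ and the commutation of $\cC$ with $(Q_0(A-\mu_0)Q_0)^{-1}$, and invoke uniqueness of the fixed points. The only cosmetic difference is that you derive the realness of $\nu$ and of $S(\sigma)$ from the pairing identity $\langle\cC\eta,\psi_0^*\rangle=\ov{\langle\eta,\psi_0^*\rangle}$ (obtained via $\cC\psi_0^*=\psi_0^*$), whereas the paper obtains the same conclusion by applying $\cC$ to $-\nu\psi_0=P_0 f(\psi_0)$ and using $P_0\,\cC=\cC P_0$ from Lemma~\ref{lem:A.C}; the two arguments are equivalent.
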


\begin{proof}
Recall that	$\mu =\mu_0 + \varepsilon\nu + \varepsilon^2\sigma$, $\psi=\psi_0+\varepsilon\phi+\chi$ and $P_0$ is the spectral projection of $A$ corresponding to the eigenvalue $\mu_0$. 

In the first step, we show that $\nu$ is real and $\cC \phi = \phi$. The spectral projection $P_0$ can be written as $P_0= \langle \cdot ,\psi_0^* \rangle \psi_0$, where $\psi_0^*$ is as in Theorem \ref{thm.NL-fixpt}, therefore $- \nu \psi_0 = P_0(f(\psi_0))$, \cf~\eqref{nu.def}. Using symmetries \eqref{ass:A.C.f.eq}, \eqref{psi0.C} and \eqref{P0.C}, we obtain
\begin{equation}\label{nu.real}
\begin{aligned}
\cC(- \nu \psi_0) = \cC P_0(f(\psi_0)) = P_0(f( \cC \psi_0)) = P_0(f(\psi_0))=- \nu \psi_0,
\end{aligned}
\end{equation}
thus $- \ov \nu \psi_0 = - \nu \psi_0$, hence $\nu \in \R$. Applying $\cC$ to equation \eqref{E:phi_thm}, using the symmetries of $A$, $f$ and $\psi_0$, \cf~\eqref{ass:A.C.A.eq}, \eqref{ass:A.C.f.eq} and \eqref{psi0.C}, we get
\begin{equation}
(A - \mu_0) \cC \phi = \nu \psi_0 + f(\psi_0).
\end{equation}
Because $P_0 \, \cC \phi = \cC P_0 \phi$ by \eqref{P0.C} and because the solution of \eqref{E:phi_thm} with $P_0 \phi=0$ is unique, we have $\cC \phi = \phi$.

Let us now work on $\sigma$. Since $\sigma$ is the solution of the fixed point problem $\sigma =S(\sigma)$ with $S(\sigma) = -\frac 1\varepsilon \langle f(\psi(\sigma))-f(\psi_0),\psi_0^*\rangle$, where $\psi(\sigma)=\psi_0+\varepsilon\phi +\chi(\sigma)$ and $\chi$ solves the fixed point equation $\chi = G(\chi;\sigma)$, it remains to show that the coupled fixed point problem preserves the realness of $\sigma$ and the $\cC$-symmetry of $\chi$. 

Given $\sigma\in \R$ (with $|\sigma|\leq r_1$), we prove that 
\begin{equation}\label{E.chi-sym}
\cC \chi = \chi  \quad \Longrightarrow \quad  \cC G(\chi;\sigma) = G(\chi;\sigma).
\end{equation}
As $G(\chi;\sigma) = (Q_0(A-\mu_0)Q_0)^{-1} R(\chi;\sigma),$
we first show the analogous property for $R$ and then the commutation of $(Q_0(A-\mu_0)Q_0)^{-1}$ with $\cC$.
Since $Q_0 = I - P_0$, we get from \eqref{P0.C} that $Q_0 \,  \cC = \cC \, Q_0$ as well.
Also note that for $\cC\chi=\chi$ the full solution $\psi=\psi_0+\phi+\chi$ is $\cC$-symmetric.
Hence, for $\varepsilon, \sigma \in \R$ and $\cC \chi = \chi$,
\begin{equation}
\begin{aligned}
\cC R(\chi;\sigma) &= \varepsilon \big( (\nu+\varepsilon\sigma)(\chi+\varepsilon\phi) +Q_0 \cC (f(\psi)-f(\psi_0))\big)\\
&=\varepsilon \big( (\nu+\varepsilon\sigma)(\chi+\varepsilon\phi) +Q_0 (f( \cC\psi)-f( \cC\psi_0))\big)=  R(\chi;\sigma).
\end{aligned}
\end{equation}
To prove \eqref{E.chi-sym}, it remains to show that $\cC (Q_0(A-\mu_0)Q_0)^{-1}=(Q_0(A-\mu_0)Q_0)^{-1}\cC$. To this end, take any $\varphi \in Q_0 \H$, then 
\begin{equation}\label{E.chi-sym.2}
\begin{aligned}
& \cC (Q_0(A-\mu_0)Q_0)^{-1} \varphi  
\\
& \quad = (Q_0(A-\mu_0)Q_0)^{-1} (Q_0(A-\mu_0)Q_0) \cC (Q_0(A-\mu_0)Q_0)^{-1} \varphi
\\
& \quad = (Q_0(A-\mu_0)Q_0)^{-1} \cC \varphi.
\end{aligned}
\end{equation}
%

Property \eqref{E.chi-sym} implies that the fixed point of $\chi=G(\chi)$ in $B_{r_2 \varepsilon^2}$, \cf~\eqref{Brg2.def}, lies in  $B_{r_2 \varepsilon^2} \cap \{\eta\in \H: \cC \eta = \eta\}$.

Finally, we need to show that
\begin{equation}
\cC \chi = \chi  \quad \Longrightarrow \quad S(\sigma) \in \R.
\end{equation} 
Once again, because $\cC \chi = \chi$ implies $\cC\psi=\psi$, we get by a straightforward manipulation analogous to \eqref{nu.real},
\begin{equation}
\begin{aligned}
\cC(-\varepsilon S(\sigma) \psi_0) &= \cC P_0( f(\psi)- f(\psi_0)) 
= P_0( f( \cC \psi)- f(\cC\psi_0)) 
= -\varepsilon S(\sigma) \psi_0,
\end{aligned}
\end{equation}
hence $S(\sigma) \in \R$ by the same arguments as below \eqref{nu.real}.
\end{proof}

\subsection{Linear symmetries}

The operator $A$ and the nonlinearity $f$ may possess also a \emph{linear} symmetry $\cS$. Then (for simple eigenvalues) symmetry or antisymmetry of the nonlinear 
eigenfunctions $\psi(\varepsilon)$ is preserved as well, however, the preservation of the realness of $\mu$ cannot be concluded based solely on a linear symmetry.

\begin{ass}[Linear symmetries of $A$ and $f$]\label{ass:A.S}
Let $A$ be a densely defined and closed operator in a Hilbert space $\H$ and let $\cS$ be a \emph{linear}, selfadjoint and involutive operator, \ie~for all $\phi,\psi \in \H$ and $\lambda \in \C$, $\cS(\lambda \phi+\psi)=\lambda \cS\phi+\cS\psi$, $\langle \cS \phi, \psi \rangle = \langle \phi, \cS\psi \rangle$ and $\cS^2 = I$, such that
\begin{enumerate}[(a)]
\item \label{ass.A.S} 
for all $\psi \in \Dom(A)$,  
\begin{equation}\label{A.Ssym}
\cS \psi \in \Dom(A) \quad \text{and} \quad A \cS \psi = \cS A \psi,
\end{equation}
\item \label{ass.f.S} 
for all $\psi \in \X$, where $\X$ is the subspace from Assumption \ref{ass:Af}.\eqref{ass:Af.Lip},
\begin{equation}\label{E:f_S}
\cS f(\psi) = f(\cS \psi) \quad \text{and} \quad f(\pm \psi) = \pm f(\psi).
\end{equation}
\end{enumerate}
\end{ass}

\begin{lemma}\label{lem:A.S}
Let $A$ satisfy Assumption \ref{ass:A.S} and let $\mu_0$ be a simple eigenvalue of $A$. Then
\begin{enumerate}[\upshape(i)]
\item the eigenvector $\psi_0$, $\|\psi_0\|=1$, corresponding to the eigenvalue $\mu_0$ is either $\cS$-symmetric or $\cS$-antisymmetric, \ie~ 
\begin{equation}\label{psi0.S}
\cS \psi_0 = \s \psi_0, \quad \s = 1 \text{ or } \s = -1.
\end{equation}
\item the spectral projection $P_0$ of $A$ corresponding to $\mu_0$ commutes with $\cS$, \ie~ 

\begin{equation}\label{P0.S}
P_0 \,  \cS = \cS  P_0.
\end{equation}
\end{enumerate}
\end{lemma}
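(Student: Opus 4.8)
The plan is to mirror the structure of Lemma~\ref{lem:A.C}, replacing the antilinear $\cC$ by the linear $\cS$; the arguments are in fact slightly simpler because no complex conjugation intervenes.

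For part~(i), I would start from any eigenvector $\widetilde\psi_0 \neq 0$ with $(A-\mu_0)\widetilde\psi_0 = 0$. Applying \eqref{A.Ssym} gives $A\cS\widetilde\psi_0 = \cS A\widetilde\psi_0 = \mu_0 \cS\widetilde\psi_0$, so $\cS\widetilde\psi_0$ is again an eigenvector for the \emph{same} eigenvalue $\mu_0$. Hence both $\widetilde\psi_0 + \cS\widetilde\psi_0$ and $\widetilde\psi_0 - \cS\widetilde\psi_0$ lie in $\Ker(A-\mu_0)$, which is one-dimensional because $\mu_0$ is simple. Since $\cS(\widetilde\psi_0+\cS\widetilde\psi_0) = \cS\widetilde\psi_0 + \cS^2\widetilde\psi_0 = \widetilde\psi_0 + \cS\widetilde\psi_0$ and likewise $\cS(\widetilde\psi_0-\cS\widetilde\psi_0) = -(\widetilde\psi_0-\cS\widetilde\psi_0)$, these two vectors are respectively $\cS$-symmetric and $\cS$-antisymmetric. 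At least one of them is nonzero (their sum is $2\widetilde\psi_0 \neq 0$); if both are nonzero they would be two linearly independent eigenvectors, contradicting simplicity, so exactly one is nonzero and spans $\Ker(A-\mu_0)$. Normalizing it to unit norm yields $\psi_0$ with $\cS\psi_0 = \s\psi_0$, $\s\in\{1,-1\}$, which is \eqref{psi0.S}.

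For part~(ii), I would use the resolvent representation of the Riesz projection exactly as in the proof of Lemma~\ref{lem:A.C}.(ii), but now without the variable change forced by antilinearity. From \eqref{A.Ssym} one gets $\cS(A-z) = (A-z)\cS$ on $\Dom(A)$ for every $z\in\rho(A)$, hence $\cS(A-z)^{-1} = (A-z)^{-1}\cS$ (note $z$, not $\ov z$, since $\cS$ is linear). Then, writing $P_0 = -\frac{1}{2\pi\ii}\int_{\partial B_\delta(\mu_0)}(A-z)^{-1}\,\dd z$ as in \eqref{P0.def} with $\delta>0$ small enough that $B_\delta(\mu_0)$ isolates $\mu_0$, linearity of $\cS$ lets it pass through the contour integral and commute with each resolvent, giving $\cS P_0 = -\frac{1}{2\pi\ii}\int_{\partial B_\delta(\mu_0)}\cS(A-z)^{-1}\,\dd z = -\frac{1}{2\pi\ii}\int_{\partial B_\delta(\mu_0)}(A-z)^{-1}\cS\,\dd z = P_0\cS$, which is \eqref{P0.S}.

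There is no real obstacle here; the one point requiring a word of care is the dichotomy in part~(i) — one must argue that the symmetric and antisymmetric parts cannot both be nonzero, which is precisely where simplicity of $\mu_0$ enters, and that their sum recovers $\widetilde\psi_0$ so that not both vanish. The commutation in part~(ii) is routine once one observes that, in contrast to the antilinear case, $\cS$ maps the resolvent at $z$ to the resolvent at $z$ (not at $\ov z$), so the contour is unchanged and no reparametrization of $\partial B_\delta(\mu_0)$ is needed.
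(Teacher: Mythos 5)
Your proof is correct and is essentially the paper's own argument: the paper proves this lemma by simply noting that "the reasoning is analogous to the one in the proof of Lemma \ref{lem:A.C}", and you have spelled out exactly that analogy, including the two points where linearity of $\cS$ simplifies matters (namely that $\cS\widetilde\psi_0$ is an eigenvector for $\mu_0$ itself rather than $\ov{\mu_0}$, and that no reparametrization of the contour is needed in the Riesz-projection computation). The dichotomy via the symmetric and antisymmetric parts $\widetilde\psi_0\pm\cS\widetilde\psi_0$ and the use of simplicity to exclude that both are nonzero is precisely the intended adaptation.
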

\begin{proof}
The reasoning is analogous to the one in the proof of Lemma \ref{lem:A.C}.
\end{proof}

\begin{theorem}\label{thm.sym.lin}
Let $A$ and $f$ satisfy Assumptions \ref{ass:Af} and \ref{ass:A.S}. 
Then the symmetry of the eigenvector $\psi_0$, corresponding to the simple eigenvalue $\mu_0$, \ie~
$\cS \psi_0 = \s \psi_0$ with $\s=1$ or $\s = -1$, is preserved for the nonlinear eigenfunction $\psi$ from Theorem \ref{thm.NL-fixpt}, \ie~$\cS \psi = \s \psi$ for all $\varepsilon \in (-\varepsilon_0,\varepsilon_0)$ with the same $\s$ as for $\psi_0$. 
\end{theorem}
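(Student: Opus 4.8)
The plan is to mimic the structure of the proof of Theorem~\ref{thm.sym}, but with the antilinear symmetry $\cC$ replaced by the linear symmetry $\cS$ and with $\cC$-symmetry replaced by the eigenvalue equation $\cS\eta=\s\eta$ for the fixed $\s\in\{1,-1\}$ determined by $\psi_0$ (Lemma~\ref{lem:A.S}.(i)). I would first record the basic commutation facts: since $\cS$ commutes with $A$ (Assumption~\ref{ass:A.S}.\eqref{ass.A.S}) it commutes with the resolvent $(A-z)^{-1}$ and hence, by the contour-integral formula \eqref{P0.def}, with $P_0$ and $Q_0$ (Lemma~\ref{lem:A.S}.(ii)); moreover $\cS$ commutes with $Q_0(A-\mu_0)Q_0$ and, by the same two-sided-inverse trick as in \eqref{E.chi-sym.2}, with its inverse on $Q_0\H$. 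Then I track the expansion $\mu=\mu_0+\eps\nu+\eps^2\sigma$, $\psi=\psi_0+\eps\phi+\chi$ through the fixed point construction of Theorem~\ref{thm.NL-fixpt}.

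The first substantive step is to show $\cS\phi=\s\phi$. From $-\nu\psi_0=P_0 f(\psi_0)$ and \eqref{P0.S}, \eqref{psi0.S}, \eqref{E:f_S} we get $\cS(-\nu\psi_0)=P_0 f(\cS\psi_0)=P_0 f(\s\psi_0)=\s P_0 f(\psi_0)=-\s\nu\psi_0$, so $\cS(\nu\psi_0)=\s\nu\psi_0$, which is automatic and gives no information on $\nu$ (indeed $\nu$ need not be real—this is the point where the linear case differs from the antilinear one). Applying $\cS$ to \eqref{E:phi_thm} and using \eqref{A.Ssym}, \eqref{E:f_S}, \eqref{psi0.S} yields $(A-\mu_0)\cS\phi=\nu\cS\psi_0+f(\cS\psi_0)=\s(\nu\psi_0+f(\psi_0))$, hence $(A-\mu_0)(\s^{-1}\cS\phi-\phi)=0$; since $\cS$ commutes with $P_0$ we have $P_0(\s^{-1}\cS\phi)=\s^{-1}\cS P_0\phi=0$, so uniqueness of the solution of \eqref{E:phi_thm} in $Q_0\Dom(A)$ forces $\s^{-1}\cS\phi=\phi$, i.e.\ $\cS\phi=\s\phi$.

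Next I handle the coupled fixed point system. I claim that for any $\sigma\in\C$ with $|\sigma|\le r_1$, the implication $\cS\chi=\s\chi\ \Rightarrow\ \cS G(\chi;\sigma)=\s G(\chi;\sigma)$ holds. Note that $\cS\chi=\s\chi$ together with $\cS\psi_0=\s\psi_0$ and $\cS\phi=\s\phi$ gives $\cS\psi=\s\psi$; since $\s^2=1$ and $f(\pm\psi)=\pm f(\psi)$, we get $f(\cS\psi)=f(\s\psi)=\s f(\psi)$, hence $\cS f(\psi)=\s f(\psi)$ by \eqref{E:f_S}, and likewise $\cS f(\psi_0)=\s f(\psi_0)$. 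Applying $\cS$ to $R(\chi;\sigma)=\eps[(\nu+\eps\sigma)(\chi+\eps\phi)+Q_0(f(\psi)-f(\psi_0))]$ and using $\cS Q_0=Q_0\cS$ then yields $\cS R(\chi;\sigma)=\s R(\chi;\sigma)$; composing with the inverse of $Q_0(A-\mu_0)Q_0$, which commutes with $\cS$, gives $\cS G(\chi;\sigma)=\s G(\chi;\sigma)$. Since $\eta\mapsto\cS\eta$ maps $B_{r_2\eps^2}$ into itself (as $\|\cS\eta\|_\X$ is controlled because $\cS$ commutes with all powers of $A$), the unique fixed point of $\chi=G(\chi;\sigma)$ in $B_{r_2\eps^2}$ must satisfy $\cS\chi=\s\chi$ — otherwise $\s^{-1}\cS\chi$ would be a second fixed point. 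Finally, plugging $\psi(\sigma)$ with $\cS\chi(\sigma)=\s\chi(\sigma)$, hence $\cS\psi=\s\psi$, into $S(\sigma)$ and unwinding the resulting $\sigma$-fixed point gives a solution $\sigma$ for which $\cS\psi=\s\psi$ holds throughout; thus the nonlinear eigenfunction $\psi=\psi_0+\eps\phi+\chi$ satisfies $\cS\psi=\s\psi$ for all $\eps\in(-\eps_0,\eps_0)$.

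The main obstacle is essentially bookkeeping rather than a genuine difficulty: one must be careful that the uniqueness statements in Theorem~\ref{thm.NL-fixpt} are invoked in the correct (symmetry-invariant) balls, and one must verify that $\cS$ preserves the defining sets $B_{r_2\eps^2}\subset Q_0\X$ and $\{|\sigma|\le r_1\}$, which follows from $\cS$ commuting with $Q_0$ and with every $A^k$ (so that $\|\cS\eta\|_n=\|\eta\|_n$, hence $\cS$ is a $\|\cdot\|_\X$-isometry up to the equivalence \eqref{ass.f.1.gr}, and in any case maps $B_{r_2\eps^2}$ onto itself since $|\s|=1$). No claim about $\mu\in\R$ is made or can be made, consistent with the remark preceding the statement.
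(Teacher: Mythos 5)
Your proposal is correct and follows essentially the same route as the paper, which itself only records the key points (show $\cS\phi=\s\phi$, then run the fixed point iteration in $B_{r_2\eps^2}\cap\{\eta:\cS\eta=\s\eta\}$) and declares the rest analogous to Theorem~\ref{thm.sym}; your observation that $\nu$ carries no realness information here is exactly the right distinction from the antilinear case. The only phrasing to adjust is the uniqueness step: rather than arguing that $\s^{-1}\cS\chi$ would be a second fixed point in $B_{r_2\eps^2}$ (which requires both the equivariance $G(\s\cS\chi)=\s\cS G(\chi)$ for non-symmetric $\chi$ and that $\cS$ preserves the $\|\cdot\|_{\X}$-ball, the latter not being guaranteed by \eqref{ass.f.1.gr} alone, which gives only norm equivalence up to the constants $k_1,k_2$), apply Banach's theorem directly on the nonempty closed $G$-invariant subset $B_{r_2\eps^2}\cap\{\eta:\cS\eta=\s\eta\}$ and invoke uniqueness in the larger ball, as the paper does.
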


\begin{proof}
Similarly to the proof of Theorem \ref{thm.sym}, we show that $\cS \phi = \s \phi$ and search for the fixed point in a subset $B_{r_2 \varepsilon^2} \cap \{\eta\in \H: \cS \eta = \s \eta\}.$ Note that this is allowed since, similarly as in \eqref{E.chi-sym}--\eqref{E.chi-sym.2},
\begin{align*}
\cS \chi &= \s \chi  \quad \Longrightarrow \quad  \cS G(\chi;\sigma) = \s G(\chi;\sigma). 
\qedhere
\end{align*}
\end{proof}

\section{Applications}
\label{sec:appl}

\subsection{Toy model}

Let $\H = L^2((-r,r))$ with $r =\pi/2$, $\gamma \in \C$ and let $A_\gamma$ be the m-sectorial operator associated with the form
\begin{equation}
\begin{aligned}
a_\gamma[\psi] := \|\psi'\|^2 + \gamma 
\left(
|\psi(\pi/2)|^2 - |\psi(-\pi/2)|^2
\right), 
\quad
\Dom(a_\gamma) := H^1((-\pi/2,\pi/2)). 
\end{aligned}
\end{equation} 
Since $V_1=0$, we take $\X=\H^1_0((-\pi/2,\pi/2))=H^1((-\pi/2,\pi/2))$, \cf~Example \ref{ex.A.sing}. Note that the inequality, valid for every $\varepsilon >0$,
\begin{equation}\label{delta.ineq}
\|\psi\|_{L^{\infty}}^2 \leq C \|\psi\|_{H^1} \|\psi\| \leq \varepsilon \|\psi\|_{H^1}^2 + C(\varepsilon) \|\psi\|^2 , 
\end{equation}
implies that condition \eqref{w.rb} is satisfied. By standard arguments, \cf~\cite[Ex.VI.2.16]{Kato-1966}, 
\begin{equation}\label{E:A_sing2}
\begin{aligned}
A_\gamma  & = -\partial_x^2, 
\\
\Dom(A_\gamma)  &= \{\psi \in H^2((-\pi/2,\pi/2)) \, : \, \psi'(\pm \pi/2) + \gamma \psi(\pm \pi/2)  = 0\}, \end{aligned}
\end{equation}
and $(A_\gamma)^*  = A_{\ov \gamma}$, \ie~$A_\gamma$ is a non-selfadjoint (unless $\gamma \in \R$) perturbation of the Neumann Laplacian on $(-\pi/2,\pi/2)$ in boundary conditions; a collection of spectral results for $A_\gamma$ can be found \eg~in \cite{Krejcirik-2006-39,Krejcirik-2014-8}. The spectrum of $A_\gamma$ is discrete with explicit eigenvalues:
\begin{equation}\label{A.gamma.EV}
\sigma(A_{\gamma})  = \{ -\gamma^2 \} \cup \{n^2\}_{n=1}^{\infty}.
\end{equation}
	
For $\gamma = \ii \alpha$, $\alpha \in \R\setminus \Z$, the operator $A_{\ii \alpha}$ is $\PT$-symmetric, \cf~\cite{Krejcirik-2006-39} and Example \ref{ex.PT.sym}, and all its eigenvalues are real and simple.
Hence the spectral condition in Assumption \ref{ass:Af}.\eqref{ass:Af.mu} is satisfied for any $\mu_0 \in \sigma(A_{\ii \alpha})$. The eigenfunctions $\{\xi_n\}_{n=0}^\infty$, $\{\xi_n^*\}_{n=0}^\infty$ of $A_{\ii \alpha}$ and $A_{\ii \alpha}^*$, respectively, with normalization satisfying \eqref{psi0.norm} read
\begin{equation}
\begin{aligned}
\xi_0(x) &=  \frac{1}{\sqrt\pi} \, e^{- \ii \alpha x},  
\\
\xi_n(x) &= \sqrt{\frac{2}{\pi}}\frac{n }{\sqrt{n^2+\alpha^2}} \left(\cos\left(n \left(x+\frac \pi2 \right)\right) - \frac{\ii \alpha}{n} \sin\left(n \left(x+\frac \pi2 \right)\right) \right), \quad  n \geq 1,
\\
\xi_0^*(x) &= \frac{\alpha \pi}{\sin (\alpha \pi)} \, \overline{\xi_0(x)}, \qquad 
\xi_n^*(x) = \frac{n^2+\alpha^2}{n^2-\alpha^2} \, \overline{\xi_n(x)}, \quad n \geq 1.
\end{aligned}
\end{equation}

We consider the cubic nonlinearity $f_{\rm c}$ from Example \ref{ex.f.pol} that is $\PT$-symmetric as well, \cf~Example \ref{ex.PT.sym}. By Theorems \ref{thm.NL-fixpt}, \ref{thm.sym} and Corollary \ref{cor.hom.f}, we have the nonlinear eigenpair $(\mu(\varepsilon),\psi(\varepsilon))$ with $\|\psi(\varepsilon)\|=1$ and $\mu(\varepsilon) \in \R$ for any  $\mu_0 \in \sigma(A_{\ii \alpha})$ and small $\varepsilon$. 

For $\mu_0 =n^2, n\geq 1$, straightforward calculations lead to 
\begin{equation}
\begin{aligned}
\nu_n &= - \langle \xi_n|\xi_n|^2, \xi_n^* \rangle = - \frac{3}{2\pi}
\end{aligned}
\end{equation}
and also $\phi_n$ can be calculated explicitly by solving \eqref{E:phi_thm}.
	
For the special eigenvalue $\mu_0=\alpha^2$ and the corresponding eigenfunction $\psi_0=\xi_0$, the normalized nonlinear eigenpair can be found even explicitly, namely 
\begin{equation}
\mu(\varepsilon) = \alpha^2 - \frac{\varepsilon}{\pi}, \quad \psi(\varepsilon) = \psi(0) \equiv \frac{1}{\sqrt{\pi}}  e^{- \ii \alpha x}.
\end{equation}
Notice the agreement with the expansion of $\mu$ and $\psi$ from Theorem \ref{thm.NL-fixpt}. Indeed, since $f_{\rm c}(\psi_0)=\psi_0/\pi$, identity \eqref{nu.def} yields $\nu_0 = -1/\pi$. Equation \eqref{E:phi_thm} for $\phi \in Q_0 \Dom(A_{\ii \alpha})$ then reads
\begin{equation}
-\phi'' - \alpha^2 \phi = 0,
\end{equation}
hence $\phi=0$. Moreover, $\psi(\varepsilon) = \psi(0)$ implies that $\sigma=0$ is the fixed point of 
\eqref{sigma.eq_fp}.

\subsection{Bose-Einstein condensates with injection and removal of particles}
\label{subsec:BEC}

In the physics literature on Bose-Einstein condensates, non-selfadjoint perturbations of harmonic oscillators or Laplacians with $\delta$-interactions are considered; the imaginary part of the linear potential models the injection and removal of particles, \cf~\cite{Cartarius-2012-45,Dast-2013-46,Fortanier-2014-89} for instance. 

The nonlinear part of the problem corresponds to the contact (cubic) $f_{\rm c}$, monopolar $f_{\rm m}$ or dipolar $f_{\rm d}$ interaction, \cf~Example \ref{ex.f.mon.dip}; the unit vector $\alpha$ entering the dipolar interaction represents the direction of the polarization, \cf~\cite{Lahaye-2009-72,Dast-2013-46}. 
Spectral parameter $\mu$ corresponds to the chemical potential, the parameter $\varepsilon$  controls the strength of the nonlinear interaction and the intensity of particle removal and injection (the non-selfadjoint part of the linear operator) is described by the parameter $\gamma$, see \eqref{BEC.models.1}--\eqref{BEC.models.3}. The balance in the removal and injection is reflected in the $\PT$ symmetry of the system (the imaginary part of the potential is odd). 
One-dimensional examples (obtained by the separation of variables in $d=3$ models) of $\PT$-symmetric linear parts $A$ from the literature are the following
\begin{align}
&-\partial_x^2 + x^2 + v_0 e^{-\sigma x^2} + \ii \gamma x e^{-\rho x^2}, && \gamma, v_0 \in \R, \rho \geq 0,
\label{BEC.models.1} \\
&-\partial_x^2 + x^2 + \ii \gamma (\delta(x-\tau) - \delta(x+\tau)), && \gamma \in \R, \tau > 0,
\label{BEC.models.2} \\
&-\partial_x^2 - (1-\ii \gamma) \delta(x-\tau) - (1 + \ii \gamma) \delta(x+\tau), && \gamma  \in \R, \tau >0.
\label{BEC.models.3}
\end{align}
Notice that all can be viewed as holomorphic operator families $A(\gamma)$ with $A(0)=A(0)^*$, \cf~Section \ref{subsubsec:hol.fam}.

Model \eqref{BEC.models.1} corresponds to an operator $A$ from Example \ref{ex.A.Schr} with $\sigma(A) = \spd(A)$, \cf~Section \ref{subsubsec:Schr.sp}. For $v_0$ and $|\gamma|$ small, all eigenvalues of $A$ are simple and real, \cf~Remark \ref{rem:sim+real}, moreover, it follows \eg~from \cite{Mityagin-2013a} that the number of non-real eigenvalues is finite for any $v_0, \gamma \in \R$; a numerical analysis of eigenvalues for \eqref{BEC.models.1} can be found in \cite{Dast-2013-46}.

Models \eqref{BEC.models.2}--\eqref{BEC.models.3} correspond to the singular Schr\"odinger operator from Example \ref{ex.A.sing}. Both can be introduced through the closed sectorial form $a$, \cf~\eqref{a.form.H1}, namely $V_1(x)=x^2$, $v_2[\psi]=\ii \gamma (|\psi(\tau)|^2 - |\psi(-\tau)|^2$ for \eqref{BEC.models.2} and $V_1(x)=0$, $v_2[\psi]=(-1 + \ii \gamma ) |\psi(\tau)|^2 - (1 + \ii \gamma ) |\psi(-\tau)|^2$ for \eqref{BEC.models.3}. 
Note that condition \eqref{w.rb} is satisfied (for any $\gamma \in \C$) because of \eqref{delta.ineq}.
For \eqref{BEC.models.2}, the spectrum is purely discrete and real for sufficiently small $|\gamma|$, \cf~Section \ref{subsubsec:Schr.sp} and Remark \ref{rem:sim+real}, the number of non-real eigenvalues is finite for any $\gamma \in \R$, \cf~\cite{Mityagin-2015} for a detailed spectral analysis and \cite{Haag-2014-54} for a numerical analysis of eigenvalues.
The essential spectrum for \eqref{BEC.models.3} is equal to $[0,+\infty)$ and discrete eigenvalues outside of $[0,+\infty)$ may appear, \cf~Section \ref{subsubsec:Schr.sp} and Remark \ref{rem:sim+real}. 
In more detail, if
\begin{equation}\label{ex.2delta.EV}
(1+\gamma^2)\tau  > 1 
\quad \mbox{and} \quad 
(1+\gamma^2) e^{ - 2\tau } > \gamma^2,
\end{equation} 
then there are two negative eigenvalues that are obtained as solutions $\mu$ of
\begin{equation}
\label{ex.2delta.EV.eq}
e^{-4 \tau \sqrt{-\mu}} = \frac{4}{1+\gamma^2} 
\left(-\mu - \sqrt{-\mu}  + \frac{\gamma^2}{4} \right), 
\end{equation}
\cf~\cite{Kondej-2013-49,Lotoreichik-2014}. A numerical analysis of eigenvalues of \eqref{BEC.models.3} can be found in \cite{Cartarius-2012-45}. 

In summary, Theorem \ref{thm.NL-fixpt}, Corollary \ref{cor.hom.f} and Theorem \ref{thm.sym} prove the effects observed in physics literature, \ie~for $\eps \neq 0$, nonlinear eigenvalues are shifted with respect to linear ones and those $\mu$ that start from real simple linear eigenvalues $\mu_0$ are real and the $\PT$ symmetry of the nonlinear solution $\psi$ is preserved. 
Note that the latter applies for all nonlinear interactions $f_{\rm pol}$, $f_{\rm m}$ or $f_{\rm d}$ mentioned above and a collection of Schr\"odinger operators in Examples \ref{ex.A.Schr} and \ref{ex.A.sing}. 

A numerical analysis of a model with $d=2$, which is qualitatively similar to \eqref{BEC.models.1}, is performed in Example \ref{ex:num_PT}.

\subsection{Spin-orbit-coupled Bose-Einstein condensate}
\label{subsec:BEC.spin}

The spin-orbit-coupled Bose-Einstein condensate is described by the spinor $\phi \in \H:= L^2(\R) \times L^2(\R)$ obeying the equation
\begin{equation}
\ii \partial_t \phi =  A \phi + \varepsilon f(\phi)
\end{equation}
with
\begin{equation}
\label{Af.BEC.spin}
A = \begin{pmatrix}
- \partial_x^2 + V(x) + \ii \gamma & \omega + \ii \kappa \partial_x 
\\
\omega + \ii \kappa \partial_x  & - \partial_x^2 + V(x) - \ii \gamma 
\end{pmatrix},
\quad f(\phi) = 
\begin{pmatrix}
(|\phi_1|^2 + |\phi_2|^2) \phi_1
\\
(|\phi_1|^2 + |\phi_2|^2) \phi_2
\end{pmatrix}
\end{equation}
where $V$ is a trap potential satisfying the conditions in Example \ref{ex.A.Schr}, $\kappa \in \R$ is the strength of the spin-orbit coupling, $\omega \in \R$ is the strength of the linear Zeeman coupling, $\varepsilon$ characterizes the inter-atomic interactions and $\gamma >0$ accounts for the decay and the gain of the (pseudo-)spin states up and down, respectively, \cf~\cite{Kartashov-2014-107}.  The time harmonic ansatz $\phi(x,t)=e^{-\ii \mu t}\psi(x)$ leads to the eigenvalue problem \eqref{nl.ev}. 

Since the off-diagonal part of $A$ is relatively bounded with respect to the diagonal part (\ie~two copies of a Schr\"odinger operator from Example \ref{ex.A.Schr}) with the bound $0$, the operator $A$ in \eqref{Af.BEC.spin} with $\Dom(A) = \H^2_{V_1}(\R) \times \H^2_{V_1}(\R)$ has the graph norm equivalent to $\|\psi\|_{\X}:=\|\psi_1\|_{\H_{V_1}^2} + \|\psi_2\|_{\H_{V_1}^2}$, see Example \ref{ex.A.Schr}. Moreover, $A$ has compact resolvent if $|V_1(x)| \to \infty $ as $|x| \to \infty$ and it is self-adjoint for real $V$ and $\gamma = 0$. Hence, real simple eigenvalues (if any) of $A$ for $\gamma =0$ stay simple and real for $\gamma$ small, \cf~Remark \ref{rem:sim+real}. For instance, if $V(x)=x^2$, then, for $\gamma = 0$, the eigenvalues of $A$ read $2n+1 \pm \omega - \kappa^2/4$, $n \in \N_0$ and are all simple if $\omega \neq 0$, \cf~\cite{Kartashov-2014-107}. Notice that the nonlinearity $f$ from \eqref{Af.BEC.spin} satisfies Assumption \ref{ass:Af}.\eqref{ass:Af.Lip} with $\X:=\Dom(A)$, $\|\cdot\|_{\X}$ as above, \cf~the remark on the vector case at the end of Sec. \ref{ex.f.pol}.

If $V$ is $\PT$-symmetric, \ie~$V(x) = \ov{V(-x)}$, the operator $A$ and the nonlinearity $f$ in \eqref{Af.BEC.spin} possess a natural antilinear symmetry $(\mathcal C\psi)(x):=\sigma_1 \ov{\psi(-x)}$, where $\sigma_1$ is the Pauli matrix, being the composition of the parity, time and charge symmetries, \cf~\cite{Kartashov-2014-107}.
Hence our results show the existence of the stationary nonlinear modes bifurcating from the linear ones, particularly for the parabolic trap $V(x)=x^2$ investigated in \cite{Kartashov-2014-107}.

\subsection{Optics: nonlinear Schr\"odinger-type equations}
\label{subsec:optics}

Another set of physical applications of nonlinear $\PT$-symmetric problems is optics, \cf~for instance \cite{Musslimani-2008-100, Yang-2014-39}, where light propagation is modeled by nonlinear
Schr\"odinger-type equations
$$\ii \partial_z u +\Delta u -V(x)u+f(u)=0,$$
with a gauge invariant $f$, \ie~$f(e^{\ii \alpha}u)=e^{\ii \alpha}f(u)$ for all $\alpha\in \R$, and 
typically with $V \in L^\infty(\Omega)$. The variable $z$ is the propagation direction of the optical waves. The real part of the optical potential $V$ corresponds to the refractive index and $\Im V$ models the gain and loss of the medium. If the latter is balanced, in the sense that $(\Im V)(-x)=-(\Im V)(x)$, and if, in addition $(\Re V)(-x)=(\Re V)(x)$ and $f$ is $\PT$-symmetric, then the whole system is $\PT$-symmetric. For a heterogeneous material, $V$ can generally be discontinuous and, in that case, the decomposition to $V_1$ and $V_2$ complying with $V_1 \in W^{1,\infty}_{\rm loc}(\R^d)$ must be selected to fit into the setting of Example \ref{ex.A.Schr}; however notice that any bounded $V$ fits there after setting $V_2=V$. 
The physically most usual nonlinearity is $f_{\rm c}$, \ie~the cubic one, \cf~Example \ref{ex.f.pol}.

Examples of smooth $V$ in $d=2$ from \cite{Yang-2014-39} and in $d=1$ from \cite{Musslimani-2008-100} are
\begin{align}
 V(x_1,x_2) &= - (v_0+\ii \gamma x_1x_2) e^{-x_1^2} e^{-x_2^2}, & 
\gamma, v_0 \in \R, \quad  
\label{E:partial_PT_V}
\\
V(x_1,x_2) &=  -3 v_0 \left(e^{-\left(x_1-a\right)^2-\left(x_2-a\right)^2}+ e^{-\left(x_1+a\right)^2-\left(x_2-a\right)^2}\right)   \\
& \quad - 2 v_0 \left(e^{-\left(x_1-a\right)^2-\left(x_2+a\right)^2}+e^{-\left(x_1+a\right)^2-\left(x_2+a\right)^2}\right)\\
 & \quad -2\ii \gamma\left(e^{-\left(x_1-a\right)^2-\left(x_2-a\right)^2}-e^{-\left(x_1+a\right)^2-\left(x_2-a\right)^2}\right)\\
& \quad -\ii \gamma\left(e^{-\left(x_1-a\right)^2-\left(x_2+a\right)^2}-e^{-\left(x_1+a\right)^2-\left(x_2+a\right)^2}\right), 
& \gamma,v_0,a \in \R, \quad
\label{E:Gauss9_PT}
\\
V(x) &=   - \cos^2(x)-\ii \gamma \sin(2x),   
& \gamma \in \R. \quad
\label{E:per.PT}
\end{align}
Once again, like for Bose-Einstein condensates, the parameter $\gamma$ determines the strength of non-selfadjointness, \ie~the loss and gain here, and all models can be viewed as holomorphic operator families $A(\gamma)$ with $A(0)=A(0)^*$, \cf~Section \ref{subsubsec:hol.fam}.

The potential $V$ satisfies $|V(x)| \to 0$ as $|x| \to + \infty$ for both \eqref{E:partial_PT_V} and \eqref{E:Gauss9_PT}, hence the essential spectrum of corresponding $A(\gamma)$ is $[0,\infty)$, \cf~Section \ref{subsubsec:Schr.sp}. Since $\int_{\R^2} \Re V \dd x<0$ (and $\Re V$ decays sufficiently fast), there are discrete negative eigenvalues of $A(0)=-\Delta + \Re V$, which are simple for sufficiently small $v_0$, \cf~Section \ref{subsubsec:Schr.sp}. 
Neither one of the potentials \eqref{E:partial_PT_V} and \eqref{E:Gauss9_PT} is $\PT$-symmetric but both are $\P_1\T$-symmetric and \eqref{E:partial_PT_V} is also $\P_2\T$-symmetric, \cf~Example \ref{ex.PT.sym}, hence the simple real eigenvalues of $A(0)$ remain simple and real for sufficiently small $|\gamma|$, \cf~Remark \ref{rem:sim+real}. 

Regarding periodic problems, like \eg~\eqref{E:per.PT}, our results are relevant for the Bloch eigenvalue problem. In the case of $A=-\partial_x^2+V(x)$ with a $2\pi$-periodic $V$, one considers the family of operators $A$ in $L^2((-\pi,\pi))$ with $k$-quasi-periodic boundary condition, \ie~with the domain
$$\Dom(A)=\{ \psi \in H^2((-\pi,\pi)): \psi(\pi) = e^{\ii k}\psi(-\pi), \psi'(\pi) = e^{\ii k}\psi'(-\pi)\}$$
and the form domain $\H_0^{1,k}((-\pi,\pi))$, cf. Example \ref{ex.A.Schr.qp}.
Since $V$ in \eqref{E:per.PT} is $\PT$-symmetric, eigenvalues of $A$ for $k \notin \{0,\pi\}$ are simple and real for sufficiently small $|\gamma|$, \cf~Section \ref{subsubsec:Schr.sp} and Remark \ref{rem:sim+real}. Numerical analysis from \cite{Musslimani-2008-100} suggests that for \eqref{E:per.PT} this is the case if $|\gamma|<1/2$.

In summary, our results in Theorem \ref{thm.NL-fixpt}, Corollary \ref{cor.hom.f} and Theorem \ref{thm.sym} are applicable and provide for \eqref{E:partial_PT_V} and \eqref{E:Gauss9_PT} and any $f$ compatible with Assumptions \ref{ass:Af} and \ref{ass:A.C} real nonlinear eigenvalues $\mu$ of $-\Delta \psi+V\psi -\eps f(\psi)=\mu \psi$ with nonlinear solutions $\psi$ that satisfy the corresponding partial $\PT$ symmetries. For the periodic problem \eqref{E:per.PT}, we obtain nonlinear Bloch functions $\psi(x)=p(x)e^{\ii k x}$, where $p$ is $2\pi$-periodic. This complements the results of \cite{Dohnal-2014} on the bifurcation of nonlinear Bloch waves in the selfadjoint case. In the $z$-dependent nonlinear Schr\"odinger equation we obtain solutions $u(z,x)=e^{-\ii \mu z}\psi(x)$ with a real propagation constant $\mu$ despite the fact that the material exhibits loss and gain. 

A numerical analysis of the model with the potential in \eqref{E:Gauss9_PT} is performed in Example \ref{ex:num_PT1}.

\subsection{Optics: discrete nonlinear Schr\"odinger equation}
\label{subsec:DNLS}

The propagation of light in a finite one dimensional lattice of linearly coupled Kerr-nonlinear fibers is often modeled by the discrete nonlinear Schr\"odinger equation
$$\ii \partial_z u_n = u_{n+1} + u_{n-1} + \ii \gamma (-1)^n u_n +|u_n|^2u_n, \quad 1\leq n\leq 2N, \quad u_0=u_{2N+1}=0,$$
where $z$ is the propagation direction, $n\in \N$ is the lattice site and $\ii \gamma (-1)^n\in \ii\R$ describes the loss or gain at the site $n$, see \cite{Kevrekidis-2013-12}. For time harmonic solutions $u_n(t)=e^{-\ii \mu t}\phi_n$ and after the rescaling $\psi_n:=\eps^{-1/2}\phi_n$ (with $\eps>0$),  we get eigenvalue problem \eqref{nl.ev} with $A$ in \eqref{A:E.DS} and the nonlinearity $f_n(\psi)=|\psi_n|^2\psi_n$. Example \ref{ex.A.DSchr} and Section \ref{subsubsec:DSchr.sp} explain that for $|\gamma|$ small enough Assumption \ref{ass:Af} holds with $\H=\X:=\C^{2N}$. Note that the Lipschitz continuity of $f$ holds, \eg~with $\sum_{j=1}^{2N}\left||\psi_n|^2\psi_n-|\phi_n|^2\phi_n\right|^2\leq c\max_{j=1,\dots,2N}\{|\psi_n|^4,|\phi_n|^4\}\sum_{j=1}^{2N}|\psi_n-\phi_n|^2$.

Assumption \ref{ass:A.C} is satisfied with $(\cC \psi)_n = \overline{\psi_{-n}}$ (\ie~the discrete $\PT$-symmetry) due to the choice of the ``potential'' $V_n:=\ii \gamma (-1)^n$, such that $V_{-n}=\overline{V_n}$. Our results therefore recover Theorem 1 in \cite{Kevrekidis-2013-12}.

\subsection{Optics: coupled mode equations}
\label{subsec:CME}

In Kerr-nonlinear optical fibers with a Bragg grating and a localized defect the propagation of asymptotically broad wavepackets can be described by the system of ``coupled mode equations''
\begin{equation}
\begin{aligned}
\ii (\partial_t E_1+\partial_x E_1) + \kappa(x) E_2+V(x) E_2 + (|E_1|^2+2|E_2|^2)E_1&=0\\
\ii (\partial_t E_2-\partial_x E_2) + \kappa(x) E_1+V(x) E_1 + (|E_2|^2+2|E_1|^2)E_2&=0
\end{aligned}
\end{equation}
with $\kappa(x)\to \kappa_\infty>0$ and $V(x)\to 0$ as $|x|\to \infty$, see \cite{Goodman-2002-19}. The potentials $\kappa(x)-\kappa_\infty$ and $V(x)$ describe the defect of the material and are determined by the refractive index. Once again, we consider the time harmonic ansatz $E(x,t)=e^{-\ii \mu t}\phi(x)$ and after the rescaling $\psi:=\eps^{-1/2}\phi$ (with $\eps>0$), we obtain eigenvalue problem \eqref{nl.ev} with $A$ in \eqref{A:E.Dirac} and 
$$
f(\psi)=
\begin{pmatrix} 
(|\psi_1|^2+2|\psi_2|^2)\psi_1\\
(|\psi_2|^2+2|\psi_1|^2)\psi_2
\end{pmatrix}.
$$
As mentioned in Section \ref{subsubsec:Dirac.sp}, real smooth and bounded potentials $\kappa$ and $V$ exist such that $A$ has a simple isolated eigenvalue. Examples \ref{ex.A.CME} and 
\ref{ex.f.pol} (see the remark on the vector case at the end of Sec. \ref{ex.f.pol}), guarantee that Assumption 
\ref{ass:Af} is satisfied with $\H=L^2(\R)\times L^2(\R)$ and $\X=H^1(\R)\times H^1(\R)$ provided $V,\kappa \in L^\infty$ and $\kappa(x)\to \kappa_\infty>0$ and $V(x)\to 0$ as $|x|\to \infty$.

For materials with loss/gain the potentials $V$ and $\kappa$ become complex and choosing them $\PT$-symmetric, we satisfy also Assumption \ref{ass:A.C}. The existence of a real simple isolated eigenvalue $\mu_0$ of $A$ is guaranteed at least for small imaginary parts of $\kappa$ and $V$ by the analytic dependence as in Remark \ref{rem:sim+real}.
Hence, (in the language of \cite{Goodman-2002-19}), our results show that conservative nonlinear defect modes bifurcate from linear ones in the $\PT$-symmetric case.

\subsection{Superconductivity}

A model of a finite superconducting wire is discussed in \cite{Rubinstein-2007-99,Rubinstein-2010-195} and the bifurcation of nonlinear states for a nonlinear parabolic equation ($d=1$) on $\Omega=(-1,1)$ is studied. In detail, the problem
\begin{equation}\label{E:heat_Rubinstein.eq}
\begin{aligned}
w_t &= w_{xx} + \ii x Iw+\Gamma w +N[w], \quad x\in (-1,1), \qquad w(-1)=w(1)=0,\\
N[w] &= -|w|^2w + \ii w\int_0^x \Im \left( w(s,t)\overline{w}_x(s,t)\right) \dd s,
\end{aligned}
\end{equation}
where $I$ and $\Gamma$ are real parameters, is considered. In \cite[Sec.6]{Rubinstein-2010-195} the authors study the bifurcation of nonlinear (generally $t$-dependent) solutions from the zero solution at the smallest eigenvalue $\lambda_1\in \R$  of 
\begin{equation}\label{A.Rub}
A:=-\partial_x^2 -\ii xI, \quad \Dom(A) := H^2((-1,1)) \cap H^1_0((-1,1)).
\end{equation}
The potential $-\ii x I$ is $\PT$-symmetric, so the spectrum of $A$ remains real if the parameter $I$ is chosen small enough, \cf~Remark \ref{rem:sim+real}, and the number of non-real eigenvalues remains finite for any $I \in \R$. For the bifurcation problem the authors set $\Gamma = \text{Re} \lambda_1+\eps, 0<\eps \ll 1$ and use the center manifold reduction, where the center manifold is one dimensional and corresponds to the zero eigenvalue of $A-\text{Re} \lambda_1$. On the manifold they study $t$-dependent, but also stationary nonlinear solutions. The asymptotics of the latter are given by 
$$w(x) \sim \eps^{1/2}\alpha u_1(x),$$
where $u_1$ is the linear eigenfunction corresponding to $\lambda_1$ and $\alpha\in \R$ is the projection coefficient on the center subspace and solves an algebraic equation. For $I$ small enough $\lambda_1\in \R$, such that a real nonlinear eigenvalue $\Gamma$ bifurcates. The eigenfunction $w$ is $\PT$-symmetric due to the $\PT$-invariance of the center manifold.

In the formal part of \cite{Rubinstein-2010-195} the more detailed expansion
$$w(x) \sim \eps^{1/2}\alpha u_1(x) +\eps^{3/2}w_1(x),$$
is given, where  the correction $w_1$ solves
\begin{equation}\label{E:w1_eq}
(A-\lambda_1)w_1 = \alpha u_1+N[\alpha u_1].
\end{equation}
$\alpha \in \R$ can then be selected via the solvability condition of the above equation and agrees to leading order with the $\alpha$ from the center manifold approach.

To relate this work to our results, we rescale the $t$-independent solution $w(x)=\eps^{1/2}\psi(x)$ and recover from \eqref{E:heat_Rubinstein.eq} a problem of type \eqref{nl.ev}, namely
\begin{equation}
(A-\Gamma)\psi -\eps \big(f_{\rm c}(\psi) + f_{\rm N}(\psi) \big)=0,
\end{equation}
\cf~Examples \ref{ex.f.pol}, \ref{ex.f.Rubinstein}. Equation \eqref{E:w1_eq} thus corresponds to our 
\eqref{E:phi_thm}. Observe that $A$ fits into the setting of Example \ref{ex.A.sing} with $V_1(x)=\ii x I$, $v_2=0$ and $\X=\H_0^{1, \rm Dir}((-1,1)) = H_0^1((-1,1))$. The nonlinearities are discussed in Examples \ref{ex.f.pol}, \ref{ex.f.Rubinstein} and \ref{ex.PT.sym} and it shown that $H^1$ is a suitable space for the Lipschitz condition \eqref{E:Lip}. Hence, our results in Theorem \ref{thm.NL-fixpt}, Corollary \ref{cor.hom.f} and Theorem \ref{thm.sym} are applicable and provide real nonlinear eigenvalues $\mu$ with $\PT$-symmetric nonlinear solutions $\psi$.

\section{Numerical Examples}
\label{sec:num}
We analyze numerically two nonlinear problems of type \eqref{nl.ev}, both with the Schr\"odinger operator $A = -\Delta +V$ in $L^2(\R^2)$, \cf~Example \ref{ex.A.Schr}, and the cubic nonlinearity $f_{\rm c}$, \cf~Example \ref{ex.f.pol}, \ie~
\begin{equation}\label{nl.ev_schr}
(-\Delta +V) \psi - \eps |\psi|^2\psi = \mu \psi, \qquad \|\psi\|_{L^2}=1.
\end{equation}
Selected potentials $V$ posses some antilinear or linear symmetries. Clearly, the nonlinearity is highly symmetric and satisfies Assumption \ref{ass:A.C}.\eqref{ass:A.C.f} with $\cC=\PT$ as well as $\cC=\P_jT$, $j=1,2$, and also Assumption \ref{ass:A.S}.\eqref{ass.f.S} with any coordinate reflection symmetry $\cS$.

Our choice of $d=2$ rather than the numerically simpler $d=1$ allows the investigation of partial $\PT$-symmetries as well as the interplay between antlinear and linear symmetries in a single problem.

The numerics are performed using the package \texttt{pde2path}
\cite{p2p,p2p2,p2p2b} for numerical continuation and bifurcation 
in nonlinear elliptic systems of PDEs. The package uses linear finite
elements for the discretization, Newton's iteration for the
computation of nonlinear solutions and an arclength continuation of solution branches. 
In all numerical computations, the free complex phase of the solution was fixed by forcing $\Im(\psi(x_0))=0$ at a selected point $x_0$ within the computational domain. For the plots, we select $x_0=(0,0)$ for the $\PT$-symmetric case in Example \ref{ex:num_PT} and $x_0=(0,2)$ for the $\P_1\T$-symmetric Example \ref{ex:num_PT1}. In all computations, except for one case mentioned below, the numerical grid is selected symmetric with respect to both coordinate axes as well as with respect to the reflection $x\to -x$. This is crucial for recovering symmetries of eigenfunctions and realness of eigenvalues.

\begin{example}\label{ex:num_PT} 
We consider first the following imaginary perturbation of the harmonic oscillator that is compatible with Example \ref{ex.A.Schr} and inspired by the Bose-Einstein condensates models from Section \ref{subsec:BEC},
\begin{equation}\label{E:SHO6_PT}
V(x_1,x_2) = \frac{1}{2}(x_1^2+x_2^2) + \ii \gamma x_1 \frac{2}{x_1^2+x_2^2+2}.
\end{equation}
Clearly, $V(-x_1,-x_2)=\overline{V(x_1,x_2)}=V(-x_1,x_2)$. Hence, the problem has three symmetries: two antilinear symmetries, namely the $\PT$ symmetry and the $\P_{1}\T$ symmetry, and the linear $\P_2$ symmetry, \cf~Example \ref{ex.PT.sym}.

For $\gamma=0$, the eigenvalues of $A$ are known explicitly:
$$\lambda_k = \sqrt{2}k, \ k =1,2, \dots, \text{ where the multiplicity of } \lambda_k \text{ is } k.$$ 
Enumerating the eigenvalues including their multiplicity, we obtain our eigenvalues $\mu_n$ for $\eps=\gamma=0$.

For the discretization of the PDE, we take $2*80^2=12800$ isosceles right triangles of equal size generated by Matlab's command \texttt{poimesh} on the domain $x\in [-8,8]^2$ with homogeneous Dirichlet boundary conditions. The first four eigenfunctions are well localized within the selected domain.

For $\gamma=2$, the numerically obtained first four eigenvalues (for $\varepsilon=0$) are
$$\mu_1\approx 2.096, \ \mu_2\approx 2.583, \ \mu_3 \approx 3.155, \ \mu_4\approx 4.256,$$
and they are all simple.  
\begin{figure}[ht!]
\label{Fig:SHO6_BD_g}
\includegraphics[width=0.4 \textwidth]{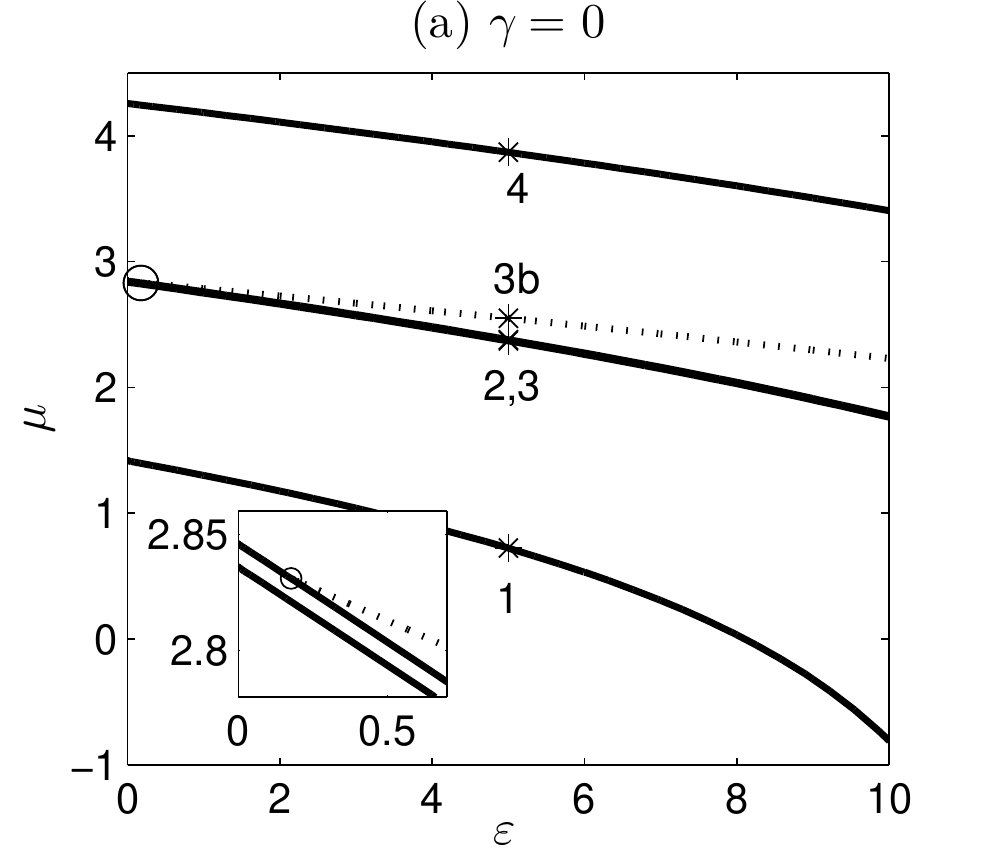}
\includegraphics[width=0.4 \textwidth]{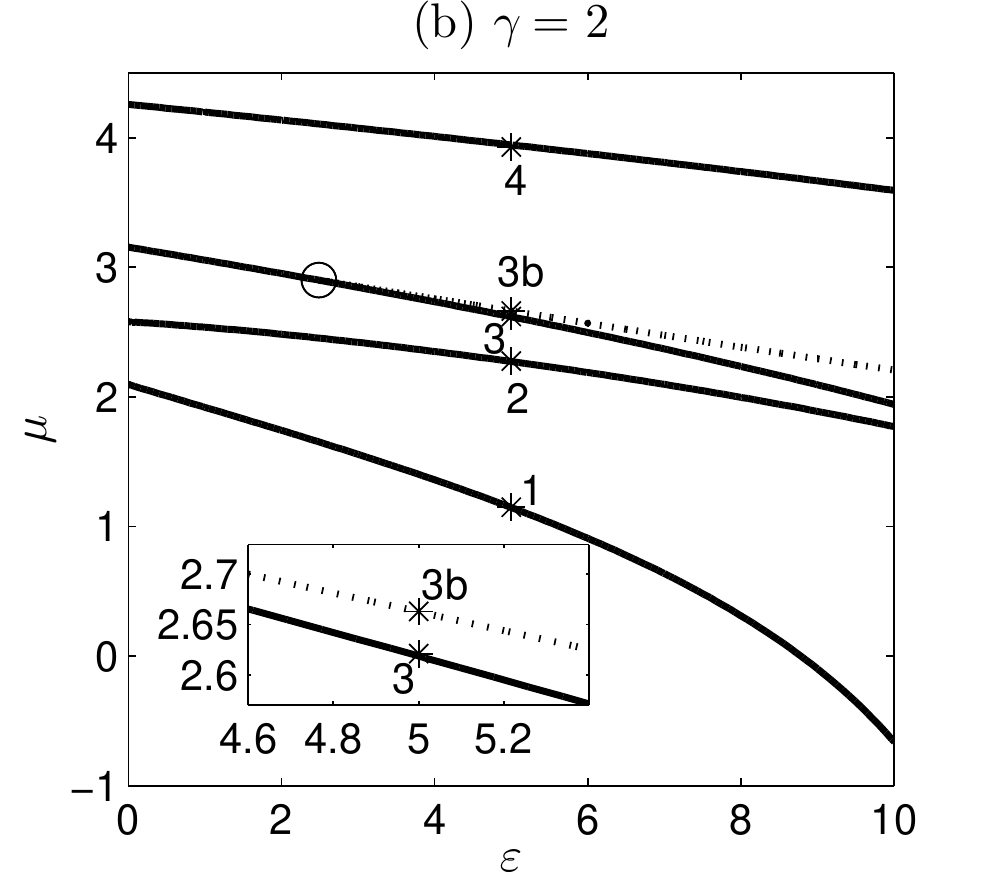}
\caption{Bifurcation diagram of \protect\eqref{nl.ev_schr}, \protect\eqref{E:SHO6_PT} in the parameter $\varepsilon$ for the first eigenvalues $\mu_1, \dots, \mu_4$ with $\gamma=0$ in (a) and with $\gamma=2$ in (b). Circles label secondary bifurcation points.}
\end{figure}

In Fig.~\ref{Fig:SHO6_BD_g}, the bifurcation diagram in $\varepsilon$ for $\mu_1,\mu_2,\mu_3$ and $\mu_4$ is plotted. The eigenvalues are continued in $\varepsilon>0$ from the linear eigenvalues at $\varepsilon=0$ for two values of $\gamma$, namely $\gamma=0$ in (a) and $\gamma=2$ in (b). In both cases, all plotted eigenvalues (including the ones bifurcating from $\mu_3$) are real. Note that for $\gamma=0$ the grid is symmetric only with respect to the reflection $x\to -x$ and not with respect to the coordinate axes. This suppresses the multiplicity of the first four linear eigenvalues such that these can be easily numerically continued in $\varepsilon$. The numerics suggest that all the four simple eigenvalues remain real for at least $\varepsilon \leq 10$. Clearly, the numerics agree with the analysis as simple eigenvalues stay real for $\varepsilon$ small. 
\begin{figure}[h!]
\includegraphics[scale=0.5]{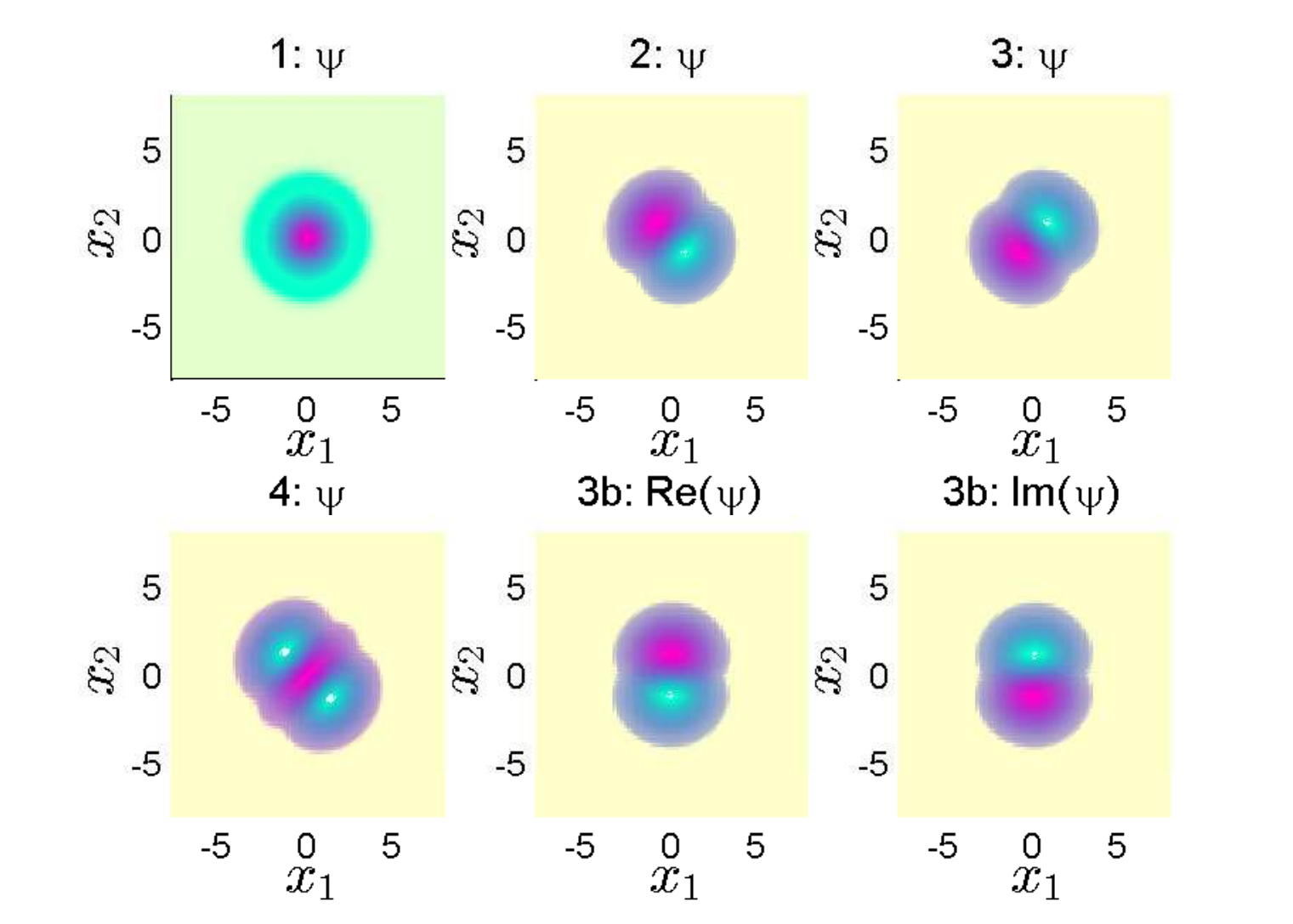}
\caption{Profiles of the nonlinear eigenfunctions of \protect\eqref{nl.ev_schr}, \protect\eqref{E:SHO6_PT} at $\gamma=0$ labeled by 1--4, 3b in Fig.~\protect\ref{Fig:SHO6_BD_g} (a). }
\label{Fig:SHO6_PT_profs_gam0}
\end{figure}
In Fig.~\ref{Fig:SHO6_PT_profs_gam0}, the eigenfunctions at the five labeled points at $\varepsilon=5$ in Fig.~\ref{Fig:SHO6_BD_g} (a) for $\gamma=0$ are plotted. Since $V$ is real, the eigenfunctions can be automatically chosen $\PT$-symmetric (using a proper rotation of the complex phase). Note also that after a proper rotation in the $(x_1,x_2)-$plane (allowed due to the rotation symmetry of $V$ at $\gamma=0$) all eigenfunctions 1--4 and 3b are symmetric or antisymmetric with respect to $x_1\to -x_1$ as well as $x_2\to -x_2$. The numerically generated profiles for the eigenfunctions 1--4 are symmetric about other axes due to the lack of coordinate symmetry of the grid, as explained above.

In Fig.~\ref{Fig:SHO6_PT_profs_gam2}, the profiles for the case $\gamma=2$ from Fig.~\ref{Fig:SHO6_BD_g} (b) appear. Eigenfunctions 1 and 2 satisfy all the three symmetries, \ie~$\PT$, $\P_1\T$ as well as the linear  $\psi(x_1,-x_2)=\pm\psi(x_1,x_2)$. Eigenfunctions 3 and 4 satisfy the linear (anti)symmetry and can be chosen either $\PT$- or $\P_1\T$-symmetric after a suitable multiplication by $\ii$.  The eigenfunction 3b on the dotted branch (bifurcating from the primary branch) is only $\P_1\T$-symmetric.
\begin{figure}
\includegraphics[scale=0.5]{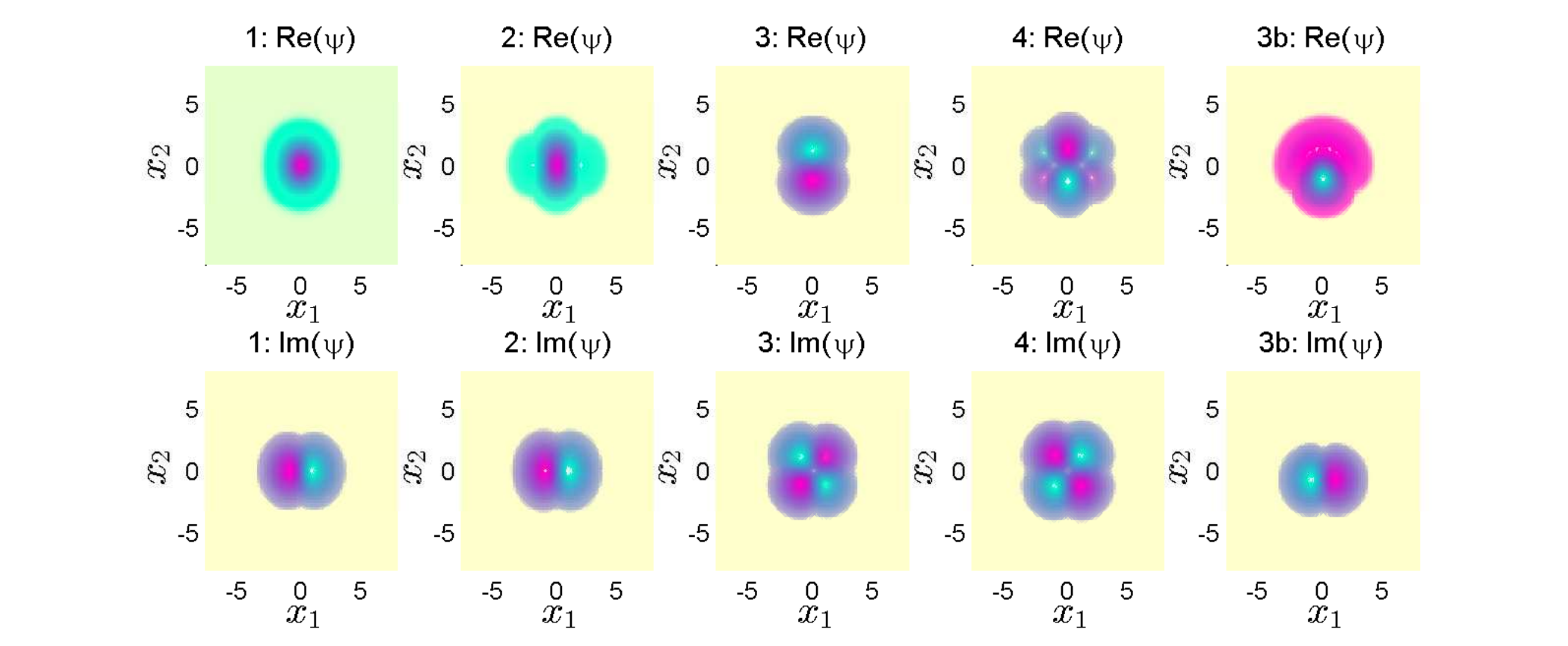}
\caption{Profiles of the nonlinear eigenfunctions of \protect\eqref{nl.ev_schr}, \protect\eqref{E:SHO6_PT} at $\gamma=2$ labeled by 1--4, 3b in Fig.~\protect\ref{Fig:SHO6_BD_g} (b). }
\label{Fig:SHO6_PT_profs_gam2}
\end{figure}

In Fig.~\ref{Fig:SHO6_BD_gam}, we perform continuation in the parameter $\gamma$ for the two values $\varepsilon=0$ and $\varepsilon=2$. The results are qualitatively similar to those in 1D from  \cite{Dast-2013-46}. When two real eigenvalues collide, they leave the real axis and become a complex conjugate pair. In addition, however, a secondary bifurcation can occur, like, \eg, from $\mu_3$ at $\varepsilon\approx 1.5$ (see the inset in Fig.~\ref{Fig:SHO6_BD_gam} (c)).

Note that for a complex conjugate pair of simple eigenvalues the corresponding eigenfunctions $\psi_1,\psi_2$ can be chosen to be related by $\psi_2(x) =(\PT\psi_1)(x)=\alpha (\P_1\T\psi_1)(x)$ with a suitable factor $\alpha\in \C,|\alpha|=1$. Hence, below we always plot only one eigenfunction for a complex conjugate pair.
\begin{figure}[h!] 
\includegraphics[width=0.4 \textwidth]{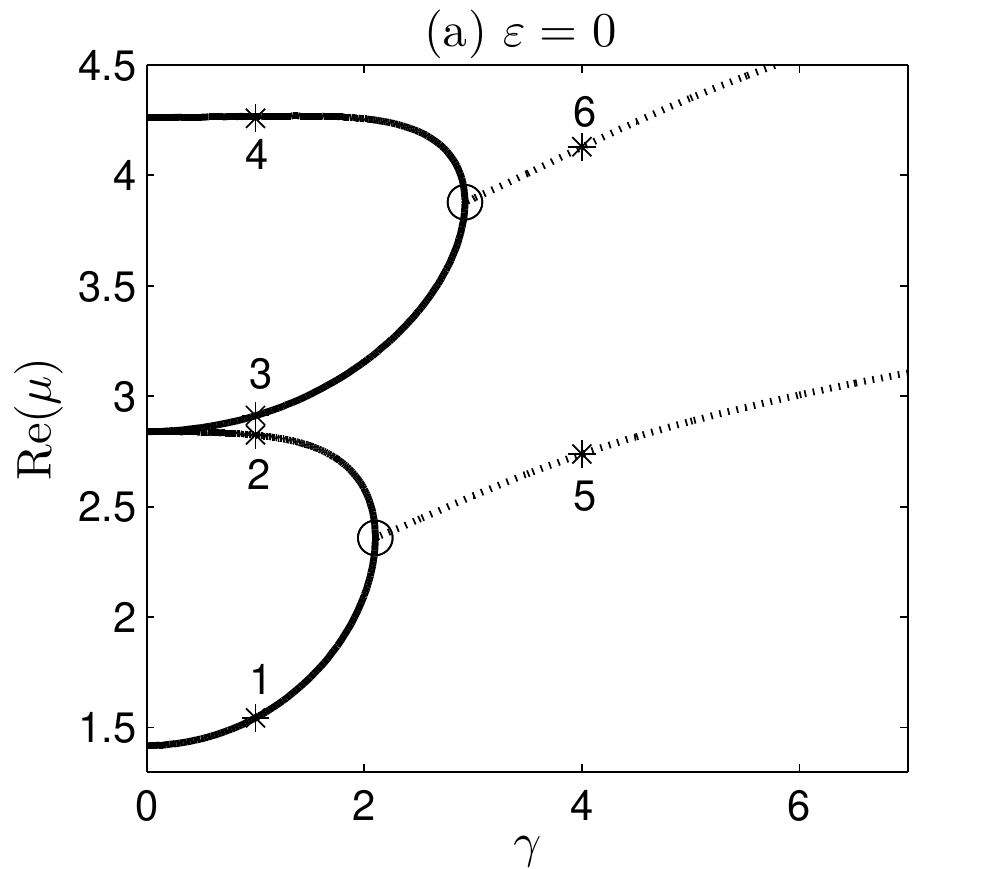}
\includegraphics[width=0.4 \textwidth]{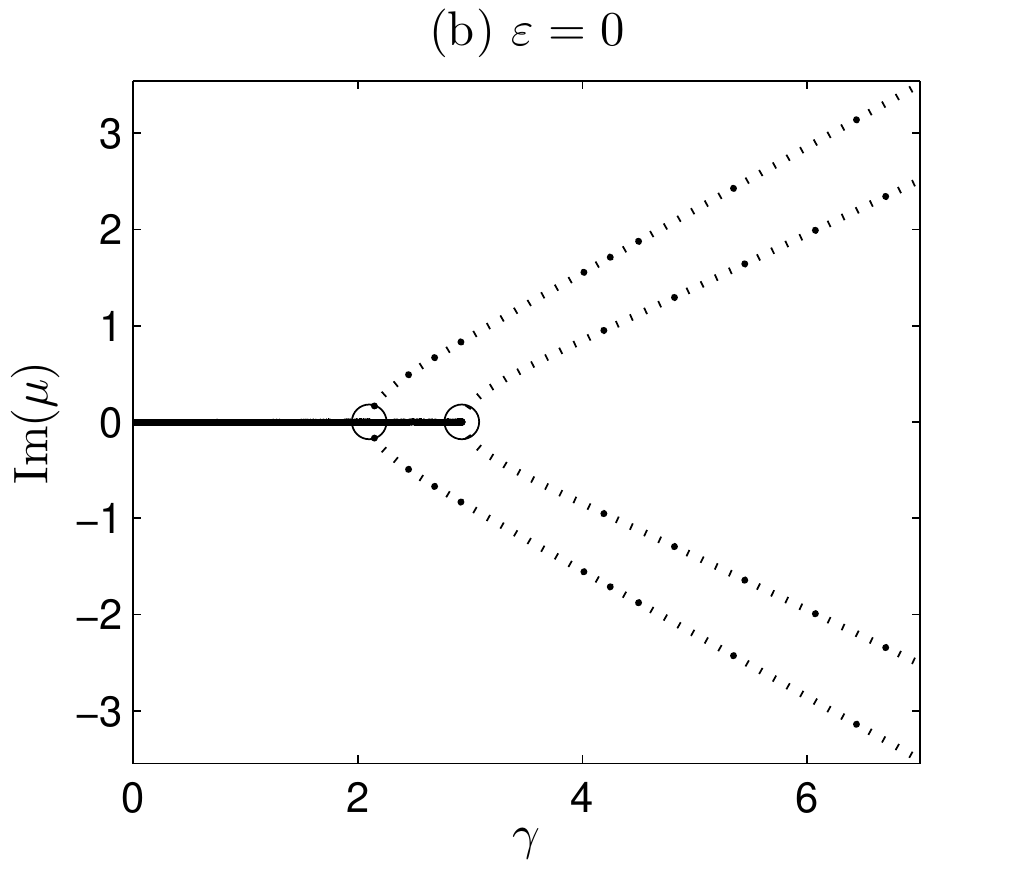}

\includegraphics[width=0.4 \textwidth]{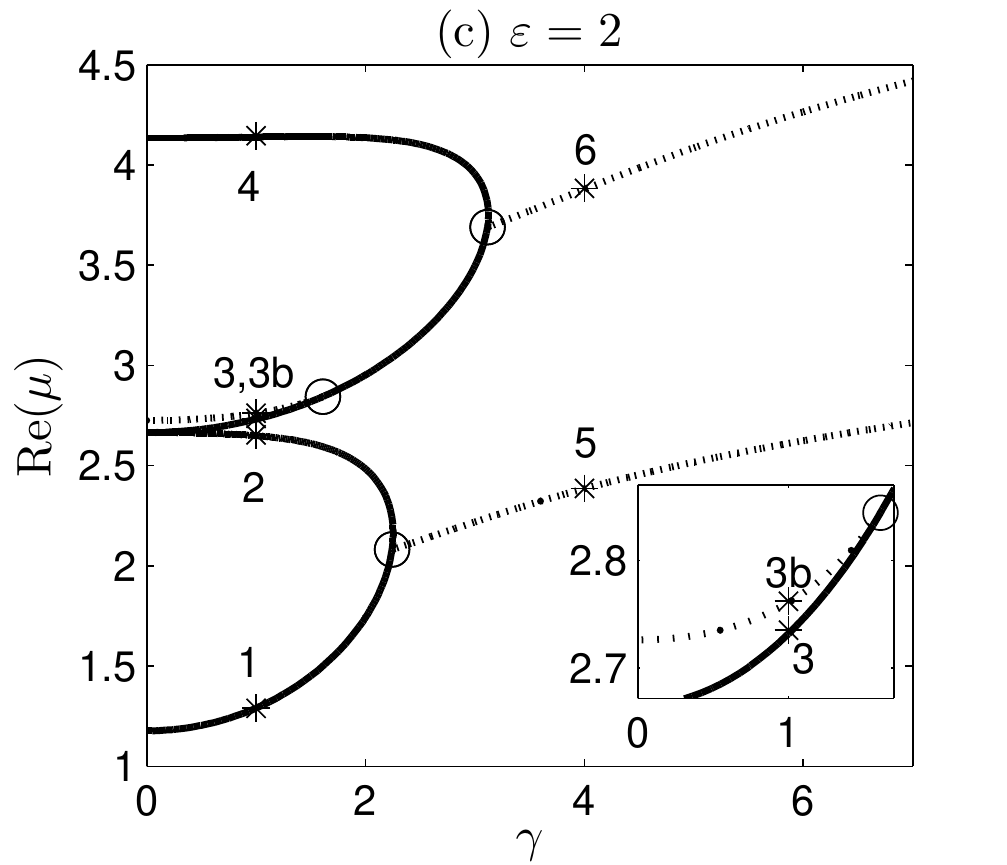}
\includegraphics[width=0.4 \textwidth]{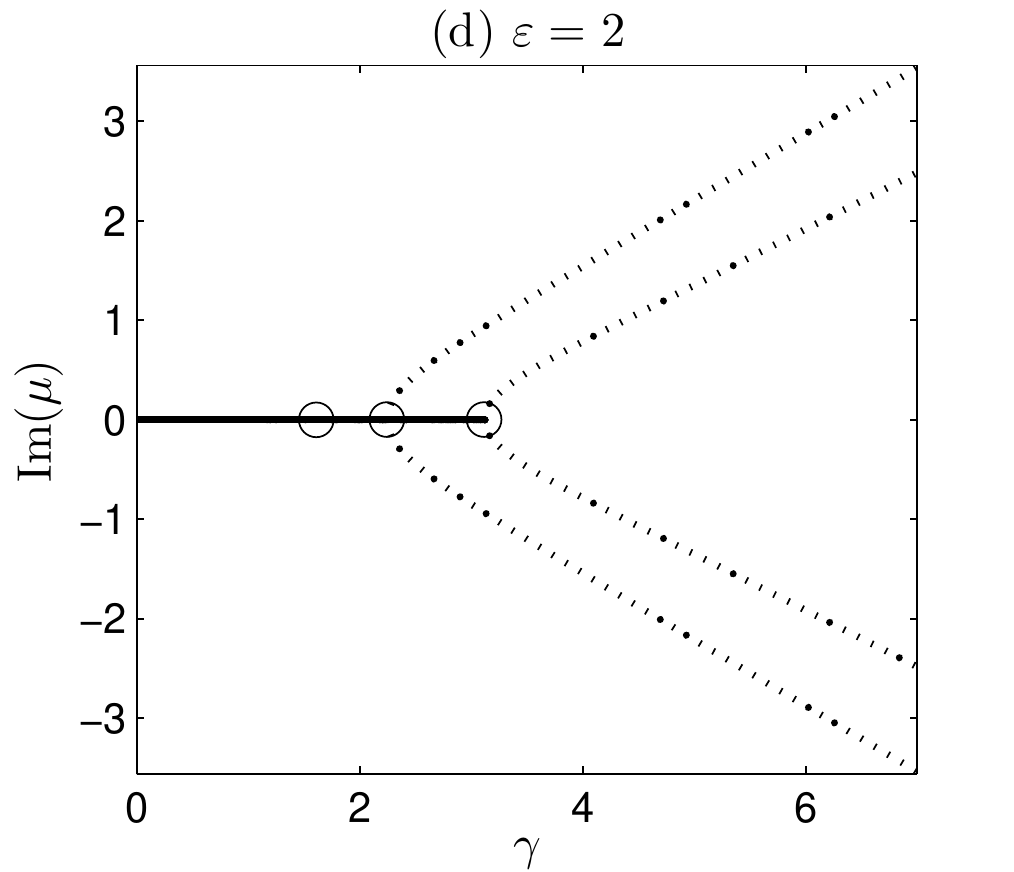}
\caption{Bifurcation diagram of \protect\eqref{nl.ev_schr}, \protect\eqref{E:SHO6_PT} in the parameter $\gamma$ for the first eigenvalues $\mu_1, \ldots, \mu_4$ with $\eps=0$ in (a) and (b) and with $\varepsilon=2$ in (c) and (d). Circles label secondary bifurcation points.}
\label{Fig:SHO6_BD_gam}
\end{figure}

The eigenfunctions for the linear case $\varepsilon=0$ at the six labeled points in Fig.~\ref{Fig:SHO6_BD_gam}(a) are shown in Fig.~\ref{Fig:SHO6_PT_profs_g0}. The symmetry properties of the eigenfunctions 1--4 are the same as for the case $\gamma=2, \varepsilon=5$ in Fig.~\ref{Fig:SHO6_PT_profs_gam2}. The eigenfunctions 5 and 6 corresponding to the complex eigenvalues are neither $\PT$- nor $\P_1\T$-symmetric but the linear symmetry $\psi(x_1,-x_2)=\pm\psi(x_1,x_2)$ is preserved. This is in agreement with Lemmas \ref{lem:A.C} and \ref{lem:A.S}.
\begin{figure}[ht!]
\includegraphics[scale=0.38]{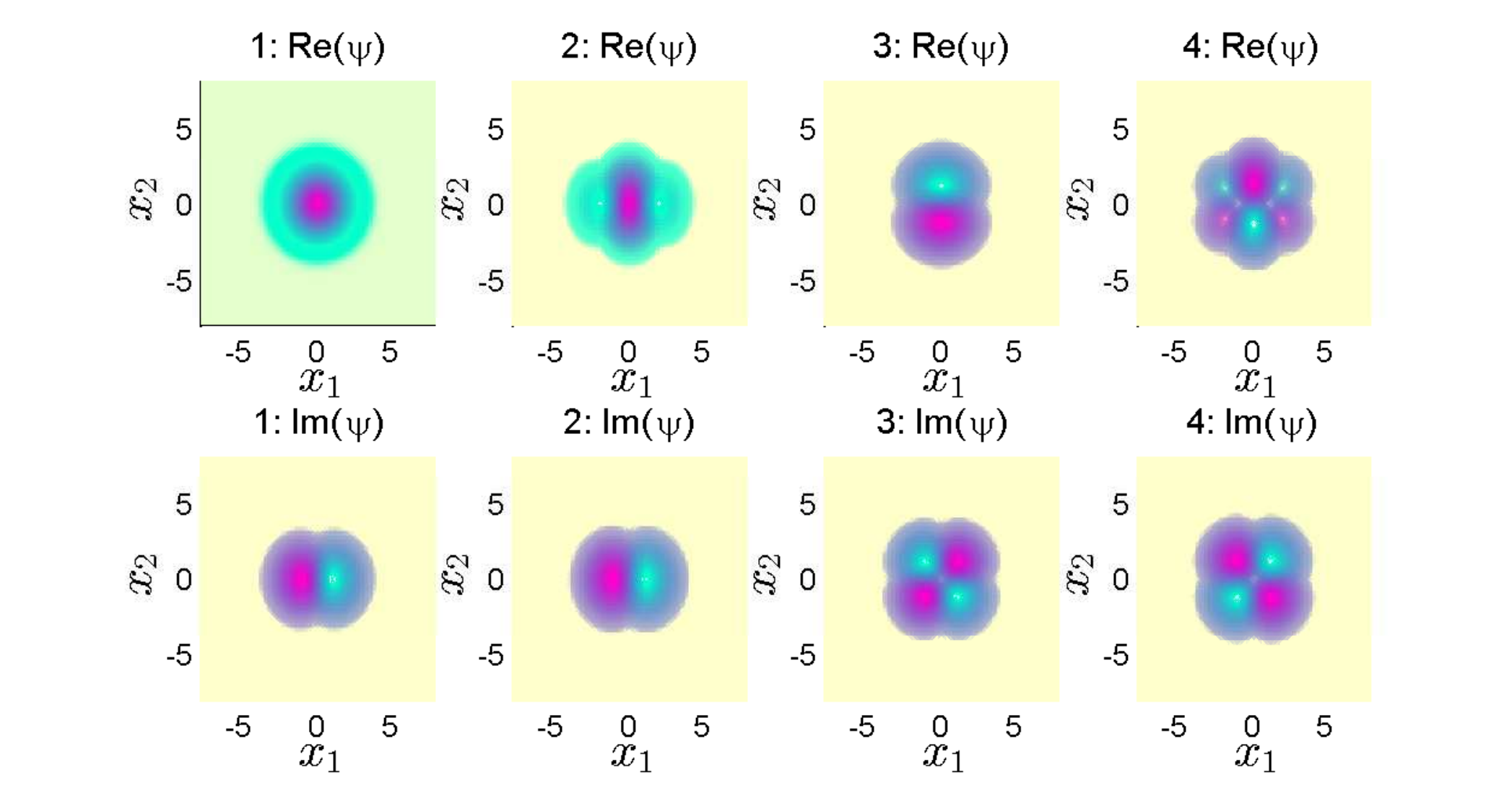}\hspace{-.9cm}
\includegraphics[scale=0.38]{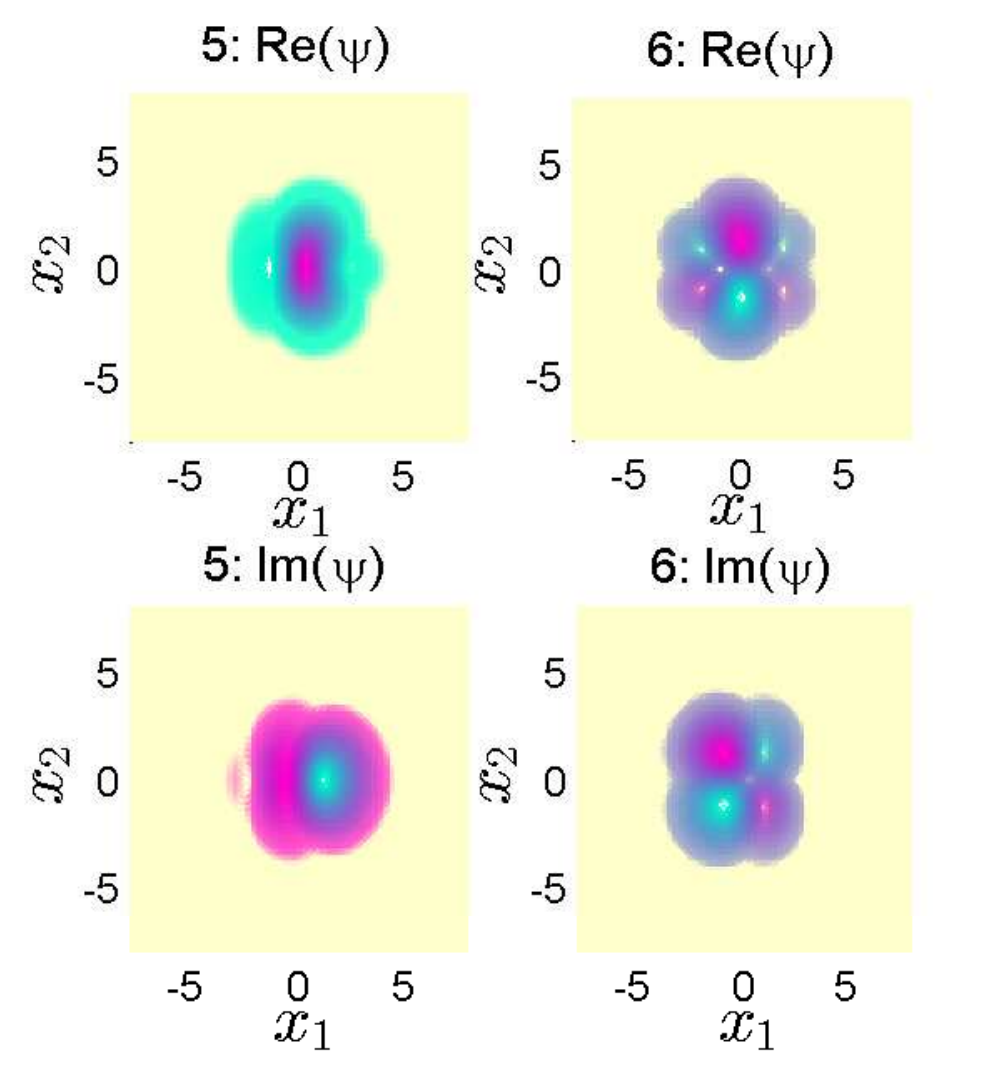}
\caption{Profiles of the nonlinear eigenfunctions of \protect\eqref{nl.ev_schr}, \protect\eqref{E:SHO6_PT}  at $\varepsilon=0, \gamma=1$ labeled by 1--6 in Fig.~\protect\ref{Fig:SHO6_BD_gam} (a). }
\label{Fig:SHO6_PT_profs_g0}
\end{figure}

For the nonlinear case $\varepsilon=2$, the eigenfunctions are in Fig.~\ref{Fig:SHO6_PT_profs_g2}. The symmetries of the eigenfunctions 1--4 are again the same as for the case $\gamma=2, \varepsilon=5 $ in Fig.~\ref{Fig:SHO6_PT_profs_gam2}. The eigenfunctions 5 ad 6 of the complex eigenvalues are, once again, only linearly symmetric. On the other hand, the eigenfunction 3b on the secondary bifurcation branch (corresponding to a real eigenvalue) has only the $\P_1\T$ symmetry. 
\begin{figure}[ht!]
\includegraphics[scale=0.43]{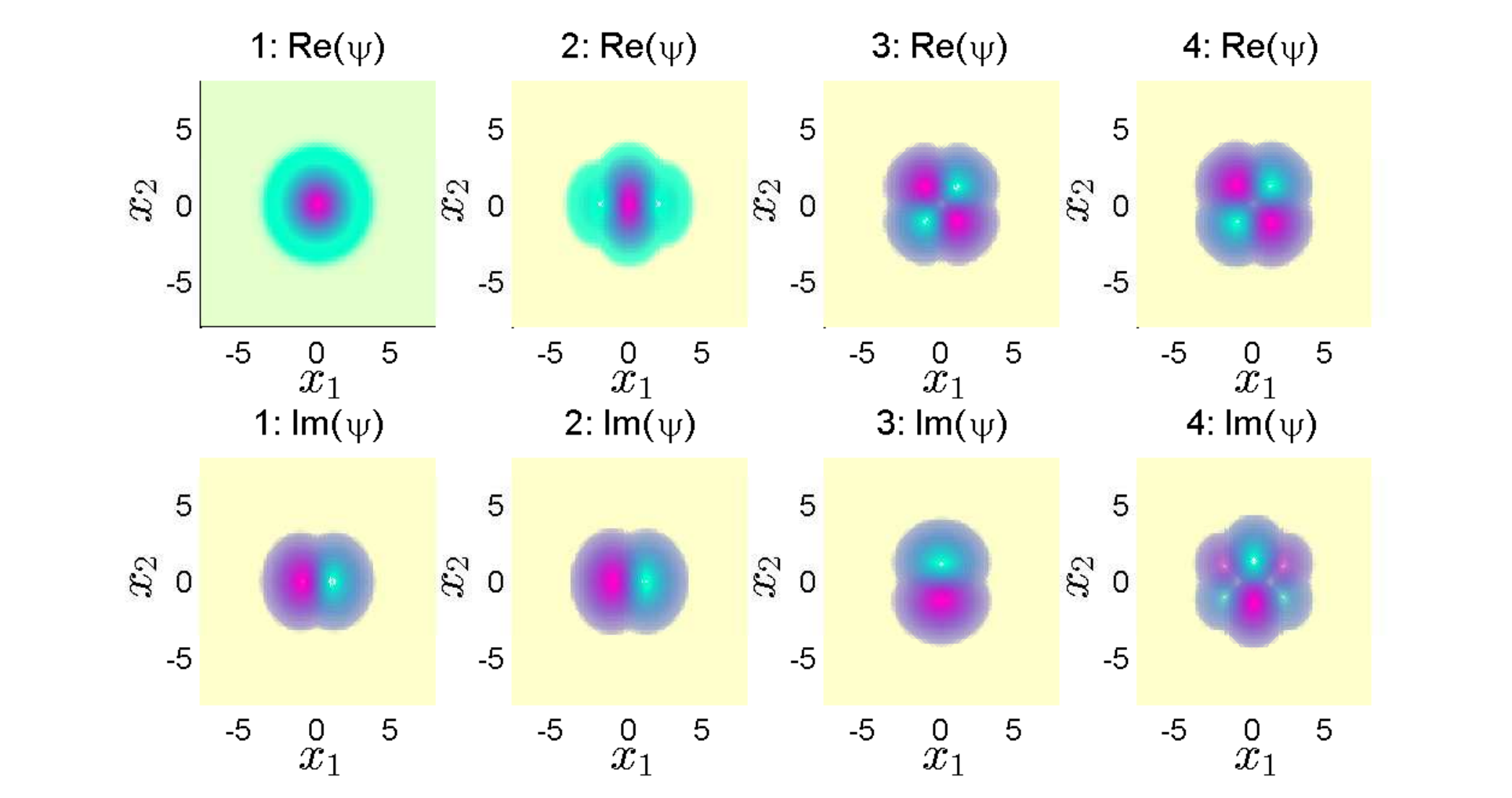}

\includegraphics[scale=0.43]{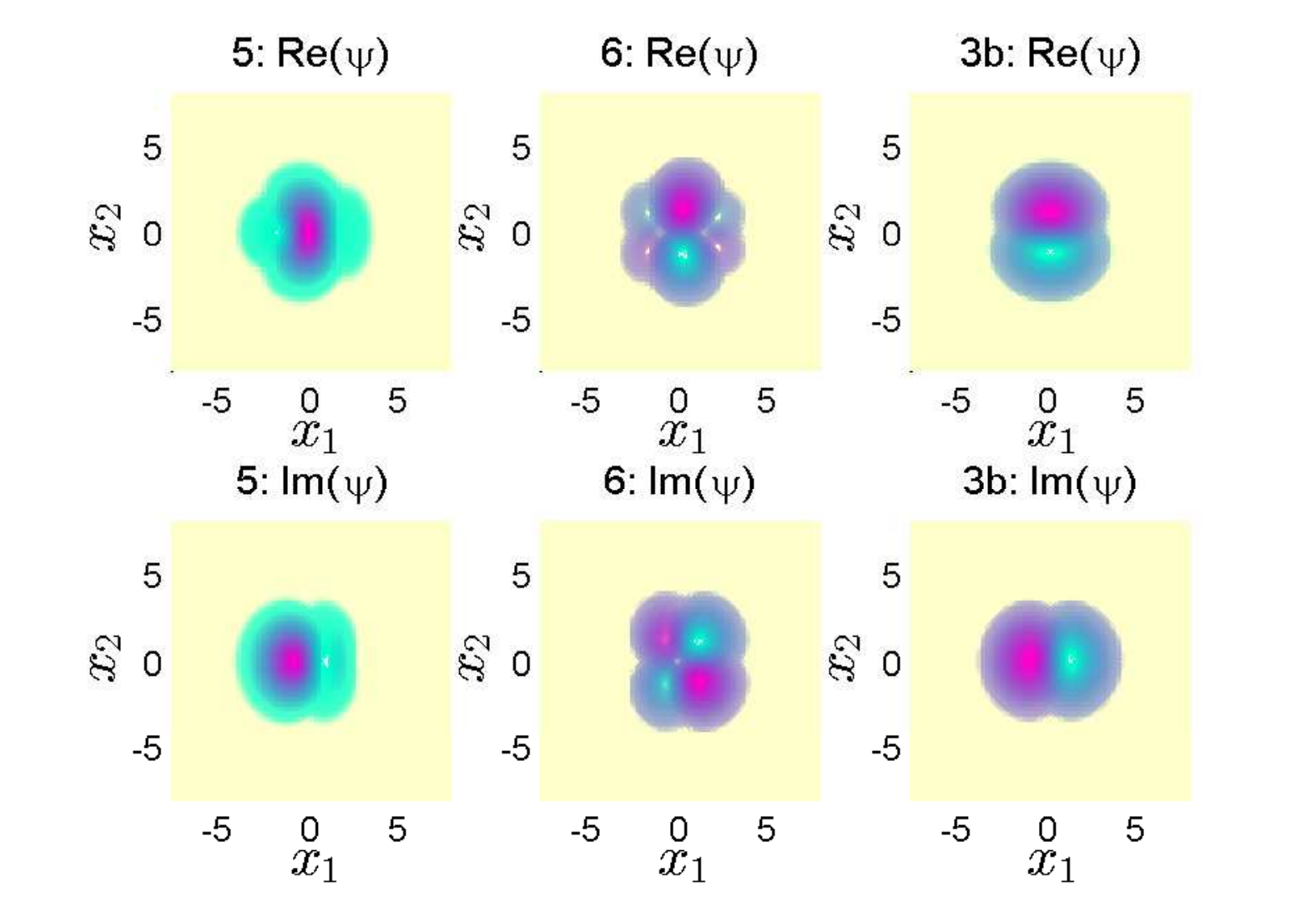}
\caption{Profiles of the nonlinear eigenfunctions of \protect\eqref{nl.ev_schr}, \protect\eqref{E:SHO6_PT}  at $\varepsilon=2, \gamma=1$ labeled by 1--6, 3b in Fig.~\protect\ref{Fig:SHO6_BD_gam} (c). }
\label{Fig:SHO6_PT_profs_g2}
\end{figure}
\end{example}
\begin{example}\label{ex:num_PT1} 
As the second example, we choose the $\P_1\T$-symmetric potential \eqref{E:Gauss9_PT} with $a=\tfrac{3}{2}$ and $v_0=1$, appearing in optics literature. Clearly, $V$ satisfies $V(-x_1,x_2)=\overline{V(x_1,x_2)}$, but no other obvious antilinear or linear symmetry involving  reflections of coordinates. Simplicity of eigenvalues of $A$ is discussed in Section \ref{subsec:optics}; the four lowest eigenvalues appear to be simple numerically for $\gamma=0$.

Our discretization mesh is given by $14400$ isosceles right triangles on the domain $x\in [-13,13]^2$ chosen such that the mesh is symmetric about the coordinate axes and with respect to the reflection $x\to -x$. We again use homogeneous Dirichlet boundary conditions.

In Fig.~\ref{Fig:Gauss9_BD_g} we plot the bifurcation diagram in the parameter $\eps$ for $\gamma=0$ and $\gamma=0.1$. Similarly to Example \ref{ex:num_PT}, for $\gamma=0$ (real potential $V$),  the nonlinear eigenvalues stay real even after secondary bifurcations, while for $\gamma=0.1$ secondary bifurcations result in complex conjugate pairs of nonlinear eigenvalues. 
\begin{figure}[ht!]
\includegraphics[width=0.4 \textwidth]{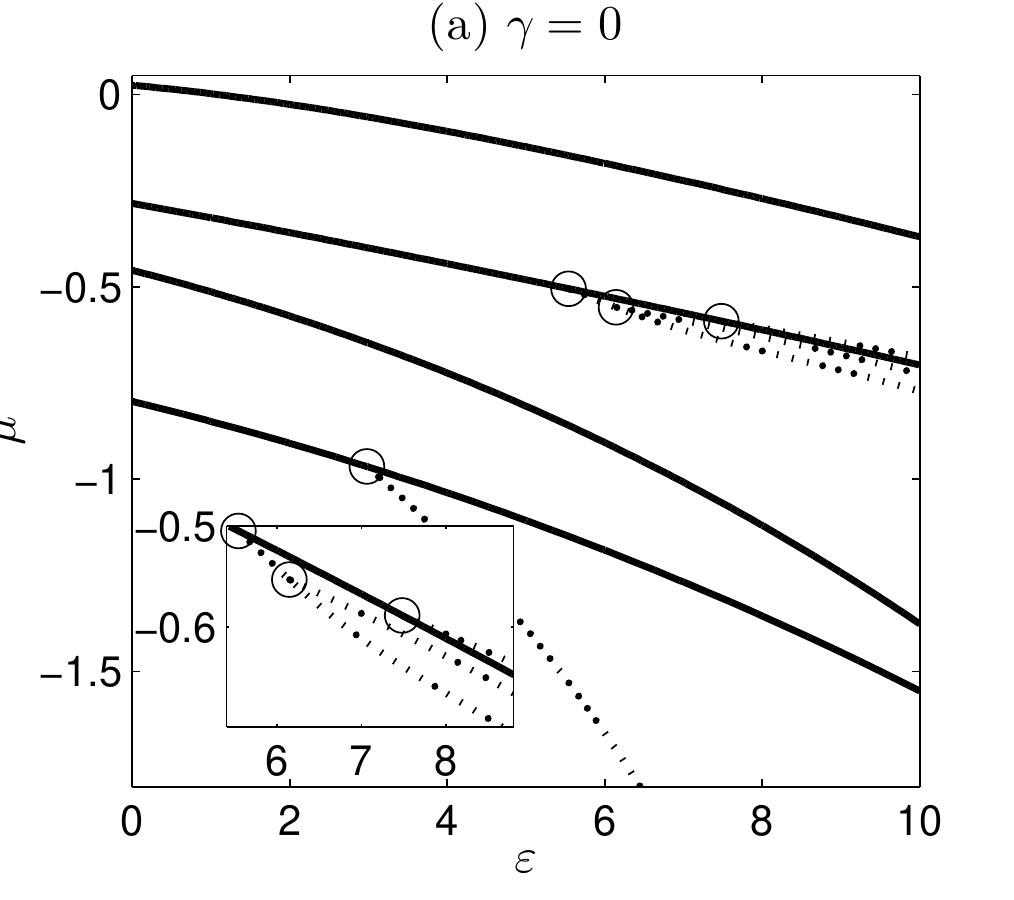}

\includegraphics[width=0.4 \textwidth]{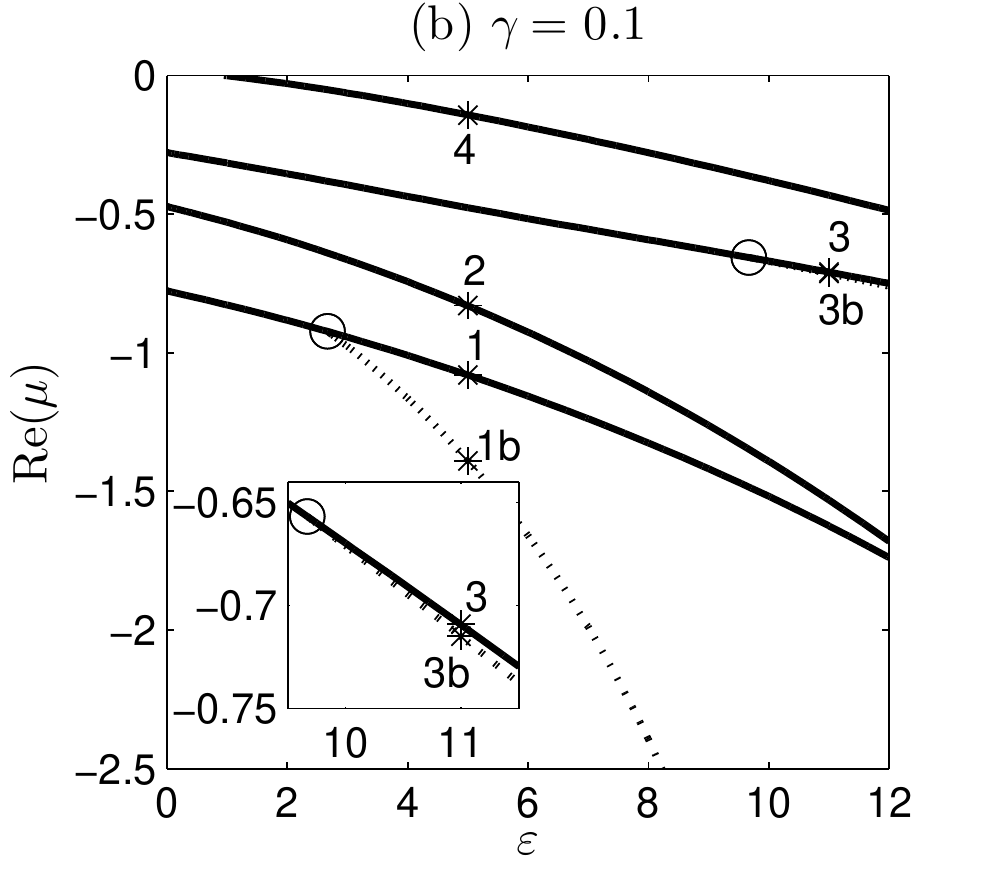}
\includegraphics[width=0.4 \textwidth]{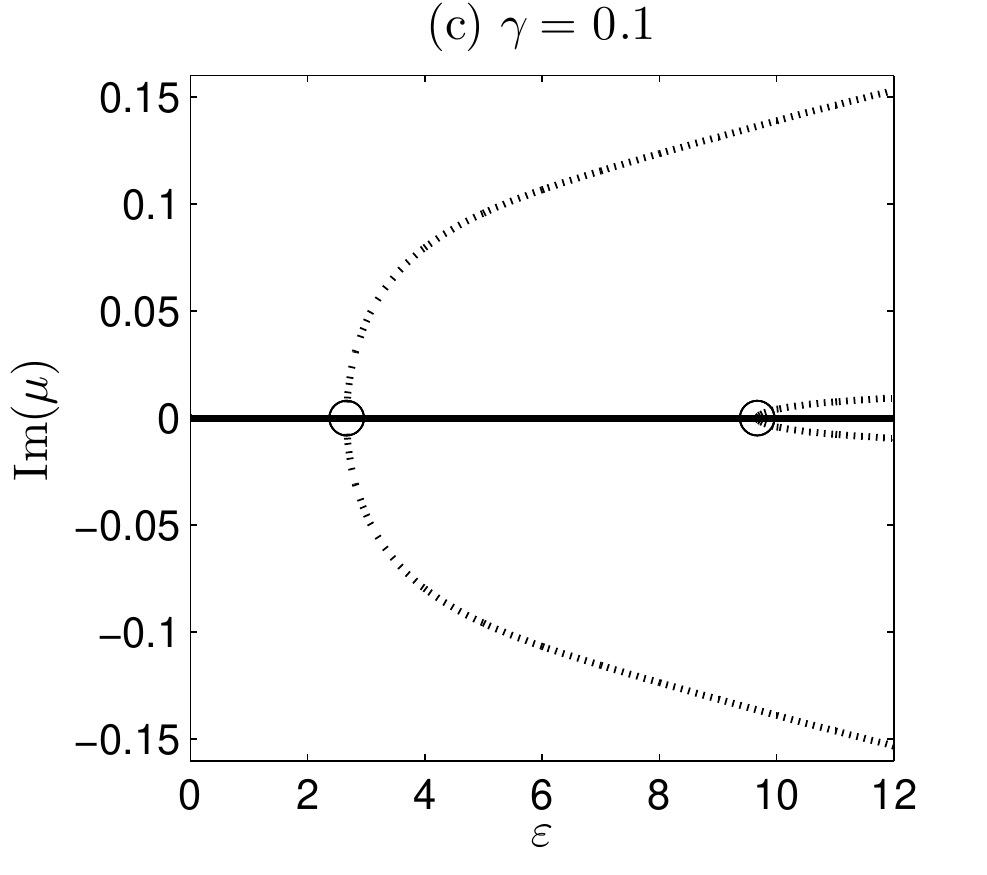}
\caption{Bifurcation diagram in the parameter $\varepsilon$ for the first eigenvalues $\mu_1, \dots, \mu_4$ of \protect\eqref{nl.ev_schr}, \protect\eqref{E:Gauss9_PT} with $\gamma=0,v_0=1$ in (a) and with $\gamma=0.1,v_0=1$ in (b) and (c). Circles label secondary bifurcation points.}
\label{Fig:Gauss9_BD_g}
\end{figure}

In Fig.~\ref{Fig:Gauss9_PT_prof_gam_p1}, we plot the six eigenfunctions labeled in Fig.~\ref{Fig:Gauss9_BD_g} (b) for the case $\gamma=0.1$. Clearly, all the four eigenfunctions on the primary branches (labels 1--4) with real eigenvalues are $\P_1\T$-symmetric. The bifurcating solutions (labels 1b and 3b) are asymmetric. Once again, for complex conjugate pairs we plot only one eigenfunction as the two can be chosen to be related by $\psi_2(x)=(\P_1\T\psi_1)(x)$.
\begin{figure}[ht!]
\includegraphics[scale=0.4]{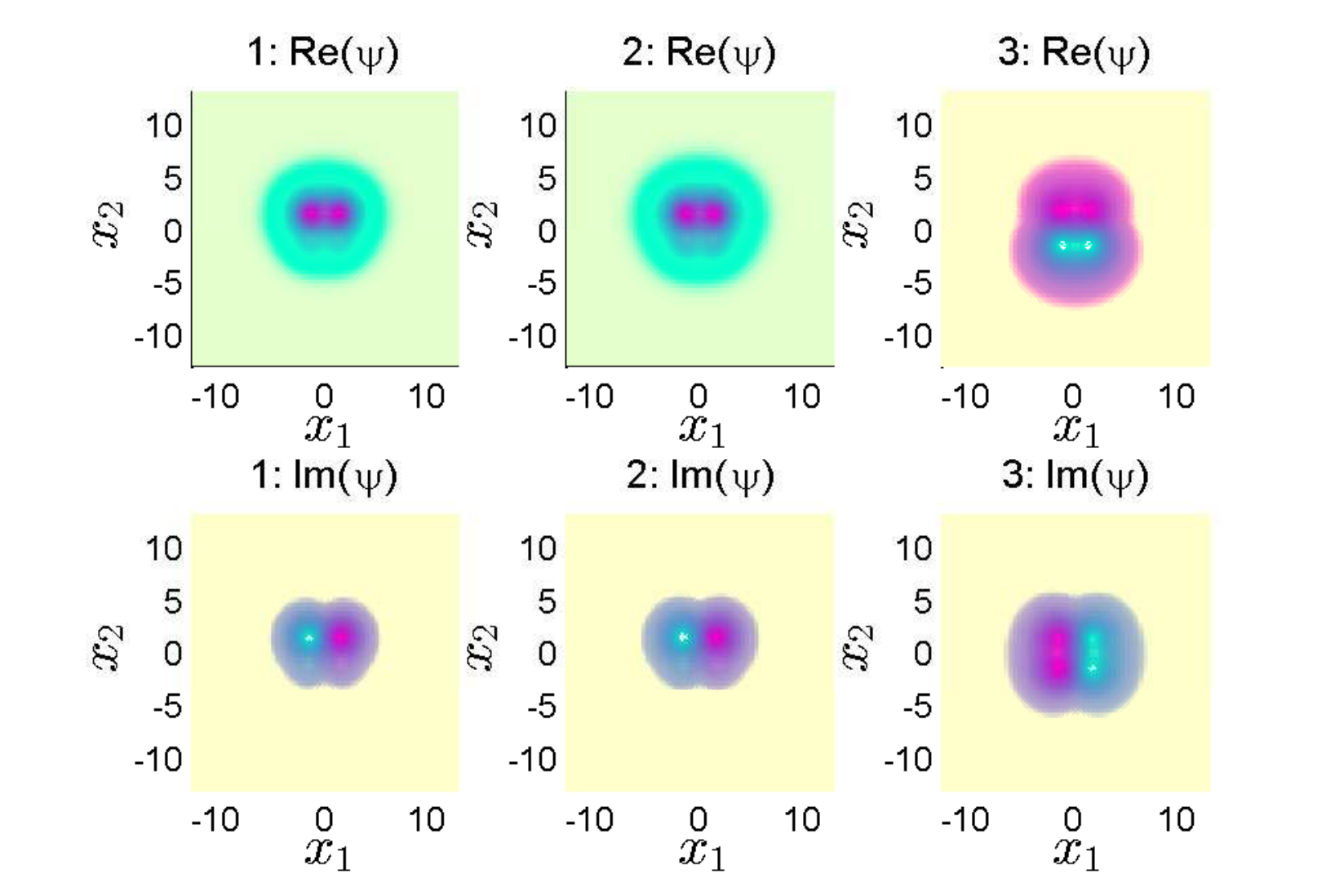}\hspace{-1cm}
\includegraphics[scale=0.4]{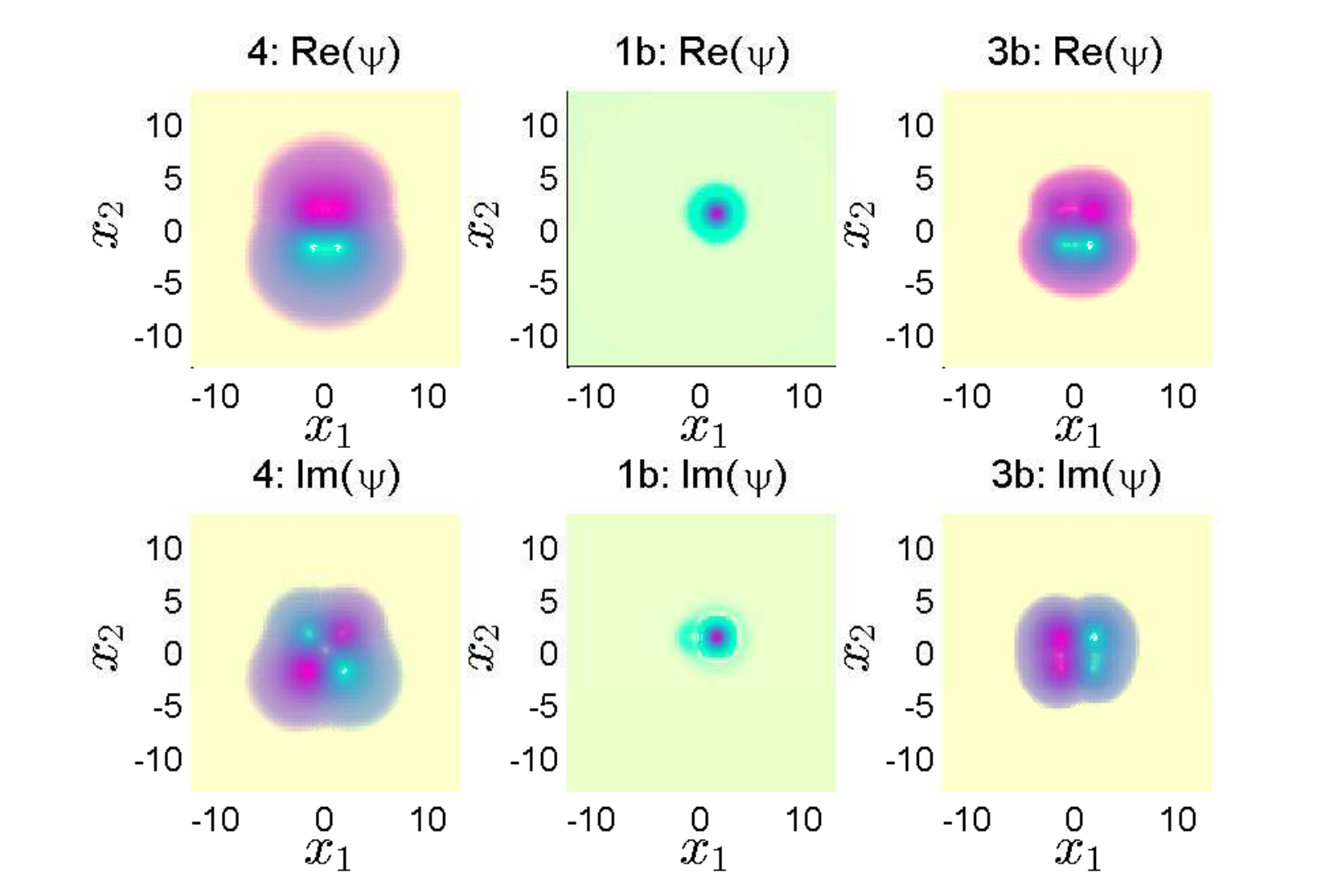}
\caption{Profiles of the nonlinear eigenfunctions of \protect\eqref{nl.ev_schr}, \protect\eqref{E:Gauss9_PT} at $\gamma=0.1,v_0=1$ labeled by 1--4, 1b and 3b in Fig.~\protect\ref{Fig:Gauss9_BD_g} (b). }
\label{Fig:Gauss9_PT_prof_gam_p1}
\end{figure}

The bifurcation diagram in the parameter $\gamma$ is plotted in Fig.~\ref{Fig:Gauss9_BD_gam} for the two values $\varepsilon=0$ and $\varepsilon=2$ and the results are, again, analogous to Example \ref{ex:num_PT}. The first collision of eigenvalues occurs at $\gamma =\gamma_*\approx 0.22$ in agreement with the value $\gamma_*=0.214$ reported in \cite{Yang-2014-39}.
\begin{figure}[ht!]
\includegraphics[width=0.4 \textwidth]{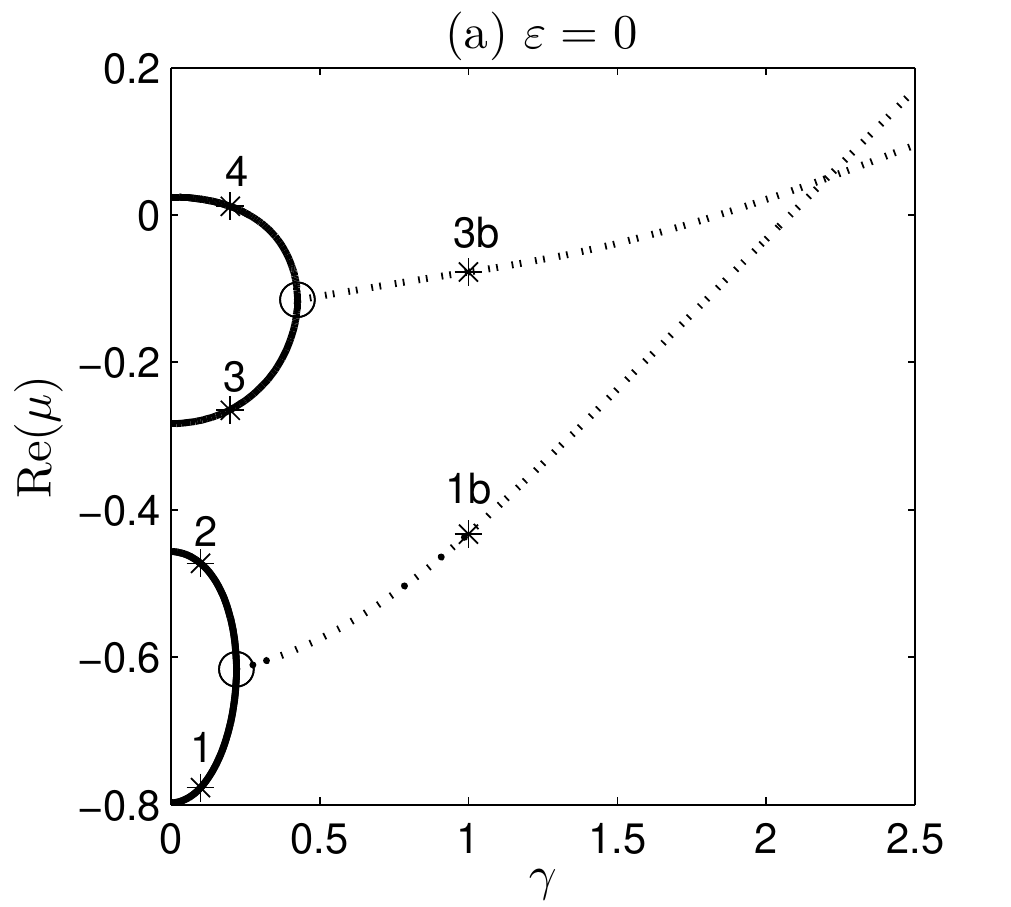}
\includegraphics[width=0.4 \textwidth]{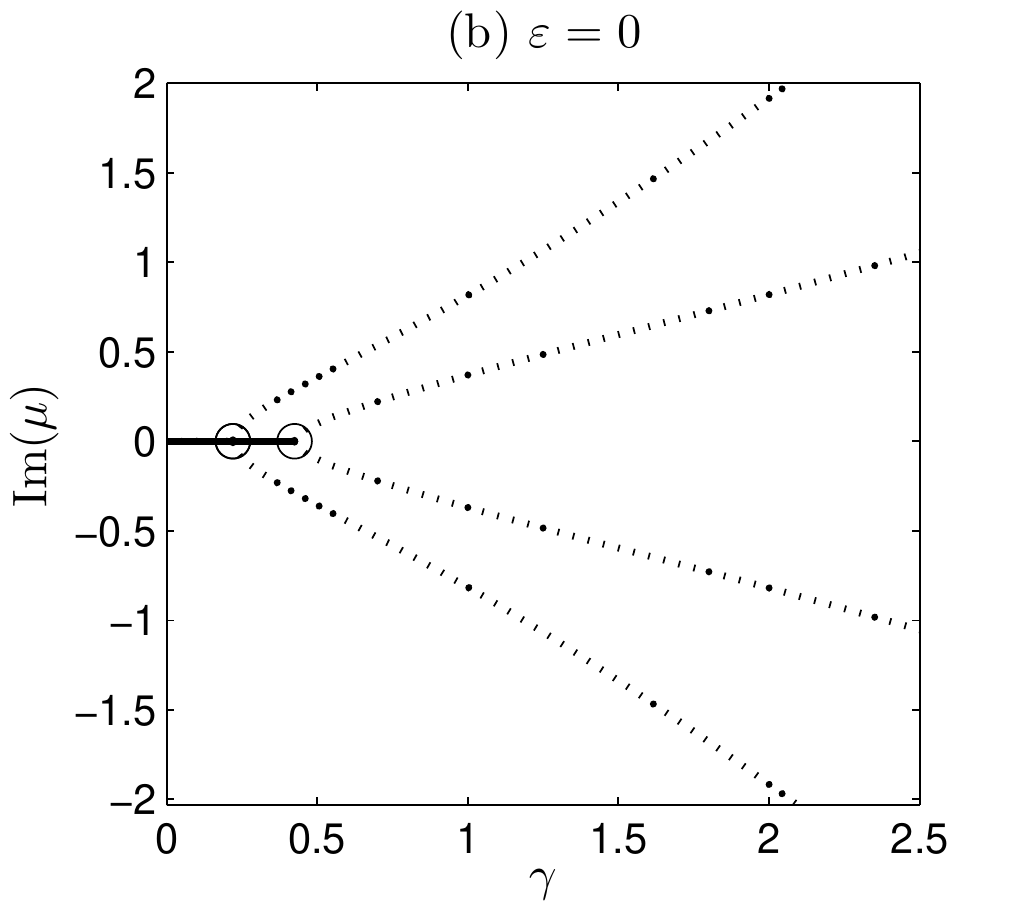}

\includegraphics[width=0.4 \textwidth]{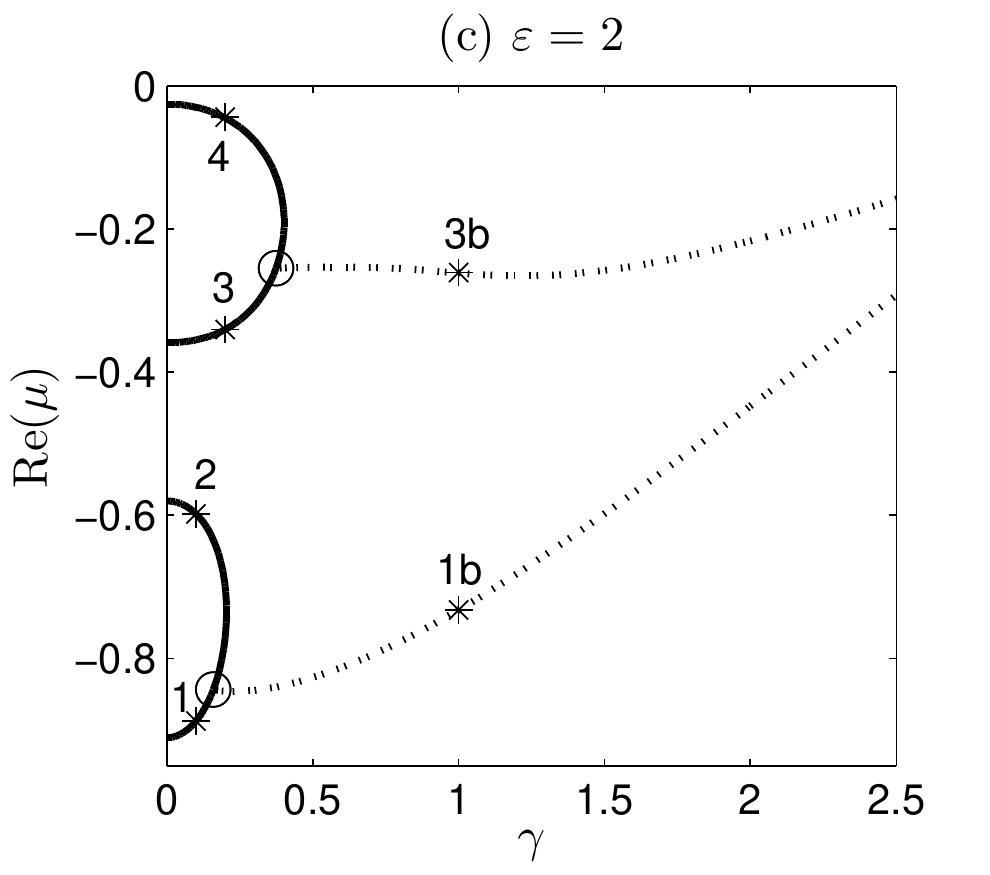}
\includegraphics[width=0.4 \textwidth]{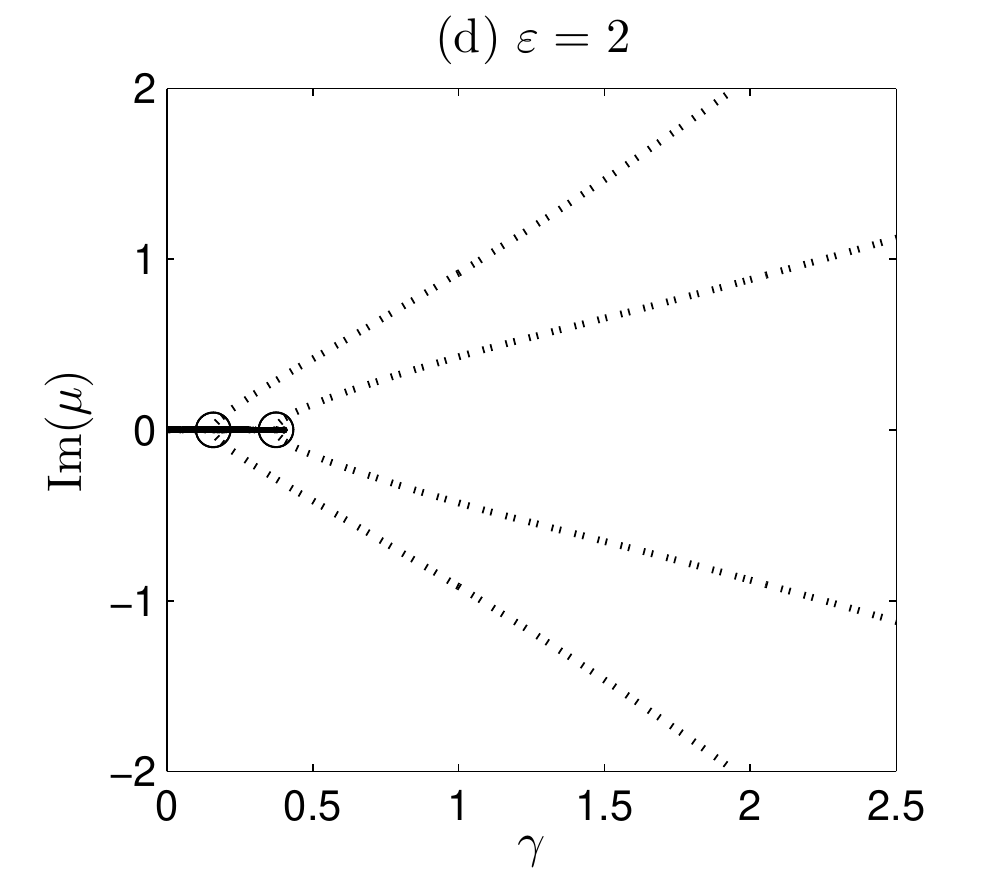}
\caption{Bifurcation diagram in the parameter $\gamma$ for the first eigenvalues $\mu_1, \dots, \mu_4$ of of \protect\eqref{nl.ev_schr}, \protect\eqref{E:Gauss9_PT} with $\varepsilon=0,v_0=1$ in (a) and (b) and with $\varepsilon=2,v_0=1$ in (c) and (d). Circles label secondary bifurcation points.}
\label{Fig:Gauss9_BD_gam}
\end{figure}

Figures~\ref{Fig:Gauss9_PT_profs_g0} and \ref{Fig:Gauss9_PT_profs_g2} show the eigenfunctions at the points labeled in the bifurcation diagram in Fig.~\ref{Fig:Gauss9_BD_gam}. The eigenfunctions at points 1--4 (before eigenvalue collision) are all $\P_1\T$-symmetric while after the collision at points 1b and 3b they are asymmetric. This is in contrast with Example \ref{ex:num_PT}, where linear symmetry was preserved in collisions. Here, no obvious linear symmetry is available.
\begin{figure}[ht!]
\includegraphics[scale=0.42]{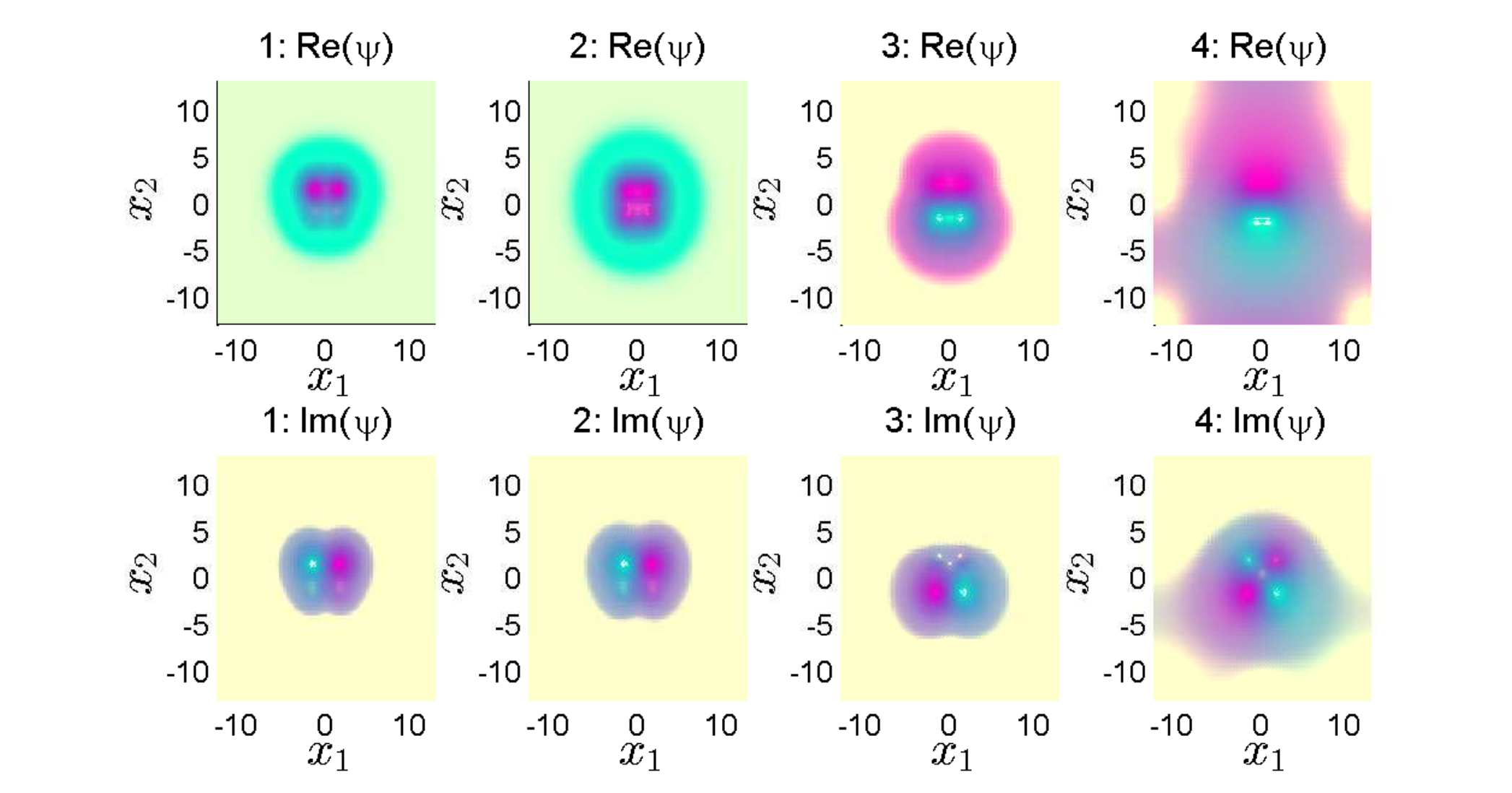}\hspace{-1cm}
\includegraphics[scale=0.42]{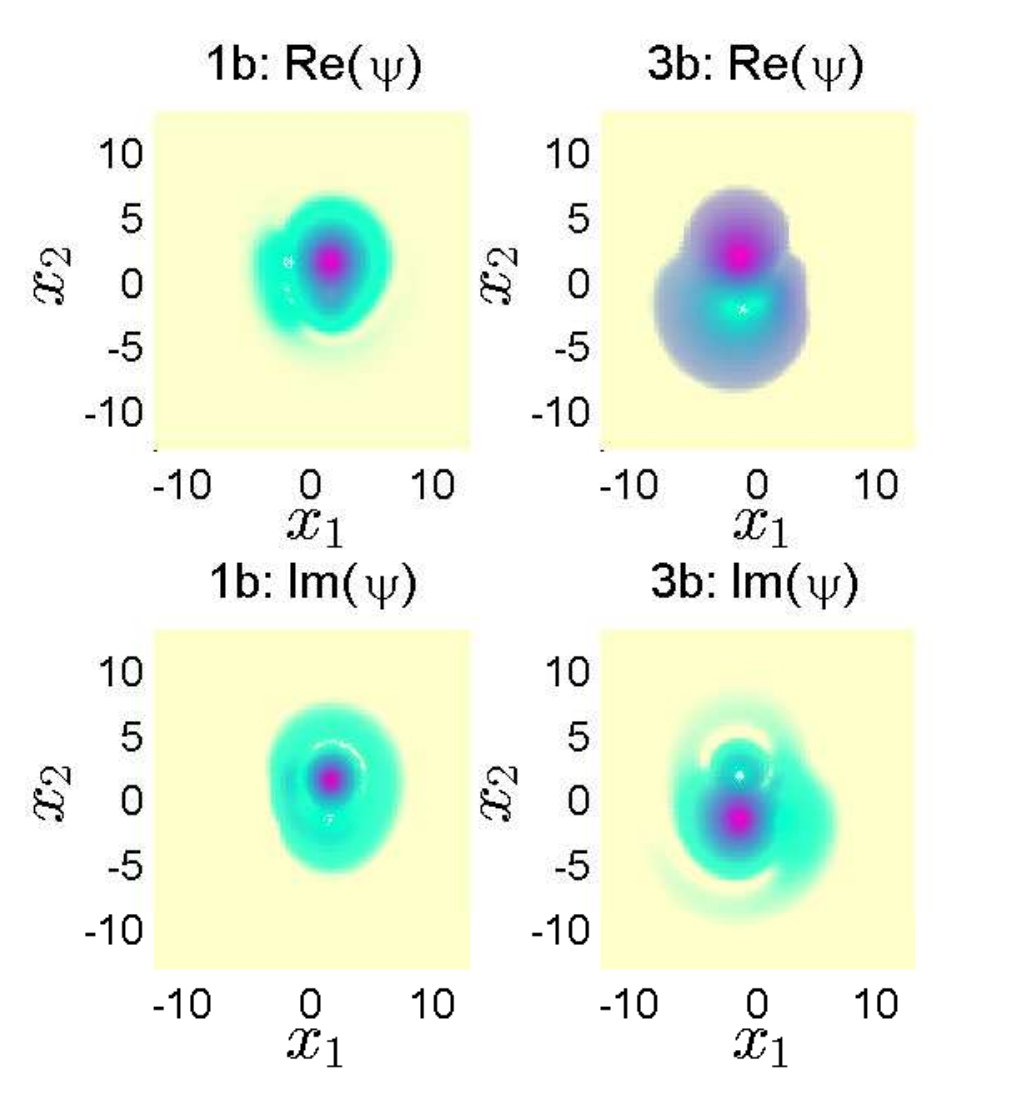}
\caption{Profiles of the nonlinear eigenfunctions of \protect\eqref{nl.ev_schr}, \protect\eqref{E:Gauss9_PT} at $\varepsilon=0,v_0=1$ labeled by 1--4 and 1b, 3b in Fig.~\protect\ref{Fig:Gauss9_BD_gam} (a). }
\label{Fig:Gauss9_PT_profs_g0}
\end{figure}
\begin{figure}[ht!]
\includegraphics[scale=0.42]{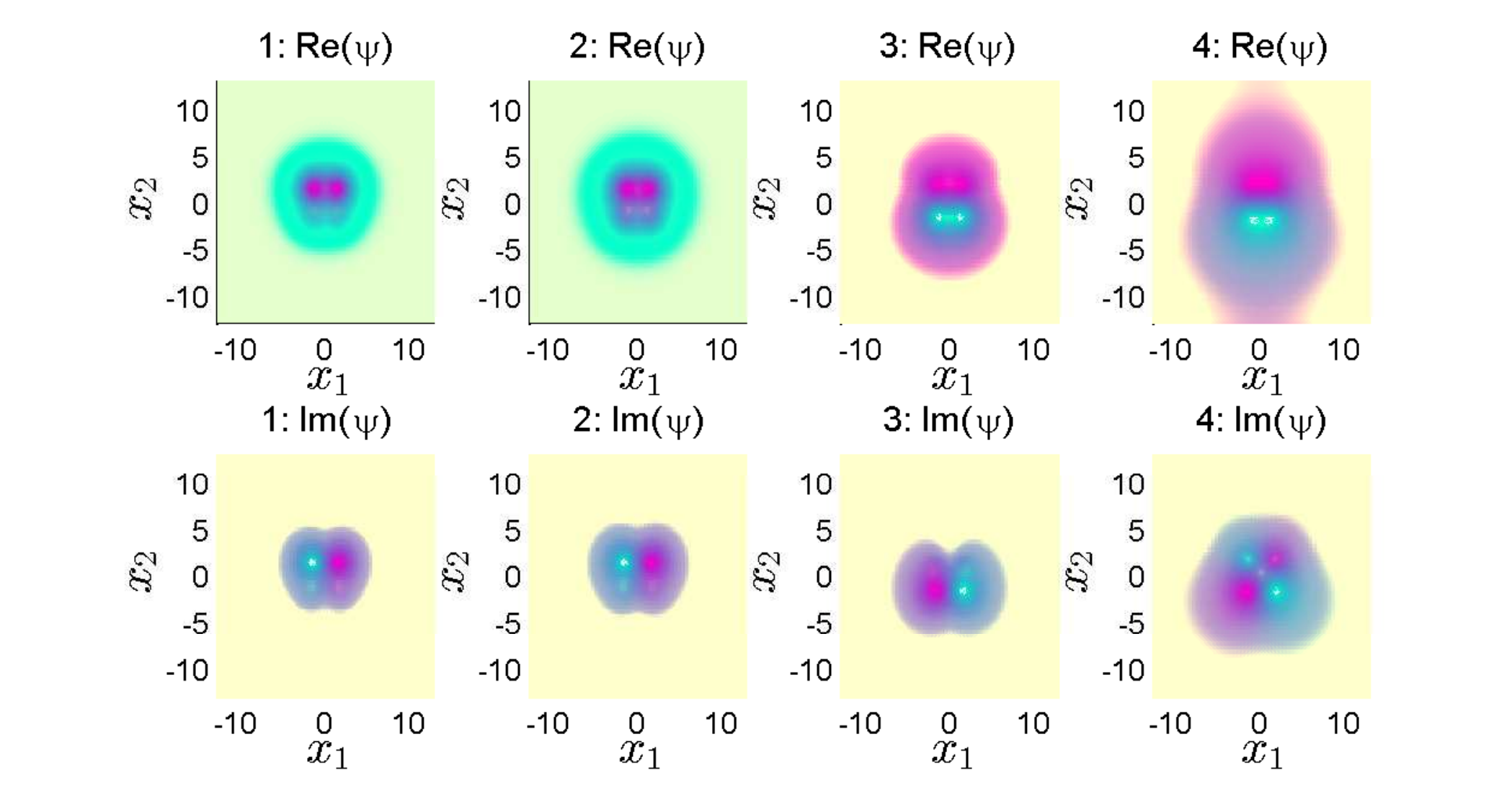}\hspace{-1cm}
\includegraphics[scale=0.42]{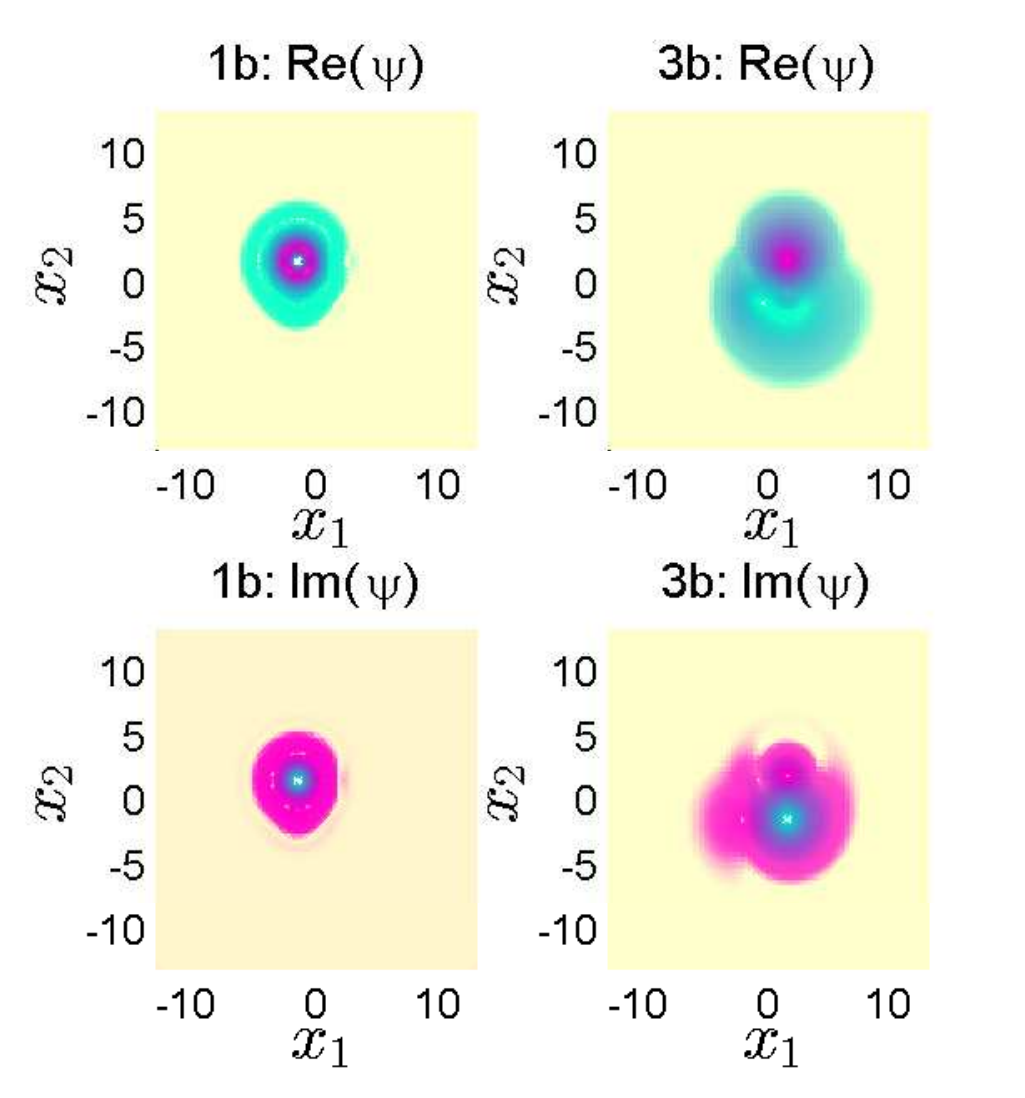}
\caption{Profiles of the nonlinear eigenfunctions of \protect\eqref{nl.ev_schr}, \protect\eqref{E:Gauss9_PT} at $\varepsilon=2,v_0=1$ labeled by 1--4, 1b and 3b in Fig.~\protect\ref{Fig:Gauss9_BD_gam} (c). }
\label{Fig:Gauss9_PT_profs_g2}
\end{figure}
\end{example}


{\footnotesize
\bibliographystyle{acm}
\bibliography{references}
}

\end{document}